\theoremstyle{definition}
\newcommand{\vect}[1]{\mathbf{#1}}
\def\Htran{\mbox{\tiny $\mathrm{H}$}}
\newtheorem{Theorem}{Theorem}
\newtheorem{Observation}{Observation}
\newtheorem{Corollary}{Corollary}
\begin{document}

\title{Finite Beam Depth Analysis for Large Arrays}

\author{Alva~Kosasih,~Emil~Bj{\"o}rnson,~\IEEEmembership{Fellow,~IEEE}  
\thanks{Parts of this paper have been  {accepted} for a publication at the 2023 IEEE Global Communications Conference (Globecom) \cite{2023_Alva_Globcomm}. A. Kosasih and E. Bj{\"o}rnson are with the Division of Communication Systems,  KTH Royal Institute of Technology, Stockholm, Sweden. E-mail: \{kosasih,emilbjo\}@kth.se. This paper was supported by the Grant 2019-05068 from the Swedish Research Council.}}

\maketitle

\begin{abstract}
Most wireless communication systems operate in the far-field region of antennas and antenna arrays, where waves are planar and beams have infinite depth. When antenna arrays become electrically large, it is possible that the receiver is in the radiative near-field of the transmitter, and vice versa. Recent works have shown that near-field beamforming exhibits a finite depth, which enables a new depth-based spatial multiplexing paradigm. In this paper, we explore how the shape and size of an array determine the near-field beam behaviors. In particular, we investigate the $3$ dB beam depth (BD), defined as the range of distances where the gain is greater than half of the peak gain. We derive analytical gain and BD expressions and prove how they depend on the aperture area and length. For non-broadside transmissions, we find that the BD increases as the transmitter approaches the end-fire direction of the array.  Furthermore, it is sufficient to characterize the BD for a broadside transmitter, as the beam pattern with a non-broadside transmitter can be approximated by that of a smaller/projected array with a broadside transmitter. Our analysis demonstrates that the BD can be ordered from smallest to largest as ULA, circular, and square arrays.
\end{abstract}

\begin{IEEEkeywords}
Beam depth, beam width, radiative near-field, Fresnel region, finite-depth beamforming, rectangular arrays, circular arrays.
\end{IEEEkeywords}

\section{Introduction}

Thanks to the successful implementation of massive multiple-input multiple-output (M-MIMO) in $5$G systems in sub-$6$ GHz and mm-wave bands \cite{2013_Rappaport_Access}, it is likely that even larger arrays (e.g., extremely large aperture arrays \cite{2018_Amiri_Globcomm,2020_Wang_TWC} and holographic MIMO \cite{2019_Pizzo_Arxiv}) and wider spectrum at higher frequencies \cite{2019_Rappaport_Access,2020_Sanguinetti_JSAC} will be employed in the next generation of wireless systems \cite{2019_Björnson_DSP,2020_JZhang_JSAC}. A higher carrier frequency implies a smaller wavelength and, therefore, a smaller antenna size. As the array's aperture grows and the wavelength shrinks, the traditional far-field distance boundary, known as the Fraunhofer distance, will be very large since it is proportional to the squared aperture length and inversely proportional to the wavelength. 
Hence, future wireless systems are unlikely to operate in the far field and the signal processing must be redesigned to not rely on the far-field assumption. 
For instance, far-field configured arrays feature significant array gain degradation when used in the near-field \cite{2021_Tang_TWC}.

Three regions have been defined to classify the wavefront behavior with respect to the distance between the transmitter and receiver  \cite{2022_Ramezani_BookChapter}: the reactive near-field, the radiative near-field, and the far-field \cite{1962_Sherman_TAP}. The latter two are of interest for long-range communications, as in mobile networks. 
At the beginning of the radiative near-field, there are discernible and unavoidable amplitude variations 
over the wavefront, but the main property is the spherical phase variations because these can be managed through refined signal processing. In the far-field, both the amplitude and phase variations are insignificant. 
The studies \cite{2022_Lu_TWC,2020_Torres_Asilomar,2023_Bacci_WCL} demonstrate the detrimental impact of using the far-field approximation in scenarios where near-field model is needed. They have also provided the performance analysis employing metrics such as SNR and spectral efficiency, within the framework of a near-field model. It is known that the near-field phenomena can lead to higher performance when exploited for interference suppression, but the specific qualitative reasons for this have not been analyzed in detail, except for being attributed to having spherical wavefronts. 
A deeper understanding of the beam-focusing ability is important for strategic beam arrangement to limit interference, which is the emphasis of our study, codebook design, selection of array geometry, etc.

\subsection{Related Works}

Previously negligible propagation phenomena become dominant in the near-field.
In conventional far-field beamforming, the signal in line-of-sight scenarios is focused on a point that is infinitely far away, resulting in a directive beam with a limited angular width but continuing to infinity. In the near-field, the array acts like a  lens, concentrating the signal onto a particular location rather than directing it to a specific angle, which is the behavior in the far-field.
This is known as beamforming in the distance domain or finite-depth beamforming  \cite{2019_Björnson_DSP, 2021_Yang_TWC,2022_Jiang_TWC,2023_Wu_Arxiv,2022_Myers_TWC,2023_Deshpande_WCL,2021_Björnson_Asilomar}, which is realized by utilizing a matched filter that takes into consideration the phase shift of each individual element present in the array.  In essence, near-field beamforming enables one to configure the beam not only in the angular domain (beam width) but also in the distance domain (beam depth). 
Finite-depth beamforming can potentially pave the way for a novel multiplexing approach, wherein multiple users situated in identical angular directions relative to the ELAA, but at varying distances, can be concurrently accommodated.    
Multiplexing in the joint distance and angle domains introduces a revolutionary approach to efficiently serve crowds of users, a challenge that traditional far-field beamforming struggles to effectively address.

The concept of finite-depth near-field beamforming was first introduced in \cite{2019_Björnson_DSP}. The fundamental properties can be studied using continuous matched filtering, representing the case when the array consists of many small elements so that summations over the antennas can be expressed using integrals \cite{Hu2018a,Dardari2020a,2020_Björnson_JCommSoc}. Since then, researchers have conducted numerous studies utilizing the limited-depth capability of near-field beamforming. In \cite{2021_Yang_TWC}, the focusing property was utilized to derive the closed-form array response, which is then used to develop an effective method for location and channel parameters estimation. Similarly, \cite{2022_Jiang_TWC} proposed the focal scanning method for sensing the location of the receiver, based on the analytic expression of the reflection coefficient of a reflective intelligent surface. 
\cite{2023_Wu_Arxiv} proposed location division multiple access (LDMA) to enhance spectrum efficiency compared to classical spatial division multiple access (SDMA). The LDMA exploited extra spatial resources in the distance domain to serve users at different locations in the near field. 
\cite{2022_Myers_TWC,2023_Deshpande_WCL} further analyzed the near-field beam for massive wideband phased arrays. The focus of their work was to develop a low-complexity technique for constructing beams, which is particularly suitable for massive wideband phased arrays. 

Despite the growing interest in near-field beamforming, there have been no investigations reported on the near-field beam radiation pattern in the distance domain, referred to as the beam depth (BD), except for the recent study  \cite{2021_Björnson_Asilomar}.  That paper characterizes the BD in the broadside direction of a square array, including the distance interval where the array gain is within $3$ dB from the peak value. In contrast to beam width analysis which serves as a guideline for multiplexing users in the angular domain, beam depth analysis allows us to characterize beam patterns in the distance domain, thereby expanding the potential for multiplexing not only in the angular domain but also in the distance domain. The finite-depth beam behavior in the near-field is a unique new feature that enhances spatial resolution. It was previously unknown how the beam depth is affected by the array geometry and transmission direction. By bridging this gap, the antenna array geometry can be optimized to make the most out of the finite BD property of near-field beamforming.

\subsection{Contributions}

In this paper, our objective is to extend the understanding of near-field beamforming by considering a rectangular array model with a tunable width-to-height proportion. In particular, we investigate the array gain and BD. 
We analytically derive the array gain and BD using the Fresnel approximation, which is demonstrated to be an accurate and reliable method in our analysis. Using these derived expressions, we explore how various geometrical factors, such as the width-to-height proportion, area, and aperture length, affect the array gain and BD of rectangular arrays.  We further consider rectangular arrays with a non-broadside transmitter and analyze how the array gain and BD are affected by the  azimuth angle. Moreover, we extend the analysis by considering near-field beamforming with a circular array, and showcase the potential advantages and limitations compared with rectangular arrays. Finally, to better understand the beam behaviors in the distance domain, we analyze the nulls and sidelobes appearing as a function of the propagation distance. 
There are four main contributions of this paper:
\begin{itemize}
    \item We analyze the BD for broadside beamforming with respect to the rectangular array shape and size.  {We derive the normalized array gain, compute the BD analytically, and obtain the upper finite BD limit.} The analysis allows us to characterize the finite-depth beam behavior in the near-field and how the array geometry can be tuned to achieve preferred properties. This analysis can be used to define beam orthogonality in the spatial domain.
    \item  We extend the analysis to consider non-broadside beamforming, in which case direct Fresnel approximations are highly inaccurate. We refine the approximation to obtain high accuracy and then derive the analytical array gain with respect to the azimuth angle. Furthermore, we analyze the BD with respect to the azimuth angle.
    \item We analyze the BD for a circular array. The analysis is performed by deriving the normalized array gain, computing the BD analytically, and obtaining the upper finite BD limit. Furthermore, we characterize the beam behaviors outside the mainlobe in the distance domain by analyzing the nulls and sidelobes as a function of the propagation distance. The results provide insights into how such sidelobes may interfere with the other users' signals.
    \item  {We show that the exact gain of a non-broadside transmitter array can be approximated by the gain of a smaller array with a broadside transmitter by projecting it onto a plane perpendicular to the axis of the non-broadside transmitter.}
\end{itemize}
The paper is organized as follows. In Section~\ref{Sect_Prelim}, we provide preliminaries and definitions for an individual antenna and antenna array. In Section~\ref{Sect_Beamdepth_Rect_Broadside}, we derive the gain and BD of rectangular arrays with a broadside transmitter and analyze its properties. In Section~\ref{Sect_BD_Rect_NonBroadside}, we analyze the gain and BD of rectangular arrays with a non-broadside transmitter.  In Section~\ref{Sect_Beamdepth_Circ}, we derive the gain and BD of a circular array.  This paper is concluded in Section~\ref{Sect_conclusion}.

 \vspace*{-0.05cm}

\section{Preliminaries}
\label{Sect_Prelim}

In this section, we discuss the behavior of electromagnetic radiation from a passive antenna and an antenna array. 
Three regions classically define the electromagnetic radiation patterns with respect to the propagation distance: the reactive near-field, radiative near-field, and far-field. 

\vspace*{-0.2cm}
\subsection{Gain of a Passive Antenna}

The reactive near-field of a large antenna with an aperture length of $D$ spans from $z=0$ to $z\approx 0.62 \sqrt{D^3/\lambda}$ \cite{2021_Björnson_Asilomar}, where $z$ is the distance from the antenna and $\lambda$ is the wavelength. The radiative near-field, also known as the Fresnel region, starts roughly where the reactive near-field ends, i.e.,  from  $z=1.2D$  to $z= d_F$ \cite{2021_Björnson_Asilomar}, where $d_F =  \frac{2D^2}{\lambda}$ is the Fraunhofer distance \cite{1962_Sherman_TAP}, obtained by using a Taylor approximation with an assumption that a phase difference of $\frac{\pi}{8}$ over the aperture is negligible when analyzing the gain. 

These definitions are based on classical approximation, which might not be well-tuned for future systems. It is therefore essential to begin our new analysis from electric field expressions.
We consider a free-space propagation scenario, where there exists an isotropic transmitter located at $\left(x_t,y_t,z \right)$ and a planar receiver centered at the origin covering an area denoted as $\mathcal{A}$ in the $xy$-plane. When the transmitter emits a signal with polarization in the $y$-dimension, the electric field at the point $(x,y,0)$ of the receiver aperture becomes \cite[App. A]{2020_Björnson_JCommSoc}
\begin{equation}\label{eq_II_ElectField}
 E (x,y) =  \frac{E_0}{\sqrt{4 \pi}} \frac{\sqrt{z (\left( x-x_t \right)^2+z^2)}} 
 { r ^{5/4}}   e^{-j\frac{2\pi}{\lambda} \sqrt{r}},
\end{equation}
where $r =  ( ( x-x_t)^2 + (y-y_t)^2 + z^2 )$ is the squared Euclidean distance between the transmitter and the considered point, and $E_0$ is the electric intensity of the transmitted signal in the unit of Volt. The electric field expression in \eqref{eq_II_ElectField} is valid in the Fresnel region since it takes into account the reduction in effective area due to the incident angle, polarization mismatch, and free-space geometric path loss. The complex-valued channel response  to the receive antenna can then be calculated as in \cite[Eq. (64)]{2020_Björnson_JCommSoc}, $h = \frac{1}{\sqrt{A}E_0} \int_{\mathcal{A}} E(x,y)dx dy$.
Hence, the received power is computed as $E_0^2 |h|^2/\zeta$, where $\zeta$ is the impedance of free space. The receive antenna gain compared to an isotropic antenna is computed as \cite[Eq. (6)]{1960_Kay_TAP} $G = \frac{|\int_{\mathcal{A}} E(x,y) dx dy |^2 }{\frac{\lambda^2}{4\pi} \int_{\mathcal{A}} |E(x,y)|^2 dx dy  }$, where $\frac{\lambda^2}{4\pi}$ is the area of the isotropic antenna. 
The antenna gain is normally analyzed in the far-field where the incident wave is plane so that $G_{\rm plane}= \frac{4\pi}{\lambda^2} A$, where $A$ is the physical area of the antenna element. A different value is obtained in the Fresnel region, where there can be phase and amplitude variations over the aperture.
In this paper, we consider the normalized antenna gain defined in \cite{2021_Björnson_Asilomar} as
\begin{equation}\label{eq_II_NormalizedAntennaGain}
    G_{\rm antenna} = \frac{G}{G_{\rm plane}} = \frac{\left| \int_{\mathcal{A}} E(x,y) dx dy \right|^2}{A \int_{\mathcal{A}} \left|E(x,y)\right|^2 dx dy },
\end{equation}
which compares the antenna gain in the Fresnel region to the one obtained in the far field.

\vspace*{-0.2cm}

\subsection{Gain of an Antenna Array}

In this paper, we consider a uniform planar array (UPA) with $N$ antenna elements, where $N$ can be a large number. The antenna elements are uniformly distributed across the array and we consider the same number of antenna elements in the $x$ and $y$ dimensions. However, the elements can be rectangular, which will result in a rectangular aperture shape in the $xy$-plane. Each antenna element is indexed by $n \in \{1, \dots, \sqrt{N} \}$ and  $m \in \{1, \dots, \sqrt{N} \}$ which are the location indexes on the $x$ and $y$ axes, respectively. The Fraunhofer array distance depends on the antenna/array sizes and becomes $d_{FA} = Nd_F$.

We consider the same isotropic transmitter as before but assume that the receiver is equipped with the UPA. The complex-valued channel response to a receive antenna element can then be written 
\begin{equation}\label{eq_II_ArrayChResponse}
h_{n,m} = \frac{1}{\sqrt{A}E_0} \int_{\mathcal{A}} E(x,y)dx dy,
\end{equation}
where $A$ denotes the physical area of each antenna. Similarly to \eqref{eq_II_NormalizedAntennaGain}, we  define the normalized array gain 
\begin{equation}\label{eq_II_NormalizedArrayGain}
    G_{\rm array} = \frac{ \sum_{n=1}^{\sqrt{N}} \sum_{m=1}^{\sqrt{N}} \left| \int_{\mathcal{A}} E(x,y) dx dy \right|^2}{N A \int_{\mathcal{A}} \left|E(x,y)\right|^2 dx dy },
\end{equation}
which is the combined antenna and array gains achieved in the Fresnel region compared to what is achievable in the far-field. By combining the antenna elements in the described way, we obtain a continuous surface and can study the radiated field and the resulting array gain for any continuous charge distribution over it. Practical arrays typically consist of discrete elements and it is not straightforward to use them to generate a desired charge distribution since mutual coupling makes the actual distribution different from the antenna input.
One solution approach is to make a meticulous metal patch design, allowing control over partially diffracted waves from the array-antenna decoupling surface to counteract undesired coupled waves, thereby maintaining an acceptable antenna pattern distortion \cite{2022_Wang_Globecom}. Another approach is to incorporate a decoupling matrix in the beamforming design, which can compensate for the mutual coupling \cite{2018_Chen_Access}. Our paper will not dive into these implementation aspects but focus on what is fundamentally achievable.

\section{Gain and Beam Depth of Rectangular Arrays with a Broadside Transmitter}\label{Sect_Beamdepth_Rect_Broadside}

\begin{figure}
\centering
{\includegraphics[width=0.8\textwidth]{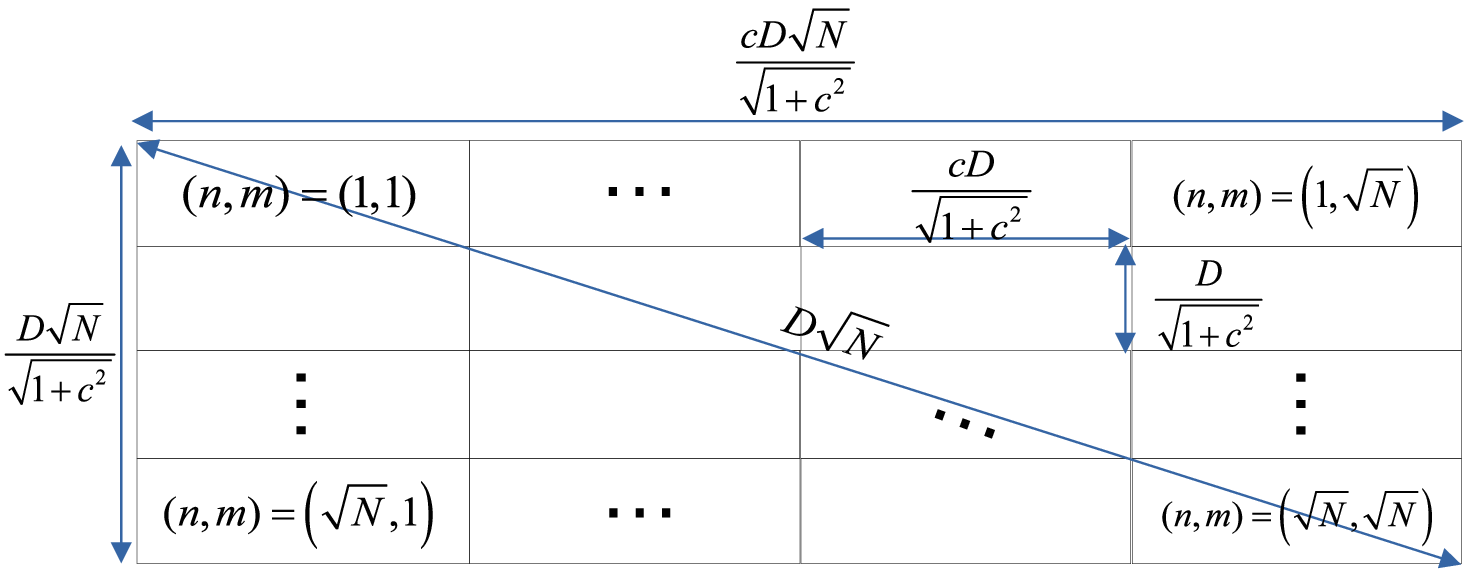}}
\caption{We consider a generalized rectangular array design, where the width of each antenna element is $ {\eta}$ times its height. The number of elements is constant irrespective of the sizes or shapes of the elements. }
\label{Fig_rect_design}
\end{figure}

In this section, we consider a UPA with a rectangular shape. As illustrated in Fig.~\ref{Fig_rect_design}, the array is composed of antenna elements, each with a height of $\ell = D/\sqrt{(1+ {\eta}^2)}$ and width of $w= {\eta}  \ell$ for some $ {\eta} \in 	\mathbb{R}_0^+$, where $D$ denotes the diagonal. 
We refer to this as a \emph{generalized rectangular array} since different shapes are obtained depending on the value of $ {\eta}$. If we fix the array's aperture area $A_{\rm array} = NA$, the diagonal of the array (i.e., the aperture length) becomes $  \sqrt{A_{\rm array}(1+ {\eta}^2)/ {\eta}} $ which varies with $ {\eta}$. If we instead fix  the array's aperture length $D_{\rm array} = 
 \sqrt{N}D$, the array's aperture area $   {\eta} (D_{\rm array} )^2/(1+ {\eta}^2)$ will depend on $ {\eta}$. We consider both scenarios in this paper to study how the shape of the beamforming depends on the array geometry. We take the viewpoint of a receiver array with an isotropic transmitter, for ease in presentation, but the same results apply in the reciprocal setup with a transmitting array.

The antenna element with index $(n,m)$ is located at the point $(x_n,y_m,0)$, where 
\begin{equation}\label{eq_III_AntennaElementCoordinate}
    x_n = \left( n - \frac{\sqrt{N}+1}{2} \right)   w ,\quad y_m = \left( m - \frac{\sqrt{N}+1}{2}\right)  \ell,   
\end{equation}
and covers an area of $\mathcal{A} = \left\{ (x,y,0): |x-x_n| \leq \frac{w}{2},  |y-y_m| \leq \frac{\ell}{2} \right\}$.

Let us now consider the normalized array gain in \eqref{eq_II_NormalizedArrayGain}. We can either numerically evaluate it or use a Fresnel approximation that allows us to obtain an analytical expression. In the latter case, we approximate $  E(x,y) $ in \eqref{eq_II_ElectField} as
\begin{equation}\label{eq_III_FresnelApprox}
    E(x,y) \approx \frac{E_0}{\sqrt{4 \pi} z} e^{-j \frac{2 \pi}{\lambda} \left( z+\frac{x^2}{2z} + \frac{y^2}{2z} \right)}.
\end{equation}
 This approximation is tight in the part of the Fresnel region where the amplitude variations are negligible over the aperture. The term $z+\frac{x^2}{2z} + \frac{y^2}{2z} $ determines the phase variations in \eqref{eq_III_FresnelApprox} and is obtained from a first-order Taylor approximation of the Euclidean distance between transmitter and array:
 $\sqrt{x^2+y^2+z^2} = z\sqrt{1+ \left(\frac{x^2+y^2}{z^2}\right)} \approx  z + \frac{x^2+y^2}{2z}$. 
 The approximation error is small when  $\frac{x^2+y^2}{z^2} \leq 0.1745$, which results in relative errors below $3.5 \cdot 10^{-3}$ \cite{2021_Björnson_Asilomar}. 
To give a concrete example, we consider a square array with $ {\eta}=1$, $N=10^4$ antennas, and $D=0.025$ meter. The relative error is  smaller than $3.5 \cdot 10^{-3}$ if $z > 1.2 D \sqrt{N} $.  

Suppose the matched filtering in the array is focused on the point $(0, 0, F)$, which might be different from the location $(0,0,z)$  of the transmitter. We will now use the Fresnel approximation in \eqref{eq_III_FresnelApprox} to compute the normalized array gain by injecting the phase-shift $e^{+j \frac{2\pi}{\lambda} \left(\frac{x^2}{2F} + \frac{y^2}{2F}\right)}$ into the integrals in \eqref{eq_II_NormalizedArrayGain} to represent a continuous matched filter \cite{Hu2018a,Dardari2020a,2020_Björnson_JCommSoc}. 
 
\begin{Theorem}\label{Fresnel_Approx_Rect}
When the transmitter is located at $(0,0,z)$ and the matched filtering is focused on $(0, 0, F)$, the  Fresnel approximation of the normalized array gain for the generalized rectangular array becomes
\begin{equation}\label{eq_III_ApproxGainRect}
    \hat{G}_{\rm rect}( {\eta}) = \frac{ ( {{C}^{2}}\left(  {\eta}\sqrt{ a } )+{{S}^{2}}(  {\eta}\sqrt{ a } \right) )  ( {{C}^{2}}( \sqrt{ a } )+{{S}^{2}}( \sqrt{ a } ) ) }{( {\eta}  a  )^2},
\end{equation}
 where $C(\cdot )$ and $S(\cdot )$ are the Fresnel integrals \cite{1956_Polk_TAP}, $ a = \frac{d_{FA}}{4{z}_{\rm eff}(1+{{ {\eta}}^{2}})}$, and $z_{\rm eff} = \frac{Fz}{|F-z|} $.
\end{Theorem}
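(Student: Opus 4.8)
The plan is to evaluate \eqref{eq_II_NormalizedArrayGain} in the continuous-aperture idealization: in the continuous matched-filtering limit the discrete elements merge into one rectangular aperture of total area $A_{\rm array}=NA$ spanning $|x|\le D_x/2$, $|y|\le D_y/2$ with $D_x=\sqrt{N}w$ and $D_y=\sqrt{N}\ell$, and the matched filter becomes the continuous phase $e^{+j\frac{2\pi}{\lambda}(\frac{x^2}{2F}+\frac{y^2}{2F})}$. Injecting the Fresnel field \eqref{eq_III_FresnelApprox} together with this phase, the constant distance phase $e^{-j\frac{2\pi}{\lambda}z}$ and the constant amplitude $\frac{E_0}{\sqrt{4\pi}z}$ factor out of the numerator integral, leaving only the residual quadratic phases from the transmitter (the $\frac{1}{z}$ term) and the filter (the $\frac{1}{F}$ term). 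The first key step is to combine these into a single curvature: since $\frac1z-\frac1F=\pm\frac{1}{z_{\rm eff}}$ with $z_{\rm eff}=\frac{Fz}{|F-z|}$, the integrand reduces to $e^{-j\frac{\pi}{\lambda z_{\rm eff}}(x^2+y^2)}$, and because only the squared modulus enters, the sign is immaterial.

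Next I would exploit that this integrand separates as a product of an $x$-integral and a $y$-integral over the rectangle, each of which is a Fresnel integral. Substituting $x=t\sqrt{\lambda z_{\rm eff}/2}$ (and similarly for $y$) turns $\int_{-D_x/2}^{D_x/2}e^{-j\pi x^2/(\lambda z_{\rm eff})}\,dx$ into $\sqrt{2\lambda z_{\rm eff}}\,\bigl(C(u_x)-jS(u_x)\bigr)$ with upper limit $u_x=D_x/\sqrt{2\lambda z_{\rm eff}}$, using the evenness of $C$ and $S$. Taking the squared modulus gives $2\lambda z_{\rm eff}\bigl(C^2(u_x)+S^2(u_x)\bigr)$, and likewise $2\lambda z_{\rm eff}\bigl(C^2(u_y)+S^2(u_y)\bigr)$ for the $y$-direction. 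The denominator is immediate, since in the Fresnel approximation $|E|^2=\frac{E_0^2}{4\pi z^2}$ is constant over the aperture, so $A_{\rm array}\int_{\mathcal{A}}|E|^2\,dx\,dy = \frac{E_0^2}{4\pi z^2}A_{\rm array}^2$.

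The final, and most delicate, step is the bookkeeping that collapses the upper limits and the amplitude prefactor to the compact form in the statement. Using $D_y=\sqrt{N}\ell$ with $\ell=D/\sqrt{1+\eta^2}$, I would verify that $u_y^2 = \frac{ND^2}{2\lambda z_{\rm eff}(1+\eta^2)}$, which equals $a$ after substituting $d_{FA}=Nd_F=2ND^2/\lambda$; hence $u_y=\sqrt{a}$, and since $D_x=\eta D_y$ we get $u_x=\eta\sqrt{a}$, matching the arguments of the Fresnel integrals in \eqref{eq_III_ApproxGainRect}. Canceling the common $\frac{E_0^2}{4\pi z^2}$ factors then leaves the prefactor $\frac{4\lambda^2 z_{\rm eff}^2}{A_{\rm array}^2}$, and substituting $A_{\rm array}=N\eta D^2/(1+\eta^2)$ shows this equals $1/(\eta a)^2$, completing the derivation. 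I expect the main obstacle to be less the integration itself than this dimensional bookkeeping: carefully tracking the geometry relations $w=\eta\ell$ and $\ell=D/\sqrt{1+\eta^2}$, the definition $d_{FA}=Nd_F$, and the factors of $2$ introduced when normalizing to the standard Fresnel integrals, so that both the arguments and the prefactor align exactly with the claimed expression.
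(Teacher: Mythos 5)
Your proposal is correct and follows essentially the same route as the paper's proof: substitute the Fresnel field and the matched-filter phase into the gain expression, merge the two quadratic curvatures into one via $z_{\rm eff} = Fz/|F-z|$ (with the sign absorbed by the modulus), factor the rectangular aperture integral into a product of one-dimensional Fresnel integrals, and verify that the limits and prefactor collapse to $\eta\sqrt{a}$, $\sqrt{a}$, and $1/(\eta a)^2$, exactly as in the paper's Appendix A. The only cosmetic slip is attributing $\int_{-T}^{T} = 2\int_{0}^{T}$ to the ``evenness of $C$ and $S$'' --- the Fresnel integrals themselves are odd; it is the integrands $\cos(\pi t^2/2)$ and $\sin(\pi t^2/2)$ that are even --- but this does not affect the derivation.
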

\begin{proof}
The proof is given in Appendix~\ref{App_Fresnel_Approx}
and is inspired by the derivation in \cite[Eq. (22)]{1956_Polk_TAP}.
\end{proof}

In Fig.~\ref{Fig_Fresnel_vs_Exact_Rectangular}, we evaluate the tightness of the analytical expression provided in Theorem~\ref{Fresnel_Approx_Rect}.  We can see that the Fresnel approximation is close to the exact normalized array gain, especially when $z\geq d_B$ where $d_B=2 D \sqrt{N} = 400 d_F $ denotes the Bj{\"o}rnson distance \cite{2021_Björnson_Asilomar} since then the amplitude variations are insignificant.  Since $D$ depends on $\lambda$, it also depends on the carrier frequency $f_c$. We consider $f_c=3$ GHz in this paper. The exact normalized antenna gain is computed numerically using \eqref{eq_II_NormalizedArrayGain} and  $E(x,y)$ as stated in \eqref{eq_II_ElectField} with the injected phase-shift  $e^{+j \frac{2\pi}{\lambda} (\frac{x^2}{2F} + \frac{y^2}{2F})}$ that represents the matched filtering.

\begin{figure}
    \centering
    \includegraphics[width=\textwidth]{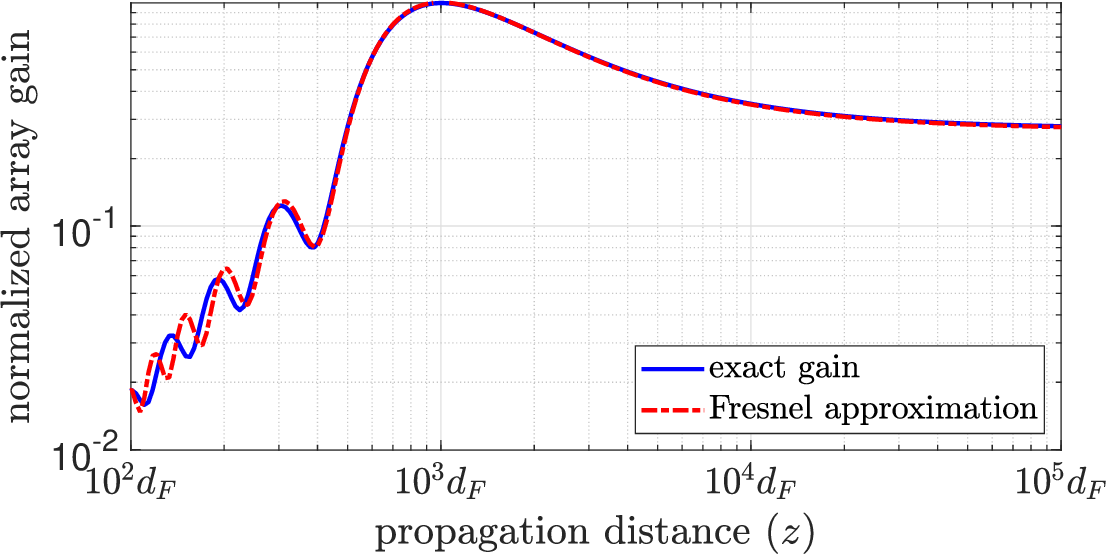}
    \caption{The analytic approximation of the normalized array gain in Theorem~\ref{Fresnel_Approx_Rect} 
    % is compared with the exact result 
    for a rectangular array with a broadside transmit signal and $F = 1000 d_F$.  We set $f_c=3$ GHz,  {diagonal of antenna $\lambda/4$},  $N=10^4$, and $ {\eta}=4$.  }
    \label{Fig_Fresnel_vs_Exact_Rectangular}
\end{figure}

The array gain is the same irrespective of how the rectangular array is rotated in the $xy$-plane, which is proved as follows.

\begin{Corollary}\label{corr_symmetric_function}
The array gain expression in \eqref{eq_III_ApproxGainRect} is a symmetric function in the sense that $\hat{G}_{\rm rect}( {\eta}) = \hat{G}_{\rm rect}( {\eta}') $, where $ {\eta}'= 1/ {\eta}$.
\end{Corollary}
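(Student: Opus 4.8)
The plan is to substitute $\eta' = 1/\eta$ directly into the closed-form expression \eqref{eq_III_ApproxGainRect} and verify term by term that nothing changes. The crucial observation — and the point that is easy to miss — is that the auxiliary quantity $a$ is itself a function of $\eta$ through the factor $1/(1+\eta^2)$. I would therefore write it explicitly as $a(\eta) = \frac{d_{FA}}{4 z_{\rm eff}(1+\eta^2)}$ before performing any replacement, so that the dependence is not hidden.

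First I would compute how $a$ transforms under the map. Replacing $\eta$ by $1/\eta$ gives $a(1/\eta) = \frac{d_{FA}}{4 z_{\rm eff}(1 + 1/\eta^2)} = \eta^2 a(\eta)$, using $1+1/\eta^2 = (1+\eta^2)/\eta^2$. This single identity is what drives the entire symmetry.

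Next I would track the two arguments of the Fresnel integrals. In $\hat{G}_{\rm rect}(\eta)$ they are $\eta\sqrt{a(\eta)}$ and $\sqrt{a(\eta)}$. Under $\eta \mapsto 1/\eta$ the first becomes $(1/\eta)\sqrt{a(1/\eta)} = (1/\eta)\sqrt{\eta^2 a(\eta)} = \sqrt{a(\eta)}$, while the second becomes $\sqrt{a(1/\eta)} = \eta\sqrt{a(\eta)}$. Hence the two arguments are simply interchanged. Writing $f(t) = C^2(t) + S^2(t)$, the numerator of \eqref{eq_III_ApproxGainRect} is the product $f(\eta\sqrt{a(\eta)})\, f(\sqrt{a(\eta)})$, which is invariant under this interchange by commutativity of multiplication.

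Finally I would check the denominator, which equals $(\eta\, a(\eta))^2$. Under $\eta \mapsto 1/\eta$ it becomes $\big((1/\eta)\, a(1/\eta)\big)^2 = \big((1/\eta)\,\eta^2 a(\eta)\big)^2 = (\eta\, a(\eta))^2$, so it too is unchanged. With both numerator and denominator invariant, $\hat{G}_{\rm rect}(\eta) = \hat{G}_{\rm rect}(1/\eta)$, which is the claim. There is no genuine obstacle beyond remembering the implicit $\eta$-dependence inside $a$; once that is made explicit the argument is a one-line substitution. The physical content is that swapping the width and height of the elements — equivalently rotating the rectangular array by $90^\circ$ in the $xy$-plane — leaves the gain unchanged.
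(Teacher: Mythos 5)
Your proof is correct and takes essentially the same route as the paper's: both rest on the single identity $a(1/\eta) = \eta^2 a(\eta)$, which interchanges the two Fresnel-integral arguments (so the numerator is invariant by commutativity) and leaves the denominator $(\eta\, a)^2$ unchanged. The only difference is cosmetic — the paper hides the $\eta$-dependence of $a$ behind an auxiliary constant $k = d_{FA}/(4 z_{\rm eff})$, whereas you make it explicit, which is arguably the clearer presentation.
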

\begin{proof}
Let us define $k \triangleq \frac{d_{\rm FA}}{4 z_{\rm eff}}$ and  $a' =\frac{d_{\rm FA}}{4 z_{\rm eff}(1+(1/ {\eta})^2)}$. One can express $ {\eta}\sqrt{ a } = \sqrt{\frac{kc^2}{1+ {\eta}^2}}$. If we replace $ {\eta}$ by $ {\eta}'$, we obtain $ {\eta}'\sqrt{ a' } = \sqrt{\frac{k(1/ {\eta})^2}{1+(1/ {\eta})^2}} = \sqrt{\frac{k}{1+ {\eta}^2}} =  a $ and $\sqrt{ a' } =\sqrt{\frac{k}{1+(1/ {\eta})^2}} = \sqrt{\frac{kc^2}{1+ {\eta}^2}} =  {\eta}\sqrt{ a } $. Therefore, the numerator of \eqref{eq_III_ApproxGainRect} can also be written as  $  ( {{C}^{2}}(  {\eta}'\sqrt{ a' } )+{{S}^{2}}(  {\eta}'\sqrt{ a' } ) )$ $( {{C}^{2}}( \sqrt{ a' } )+{{S}^{2}}( \sqrt{ a' } ) ) $. We notice that the denominator of \eqref{eq_III_ApproxGainRect} satisfies $( {\eta} a )^2 = ( {\eta}' a' )^2 $. Hence, $\hat{G}_{\rm rect}( {\eta}) = \hat{G}_{\rm rect}( {\eta}') $.
\end{proof}

When applying matched filtering in the radiative near-field, the beamforming might have a limited BD; in fact, \cite{2021_Björnson_Asilomar} advocates that the near-field ends at the point where the BD becomes infinite and not at the Fraunhofer distance. For a given focus point $F$, this implies that the array gain is only large for a limited range of transmitter distances $z$. This stands in contrast to far-field focusing where the array gain remains large in the range $z \in [F,\infty)$.
We consider the $3$ dB BD, ${\rm BD}_{3 {\rm dB}}$,  where the normalized array gain is higher than half of its peak gain. The 3 dB BD definition aligns with $3$ dB beam width concept and has been used in \cite{2021_Björnson_Asilomar,1962_Sherman_TAP,2017_Nepa_APMag}.
We will compute an analytical approximation of the BD using Theorem~\ref{Fresnel_Approx_Rect}. As the peak value of $\hat{G}_{\rm rect}( {\eta})$ is $1$, we are interested in the value of $a$ for which $\hat{G}_{\rm rect}( {\eta})\approx0.5$, indicating the $3$ dB loss. We further denote this value of $a$ as $a_{3{\rm dB}}$, which can be obtain numerically as follows. Initially, we calculate $\hat{G}_{\rm rect}( {\eta})$ using the closed-form solution with varying values of the parameter $a$. Subsequently, we treat this computation as a look-up table, enabling us to identify the specific $a$ value that yields $\hat{G}_{\rm rect}( {\eta}) \approx 0.5$. To enhance precision, we can regenerate $\hat{G}_{\rm rect}( {\eta})$ with finer increments in the $a$ values. The $a$ value that results in the closest $\hat{G}_{\rm rect}( {\eta})$ approximation to $0.5$ is designated as $a_{3 \rm{dB}}$. The $3$ dB BD is computed as follows by using the Fresnel approximation.

\begin{Theorem}\label{BD_analytical_rectangular}
The $3$ dB BD of a rectangular array that is focused on $F\geq d_B$ is approximately computed as
\begin{equation}\label{eq_III_Analytical_BD_Rect}
 {\rm BD}_{3 {\rm dB}}^{\rm rect}  = 
        \begin{cases}
        \frac{8 d_{FA} F^2 a_{3{\rm dB}}  (1+ {\eta}^2) }{ d_{FA}^2 - ( 4 F a_{3{\rm dB}} (1+ {\eta}^2) )^2}  , & F < \frac{d_{FA}}{4 a_{3{\rm dB}} (1+ {\eta}^2)},\\
        \infty, & F \geq  \frac{d_{FA}}{4 a_{3{\rm dB}} (1+ {\eta}^2)}.
        \end{cases}
\end{equation}
\end{Theorem}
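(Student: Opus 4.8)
The plan is to exploit the structure established in Theorem~\ref{Fresnel_Approx_Rect}: the normalized array gain $\hat{G}_{\rm rect}(\eta)$ depends on the transmitter distance $z$ and the focus point $F$ only through the single scalar $a=\frac{d_{FA}}{4 z_{\rm eff}(1+\eta^2)}$, with $z_{\rm eff}=\frac{Fz}{|F-z|}$. Since $C^2(x)+S^2(x)\approx x^2$ for small arguments, one checks that the numerator of \eqref{eq_III_ApproxGainRect} behaves like $\eta^2 a^2$, matching the denominator $(\eta a)^2$, so that $\hat{G}_{\rm rect}(\eta)\to 1$ as $a\to 0$ (perfect focus $z=F$) and decreases as $a$ grows across the mainlobe. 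Consequently the $3$ dB region $\{z:\hat{G}_{\rm rect}(\eta)\geq 0.5\}$ is precisely the set where $a\leq a_{3{\rm dB}}$, with $a_{3{\rm dB}}$ the numerically obtained value at which $\hat{G}_{\rm rect}(\eta)=0.5$. This reduces the problem to a purely algebraic one: translate $a\leq a_{3{\rm dB}}$ into a range of admissible $z$.

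First I would rewrite $a\leq a_{3{\rm dB}}$ as a lower bound on the effective distance, namely $z_{\rm eff}\geq z_{\rm eff}^{\min}$ where $z_{\rm eff}^{\min}\triangleq\frac{d_{FA}}{4 a_{3{\rm dB}}(1+\eta^2)}$. I would then solve $\frac{Fz}{|F-z|}\geq z_{\rm eff}^{\min}$ for $z$, splitting into the two branches dictated by the absolute value. On the near branch ($z<F$) one has $z_{\rm eff}=\frac{Fz}{F-z}$, which increases monotonically from $0$ to $\infty$ as $z$ ranges over $(0,F)$; solving the resulting linear inequality yields the unique lower endpoint $z_{\min}=\frac{z_{\rm eff}^{\min}F}{F+z_{\rm eff}^{\min}}$. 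On the far branch ($z>F$) one has $z_{\rm eff}=\frac{Fz}{z-F}$, which decreases monotonically from $\infty$ down to the limiting value $F$ as $z$ ranges over $(F,\infty)$.

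The main obstacle, and the origin of the case split in the statement, is the far branch: whether the gain ever drops below $0.5$ as $z\to\infty$ depends on comparing this limiting value $F$ with $z_{\rm eff}^{\min}$. If $F<z_{\rm eff}^{\min}$, the inequality $z_{\rm eff}\geq z_{\rm eff}^{\min}$ eventually fails and there is a finite upper endpoint $z_{\max}=\frac{z_{\rm eff}^{\min}F}{z_{\rm eff}^{\min}-F}$, obtained by solving the linear inequality while tracking the sign flip since $F-z_{\rm eff}^{\min}<0$. If instead $F\geq z_{\rm eff}^{\min}$, then $z_{\rm eff}>F\geq z_{\rm eff}^{\min}$ for every $z>F$, so the gain stays above $0.5$ out to infinity and the BD is infinite; this threshold $F=z_{\rm eff}^{\min}=\frac{d_{FA}}{4 a_{3{\rm dB}}(1+\eta^2)}$ is exactly the one appearing in \eqref{eq_III_Analytical_BD_Rect}.

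Finally, in the finite case I would form ${\rm BD}_{3{\rm dB}}^{\rm rect}=z_{\max}-z_{\min}$, combine the two fractions over the common denominator $(z_{\rm eff}^{\min})^2-F^2$ to obtain the compact intermediate form $\frac{2 z_{\rm eff}^{\min}F^2}{(z_{\rm eff}^{\min})^2-F^2}$, and then substitute $z_{\rm eff}^{\min}=\frac{d_{FA}}{4 a_{3{\rm dB}}(1+\eta^2)}$ and clear denominators to recover the claimed expression. The step I expect to require the most care is justifying that the $3$ dB set coincides with $\{a\leq a_{3{\rm dB}}\}$, i.e. that within the mainlobe $\hat{G}_{\rm rect}(\eta)$ is monotone in $a$ so that $a_{3{\rm dB}}$ is the unique relevant crossing, since the Fresnel integrals are oscillatory and generate sidelobes further out; restricting to $F\geq d_B$ and to the mainlobe makes this precise.
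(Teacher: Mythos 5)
Your proposal is correct and follows essentially the same route as the paper's proof: identify $a_{3{\rm dB}}$ numerically from the gain expression, solve $a_{3{\rm dB}} = \frac{d_{FA}|F-z|}{4Fz(1+\eta^2)}$ for the two crossing distances $z = \frac{d_{FA}F}{d_{FA}\pm 4Fa_{3{\rm dB}}(1+\eta^2)}$, take their difference, and split cases according to whether the far crossing exists. Your treatment is somewhat more careful than the paper's---you justify the infinite-BD case via the monotone limit $z_{\rm eff}\to F$ on the far branch and flag the mainlobe-monotonicity assumption, where the paper simply notes that one $z$-value becomes negative---but the substance and the resulting algebra are identical.
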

\begin{proof}
 We can obtain a value of $a$ from \eqref{eq_III_ApproxGainRect}, for which $\hat{G}_{\rm rect}( {\eta})$ is approximately $0.5$. The value of $a$ is denoted as $a_{3{\rm dB}}$. From Theorem~\ref{Fresnel_Approx_Rect}, we know that $a_{3{\rm dB}} = \frac{d_{FA}}{ 4z_{\rm eff} (1+ {\eta}^2)} =  \frac{d_{FA} |F-z|}{4 Fz (1+ {\eta}^2)} $, where the value of $z$ indicates the $3$ dB gain in the propagation distance. We can then write  $z = \frac{d_{FA}F}{d_{FA} \pm 4 F a_{3{\rm dB}}  (1+ {\eta}^2) }$. There are two $z$-values resulting in the $3$ dB gain. The difference between those two is the range of distances where the gain is greater than half of its peak gain, referred to as the $3$ dB BD. Therefore, the $3$ dB BD is computed as   $\frac{d_{FA} F }{d_{FA} - 4 F a_{3{\rm dB}} (1+ {\eta}^2) } - \frac{d_{FA} F }{d_{FA} + 4 F a_{3{\rm dB}} (1+ {\eta}^2) } $ which results in \eqref{eq_III_Analytical_BD_Rect}. When $F \geq  \frac{d_{FA}}{4 a_{3{\rm dB}} (1+ {\eta}^2)}$, one of the $z$-values is negative which implies there is no upper limit on the beamwidth. Notice that ${\rm BD}_{3 {\rm dB}}^{\rm rect} $ does not depend on the carrier frequency since the wavelength term are canceled out when substituting  $d_{FA} = \frac{2 D^2 N}{\lambda}$ into \eqref{eq_III_Analytical_BD_Rect}.
\end{proof}

 {Referring to \eqref{eq_III_Analytical_BD_Rect}, it becomes evident that distinct beam depth intervals emerge when directing focus to distances within $d_{\mathrm{F}}/(8a_{3\mathrm{dB}})$ for $\eta =1$. In particular, when dealing with a square ELAA where $M=N$, we can utilize the range of $3$ dB BD $\left[\frac{d_{FA} F }{d_{FA} + 4 F a_{3{\rm dB}} (1+ {\eta}^2) } , \frac{d_{FA} F }{d_{FA} - 4 F a_{3{\rm dB}} (1+ {\eta}^2) } \right] $ to compute five distinct focal points:  $F = d_{FA}/20$, $F = d_{FA}/40$, $F = d_{FA}/60$, $F = d_{FA}/80$, and $F = d_B = d_{FA}/100$ that give us non-overlapping $3$ dB BD intervals. Fig. \ref{fig_multiplex_dist} depicts the normalized array gain when an ELAA with $M=N=200$ and $D=\lambda/2$ focuses on the mentioned five points using matched filtering. 
\begin{figure}
    \centering
    \includegraphics[width=\textwidth]{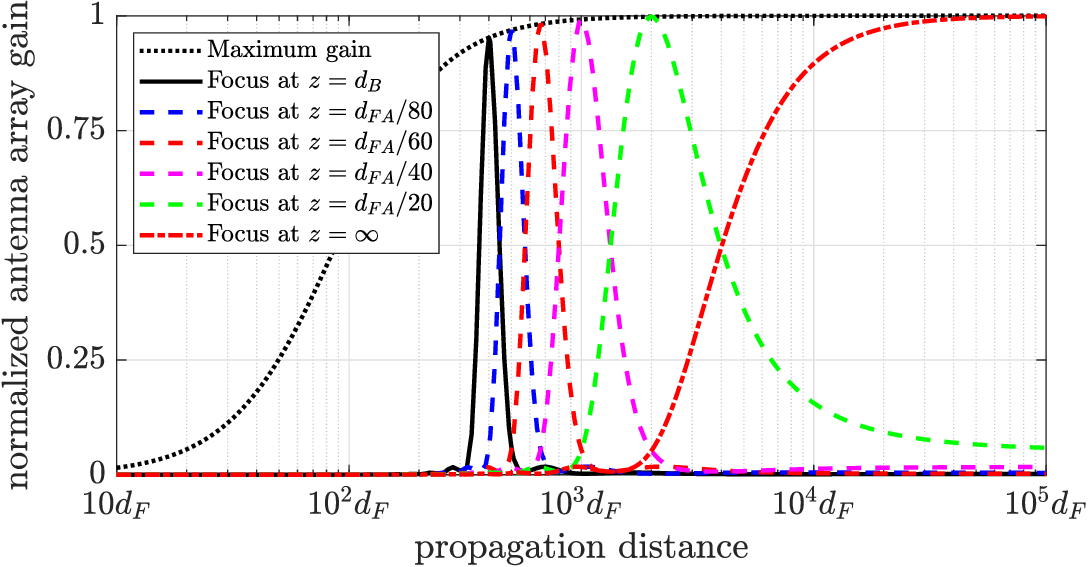}
    \caption{ {The normalized array gains are illustrated for several focal points in the same angular direction, each exhibiting non-overlapping 3 dB BD.}}
    \label{fig_multiplex_dist}
\end{figure}
It can be seen that the $3\,$dB beam depth intervals are non-overlapping, which lets the ELAA multiplexes five users in the near-field with limited interference.}

We will now exemplify how we can multiplex users with limited interference based on the non-overlapping $3$ dB BD  focal points in  Fig.~\ref{fig_multiplex_dist}. We evaluate the performance in terms of the achievable sum rate. We consider a downlink scenario, where  $K$ single-antenna users are served by a large base station array. The channel vector to user $k$ is denoted as $\vect{h}_k \in \mathbb{C}^{N}$.
Each element of this vector represents the channel between user $k$ and an antenna element indexed by $(n, m)$, which is given in \eqref{eq_II_ArrayChResponse}.
 The received  signal $y_k \in \mathbb{C}$ can be expressed as
\begin{equation} \label{eq:MU-MIMO-userk1}
    y_k = \vect{h}_k^{\Htran} \sum_{i=1}^{K} \vect{w}_i x_i + n_k,
\end{equation}
where $x_i \in \mathbb{C}$ is the data signal intended for user $i$, $\vect{w}_i \in \mathbb{C}^{N}$ is the corresponding precoding vector, and $n_k  \in \mathbb{C}$ is the noise at user $k$.
The combined received signals of all users is
\begin{equation} \label{eq:MU-MIMO-userk}
    \underbrace{\begin{bmatrix}
        y_1 \\
        \vdots \\
        y_K
    \end{bmatrix}}_{\vect{y}} = \underbrace{\begin{bmatrix}
        \vect{h}_1^{\Htran} \\ \vdots \\ \vect{h}_K^{\Htran}
    \end{bmatrix}}_{\vect{H}^{\Htran}}  
    \underbrace{\begin{bmatrix}
        \vect{w}_1 & \ldots & \vect{w}_K
    \end{bmatrix}}_{\vect{W}}
    \underbrace{\begin{bmatrix}
        x_1 \\
        \vdots \\
        x_K
    \end{bmatrix}}_{\vect{x}} +     \underbrace{\begin{bmatrix}
        n_1 \\
        \vdots \\
        n_K
    \end{bmatrix}}_{\vect{n}},
\end{equation}
which can be expressed as $\vect{y} = \vect{H}^{\Htran} \vect{W} \vect{x} + \vect{n}$.
The transmitter can then suppress interference by using the minimum mean-square-error (MMSE) precoding   $  \vect{W} = \alpha \vect{H} (\vect{H}^{\Htran} \vect{H}  + \vect{ I})^{-1}$, where $\alpha = \frac{1}{\sqrt{\sum_{i=1}^{N} \sum_{j=1}^{K} |h_{ij}|^2}} $ is a power normalization factor.
We can compute the achievable sum rate as
\begin{equation}
    R_{\rm sum} = \sum_{k=1}^K \log_2 \left(1 + \frac{p_k |\vect{h}^{\Htran}_k \vect{w}_k|^2}{ \sum_{\substack{j=1 \\ j \neq k}}^{j=K} p_j   | \vect{h}^{\Htran}_k \vect{w}_j |^2 + 1 } \right),
\end{equation}
where $p_k$ is the transmit power of user $k$.
We will now compare the sum rate obtained by locating users at the points where there exists non-overlapping  $3$ dB BD (see Fig. \ref{fig_multiplex_dist}), referred to as the $3$ dB BD separation, i.e.,  $d_{FA}/20, d_{FA}/40, d_{FA}/60, d_{FA}/80$, and $d_B = d_{FA}/100$ with the sum rate achieved by distributing the users uniformly within the range of finite depth beamforming $[d_B,d_{FA}/10]$. 
Fig. \ref{fig_sumRate_snr_BW} demonstrates that the one with  $3$ dB BD separation results in a higher sum rate when the SNR is sufficiently high, e.g., $20$ dB. The SNR is defined as the ratio of transmitter power over the noise power.
\begin{figure}
\centering
{\includegraphics[width=\textwidth]{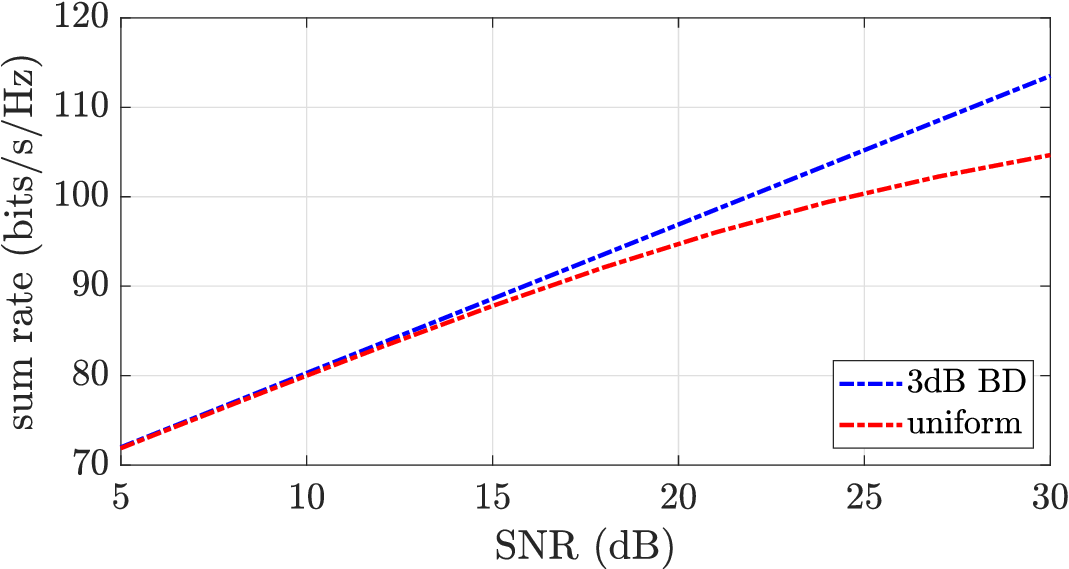}}\hfill
\caption{ {The comparison of sum rate with respect to SNR with different focal points for multiplexing $5$ users within the range of finite depth beamforming $[d_B,d_{FA}/10]$.}}
\label{fig_sumRate_snr_BW}
\end{figure}
We extend our analysis to evaluate the mean of achievable sum rate across  $100000$ random user locations with respect to the number of users with an SNR of $25$ dB. The users are randomly placed within the range of  $[d_B,d_{FA}/10]$. The outcome is illustrated in Fig. \ref{figure_sum_rate_users}. Notably, the highest average sum rate is realized when multiplexing five users, which is what the beam depth analysis suggests. 
\begin{figure}
    \centering
    \includegraphics[width=0.95\textwidth]{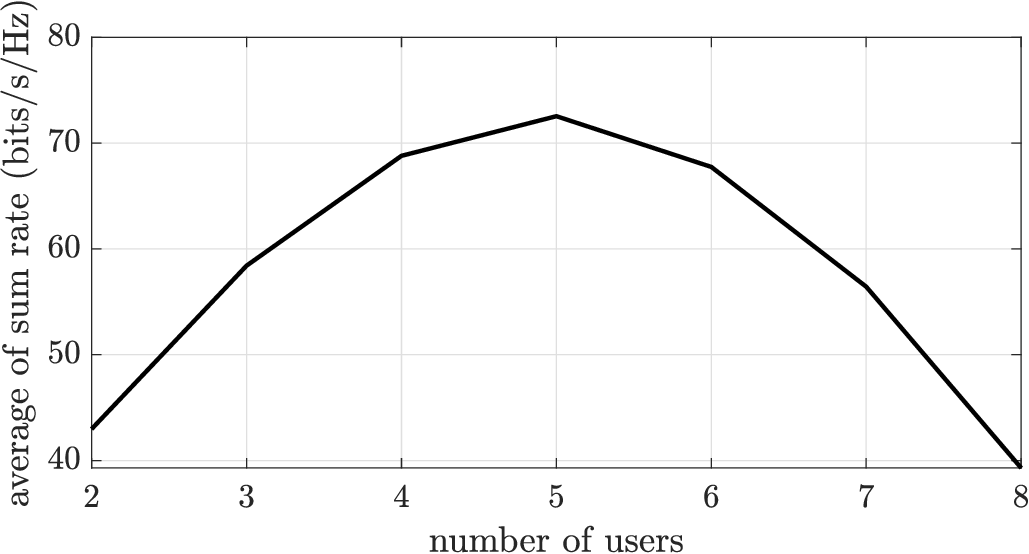}
    \caption{ {The mean of achievable sum rate with respect to the number of users, where users are placed randomly within the range of $[d_B,d_{FA}/10]$ associated with the range of finite beamforming.}}
    \label{figure_sum_rate_users}
\end{figure}

Looking more closely at the simulation results, when multiplexing $5$ users and considering $10000$ random user locations, the highest observed sum rate was $105.16$. This occurred when the users' locations closely matched the $3$ dB BD separation rule. The sum rate was close to the sum rate of $105.20$ (see Fig.~\ref{fig_sumRate_snr_BW}), obtained when the users are positioned using the $3$ dB BD separation with an SNR of $25$ dB. Hence, the $3$ dB BD separation serves as the ideal upper bound for user multiplexing in the distance domain.

The fact that the BD is finite in the radiative near-field enables spatial multiplexing of closely spaced users, even in the same angular direction. It is essential to understand which array geometry makes this feature most prevalent. ${\rm BD}_{3 {\rm dB}}^{\rm rect}$ in Theorem~\ref{BD_analytical_rectangular} is maximized when $ {\eta}=1$. The $3$ dB BD depends on the value of $a_{3{\rm dB}}$ for which $\hat{G}_{\rm rect}( {\eta}) = 0.5$. Since it is difficult to get a closed-form solution of $a_{3{\rm dB}}$ with respect to $ {\eta}$, we simulate it as given in Fig.~\ref{Fig_x_3dB_mult}. 
\begin{figure}
\centering
\includegraphics[width=\textwidth]{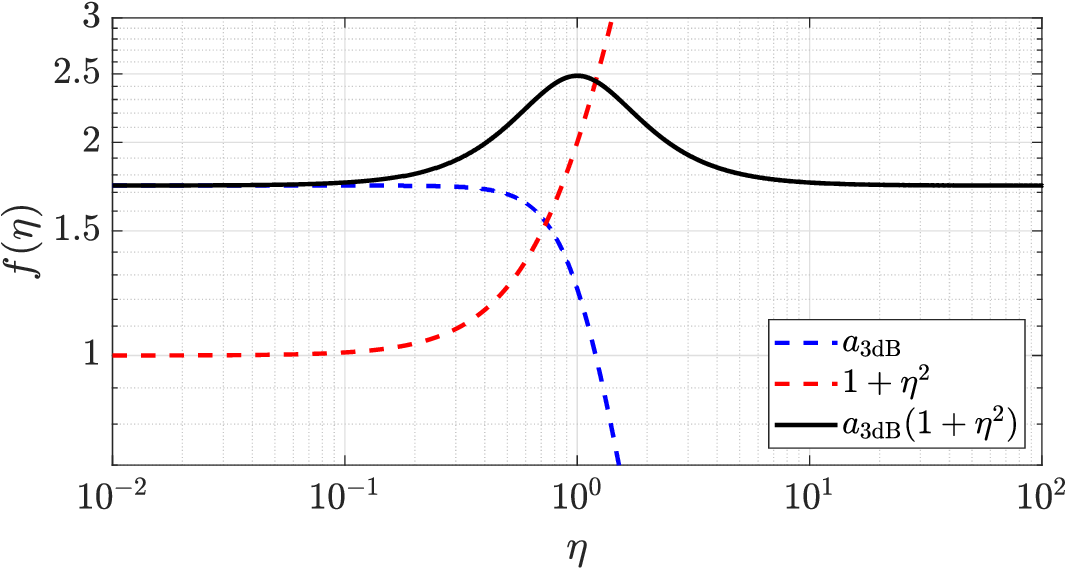}
\caption{The multiplication of $a_{3{\rm dB}}$ and $(1+ {\eta}^2)$ with respect to $ {\eta}$. We set $f_c=3$ GHz, $N=10^4$, and fixed the aperture area $A=\lambda^2/ 32$. The multiplication is maximized when $ {\eta}=1$.}
\label{Fig_x_3dB_mult}
\end{figure}
$a_{3{\rm dB}}$ is a decreasing function while $(1+ {\eta}^2)$ is an increasing function. Multiplying them together results in a function maximized when $ {\eta}=1$. Since the numerator and denominator in ${\rm BD}_{3 {\rm dB}}^{\rm rect}$ are maximized when the multiplication result of $a_{3{\rm dB}}$ and $1+ {\eta}^2$ is maximum, the largest value of  ${\rm BD}_{3 {\rm dB}}^{\rm rect}$ is obtained when $ {\eta}=1$.

We obtain a square array when $ {\eta}=1$ and then  ${\rm BD}_{3 {\rm dB}}^{\rm rect} = \frac{20 d_{FA} F^2  }{d_{FA}^2 - 100 F^2}$,  which is exactly the expression of the BD in \cite[Eq. (23)]{2021_Björnson_Asilomar}.  Another property from the analytical $3$dB BD in \eqref{eq_III_Analytical_BD_Rect} is that it goes to infinity when the focus distance $F$ is greater than $\frac{d_{FA}}{4 a_{3{\rm dB}} (1+ {\eta}^2)}$. We refer to this focus boundary as the \emph{finite BD limit} and stress that it separates the near-field and far-field propagation characteristics. To study this limit, we plot its value in Fig.~\ref{Fig_focus_lim} as a function of $ {\eta}$. 
The limit does not depend on the carrier frequency.
When doing so, there are two options.
\begin{figure}
    \centering
    \includegraphics[width=\textwidth]{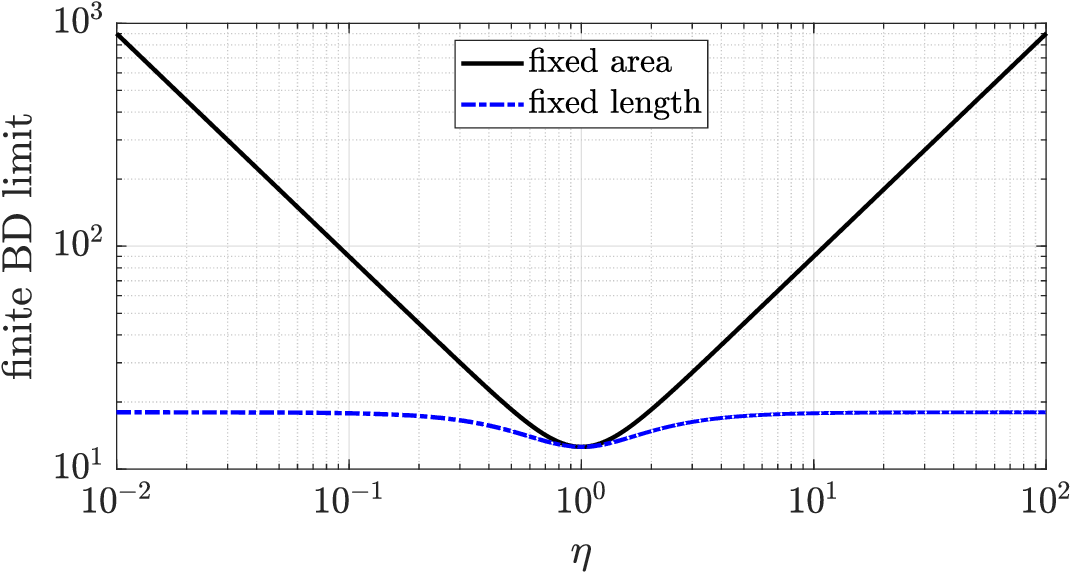}
    \caption{The finite BD limit with respect to $ {\eta}$. The limit indicates the boundary for the distance to the focus point $F$ where the BD becomes infinity.
    We set $N=10^4$. The finite BD limit is minimized when $ {\eta}=1$.}
    \label{Fig_focus_lim}
\end{figure}
We can either fix the aperture length or the total aperture area to observe how  the finite BD limit depends on $ {\eta}$. If we fix the aperture area as $A_{\rm array} =N \lambda^2/32 $, the aperture length varies with respect to  $ {\eta}$ such that $ \sqrt{A_{\rm array} (1+ {\eta}^2)/ {\eta}}$. Since $d_{FA} = N d_F$, where $d_F = \frac{2D^2}{\lambda}$, the finite BD limit $\frac{d_{FA}}{4 a_{3{\rm dB}} (1+ {\eta}^2)}$ highly depends on the changes of the aperture length $D_{\rm array}$, as demonstrated in Fig.~\ref{Fig_focus_lim}. The finite BD limit is minimized when $ {\eta}=1$. 
If we fix the aperture length $D_{\rm array} = \sqrt{N}\lambda/4$, the limit now depends on $a_{3{\rm dB}} (1+ {\eta}^2)$. More specifically, they are inversely proportional. The finite BD limit  is still minimized when $ {\eta}=1$. We further evaluate the achievable sum rate with respect to the width-to-height ratio $\eta$, as depicted in Fig.~\ref{fig_sum_rate_eta}. It becomes evident that taller arrays ($\eta \ll 1$) and wider arrays ($\eta \gg 1$) exhibit an improved sum rate when compared to that of square arrays. This observation highlights that the square array yields a greater beam depth, consequently resulting in a diminished sum rate. 
\begin{figure}
    \centering
    \includegraphics[width=\textwidth]{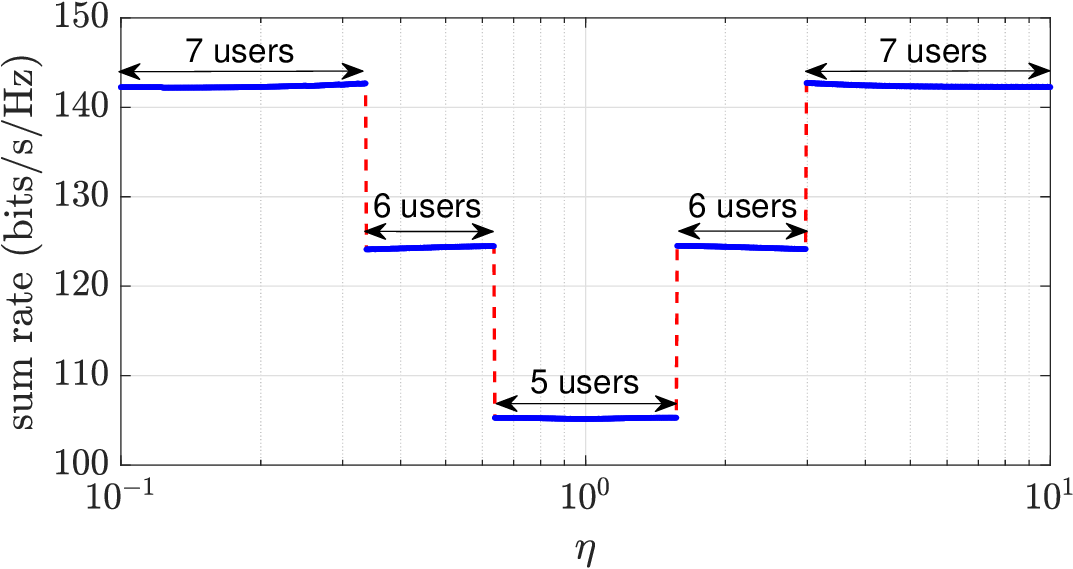}
    \caption{ {The achievable sum rate with respect to the width-to-height ratio $\eta$ for non-overlapping $3$dB BD users' separation.}}
    \label{fig_sum_rate_eta}
\end{figure}
We summarize these important results as follows.

\begin{Observation}\label{Remark_DoF_Rect}
 {The finite BD limit mostly depends on the aperture length $D_{\rm array}$. Finite-depth beamforming is attainable at distances greater than or equal to the Bj\"ornson distance and lower than or equal to the finite BD limit. The finite BD limit is lower than the Fraunhofer distance, for example, the finite BD limit for a square array is an order magnitude lower than the Fraunhofer array distance. The finite BD regime is largest for a uniform linear array (ULA).}
\end{Observation}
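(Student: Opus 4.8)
The plan is to read off the \emph{finite BD limit} $L \triangleq \frac{d_{FA}}{4 a_{3{\rm dB}}(1+\eta^2)}$ directly from the piecewise expression \eqref{eq_III_Analytical_BD_Rect} in Theorem~\ref{BD_analytical_rectangular}, and then to settle each of the four assertions of the Observation by manipulating this single quantity against the two reference distances $d_B = 2D\sqrt{N} = 2D_{\rm array}$ and $d_{FA}$. First I would substitute $d_{FA} = \frac{2 D^2 N}{\lambda} = \frac{2 D_{\rm array}^2}{\lambda}$, using $D_{\rm array} = \sqrt{N}D$, to rewrite the limit as $L = \frac{D_{\rm array}^2}{2\lambda\, a_{3{\rm dB}}(1+\eta^2)}$. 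This exposes a \emph{quadratic} dependence on the aperture length and isolates the entire shape dependence into the single factor $a_{3{\rm dB}}(1+\eta^2)$. To justify the first claim, that $L$ mostly depends on $D_{\rm array}$, I would argue that this shape factor is bounded and slowly varying in $\eta$ --- it is precisely the quantity plotted in Fig.~\ref{Fig_x_3dB_mult} --- whereas $D_{\rm array}^2$ grows without bound, so the $D_{\rm array}^2$ scaling dominates the behavior of $L$.

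For the second claim, the lower endpoint $d_B$ follows from the validity region of the Fresnel approximation: Theorem~\ref{BD_analytical_rectangular} is stated only for $F \geq d_B$, since for smaller $F$ the amplitude variations over the aperture cease to be negligible and \eqref{eq_III_FresnelApprox} breaks down. The upper endpoint is exactly $L$: by the case split in \eqref{eq_III_Analytical_BD_Rect}, focusing at $F \geq L$ forces one of the two half-power roots $z = \frac{d_{FA}F}{d_{FA}\pm 4 F a_{3{\rm dB}}(1+\eta^2)}$ to become nonpositive, so the gain stays above the half-power level out to infinity and the BD degenerates to $\infty$ (far-field behavior). Hence finite-depth beamforming occupies precisely $d_B \leq F \leq L$.

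For the third and fourth claims I would specialize and then optimize over $\eta$. Setting $\eta = 1$ and matching \eqref{eq_III_Analytical_BD_Rect} against the known square-array BD $\frac{20 d_{FA}F^2}{d_{FA}^2 - 100 F^2}$ of \cite{2021_Björnson_Asilomar} forces $a_{3{\rm dB}} = 1.25$; substituting gives $L = \frac{d_{FA}}{8\cdot 1.25} = \frac{d_{FA}}{10}$, an order of magnitude below the Fraunhofer array distance $d_{FA}$, and the general inequality $L < d_{FA}$ then reduces to $4 a_{3{\rm dB}}(1+\eta^2) > 1$, which holds throughout the range shown in Fig.~\ref{Fig_x_3dB_mult}. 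For the final claim I fix $D_{\rm array}$ (so that $d_B$ and $d_{FA}$ are constant) and note $L \propto 1/[a_{3{\rm dB}}(1+\eta^2)]$; the regime $[d_B, L]$ is therefore largest exactly where the shape factor is smallest, namely the ULA limit $\eta \to 0$ (equivalently $\eta \to \infty$ by Corollary~\ref{corr_symmetric_function}), and it shrinks to its minimum at the square $\eta = 1$.

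The hard part will be controlling the shape factor $a_{3{\rm dB}}(1+\eta^2)$, since $a_{3{\rm dB}}$ is defined only implicitly through $\hat{G}_{\rm rect}(\eta) = 1/2$ and the gain \eqref{eq_III_ApproxGainRect} is a product of squared Fresnel integrals with no elementary inverse; a fully closed-form proof that this factor is unimodal with maximum at $\eta = 1$ appears out of reach. I would instead anchor the argument at the ULA limit analytically: as $\eta \to 0$, the small-argument expansions $C(x) \approx x$ and $S(x) \approx \tfrac{\pi}{6}x^3$ collapse \eqref{eq_III_ApproxGainRect} to the one-dimensional pattern $[C^2(\sqrt{a}) + S^2(\sqrt{a})]/a$, whose half-power value $a_{3{\rm dB}}$ is strictly smaller than the square-array value, so the product $a_{3{\rm dB}}(1+\eta^2)$ decreases toward the ULA. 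I would then invoke the unimodal behavior exhibited numerically in Fig.~\ref{Fig_x_3dB_mult} to close the extremal claims, mirroring the way $a_{3{\rm dB}}$ is treated throughout the paper.
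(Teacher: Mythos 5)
Your overall skeleton matches the paper's own support for this Observation, which likewise has no standalone formal proof: the finite BD limit $L=\frac{d_{FA}}{4a_{3\mathrm{dB}}(1+\eta^2)}$ is read off from Theorem~\ref{BD_analytical_rectangular}, the case split in \eqref{eq_III_Analytical_BD_Rect} gives the upper endpoint of the finite-depth regime, the validity threshold $F\geq d_B$ gives the lower endpoint, the square-array specialization ($a_{3\mathrm{dB}}=1.25$, $L=d_{FA}/10$, matching \cite{2021_Björnson_Asilomar}) gives the order-of-magnitude claim, and the extremal claim rests on the numerically established behavior of the shape factor $a_{3\mathrm{dB}}(1+\eta^2)$ in Fig.~\ref{Fig_x_3dB_mult} together with the fixed-area/fixed-length comparison in Fig.~\ref{Fig_focus_lim}. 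Your rewriting $L=\frac{D_{\rm array}^2}{2\lambda\, a_{3\mathrm{dB}}(1+\eta^2)}$, with a bounded shape factor against an unbounded $D_{\rm array}^2$, is a clean way to phrase the ``mostly depends on $D_{\rm array}$'' claim and is faithful to the paper's reasoning.

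There is, however, a genuine error in your ``analytic anchor'' at the ULA limit. As $\eta\to 0$ the gain does collapse to $g_1(a)=\left[C^2(\sqrt{a})+S^2(\sqrt{a})\right]/a$, but the half-power value of this one-dimensional pattern is \emph{larger}, not smaller, than the square-array value: the square array requires $g_1(a)^2=1/2$, i.e.\ $g_1(a)=1/\sqrt{2}$, which occurs at $a\approx 1.25$, whereas the ULA requires $g_1(a)=1/2$, which, since $g_1$ is decreasing in the relevant range, occurs at a strictly larger argument ($a\approx 1.74$). Your claimed inequality also contradicts the paper's own statement accompanying Fig.~\ref{Fig_x_3dB_mult} that $a_{3\mathrm{dB}}$ is a \emph{decreasing} function of $\eta$, hence largest as $\eta\to 0$. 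The product $a_{3\mathrm{dB}}(1+\eta^2)$ does decrease toward the ULA (from $\approx 2.5$ at $\eta=1$ to $\approx 1.74$), but only because $(1+\eta^2)$ drops from $2$ to $1$ by more than $a_{3\mathrm{dB}}$ rises from $1.25$ to $\approx 1.74$; to conclude this you need the quantitative bound $a_{3\mathrm{dB}}^{\rm ULA}<2\,a_{3\mathrm{dB}}^{\rm square}$, which your small-argument expansion of the Fresnel integrals does not deliver --- it is again a numerical fact. So the anchor fails to do the job it was introduced for, and the extremal claim (finite BD regime largest for the ULA, smallest for the square) still rests, exactly as in the paper, on the numerically computed unimodality of the shape factor; with the inequality direction corrected and that reliance made explicit, the rest of your argument goes through.
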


In the following, we numerically compute the $3$ dB BD of the rectangular array with respect to $ {\eta}$. We consider both fixed area and length scenarios.

\subsection{Fixed Array Aperture Area vs. Fixed Array Aperture Length}
\label{S_Rect_length_vs_area}

When varying the shape of the antenna elements through $ {\eta}$, we can fix the array's aperture area $A_{\rm array} = {N} \lambda^2/32$, while letting the array's aperture length $  D_{\rm array} = A_{\rm array}(1+ {\eta}^2)/ {\eta} $ vary. Alternatively, we can fix the aperture length $D_{\rm array} = \sqrt{N}\lambda/4$ so that the aperture area depends on $ {\eta}$ as $ {\eta} (D_{\rm array})^2/(1+ {\eta}^2)$. We will analyze both cases below.

Fig.~\ref{Fig_Beamdepth_vs_c} depicts the $3$ dB BD with respect to $ {\eta}$ for a fixed area. The normalized array gain is computed numerically using \eqref{eq_II_NormalizedArrayGain}. The $3$ dB BD is maximum when the array is square-shaped ($ {\eta}=1$), i.e., around $400 d_F$. The BD pattern is symmetric for tall ($ {\eta}<1$) and wide ($ {\eta}>1$)  arrays, which is aligned with Corollary~\ref{corr_symmetric_function}. Note that as $ {\eta}$ attains a smaller/bigger value, the two-dimensional UPA gradually approaches the shape of a one-dimensional ULA. In fact, the ULA can be regarded as a special case of our generalized rectangular array model that is achieved when the height approaches the aperture length: $\lim_{ {\eta} \to 0} \frac{D_{\rm array}}{\sqrt{1+ {\eta}^2}} = D_{\rm array}$. The smallest BD is obtained in Fig.~\ref{Fig_Beamdepth_vs_c} when the array approaches a ULA with around $50 d_F$ when $ {\eta}=0.1$ and $ {\eta}=10$. 

\begin{figure}
    \centering
    \includegraphics[width=\textwidth]{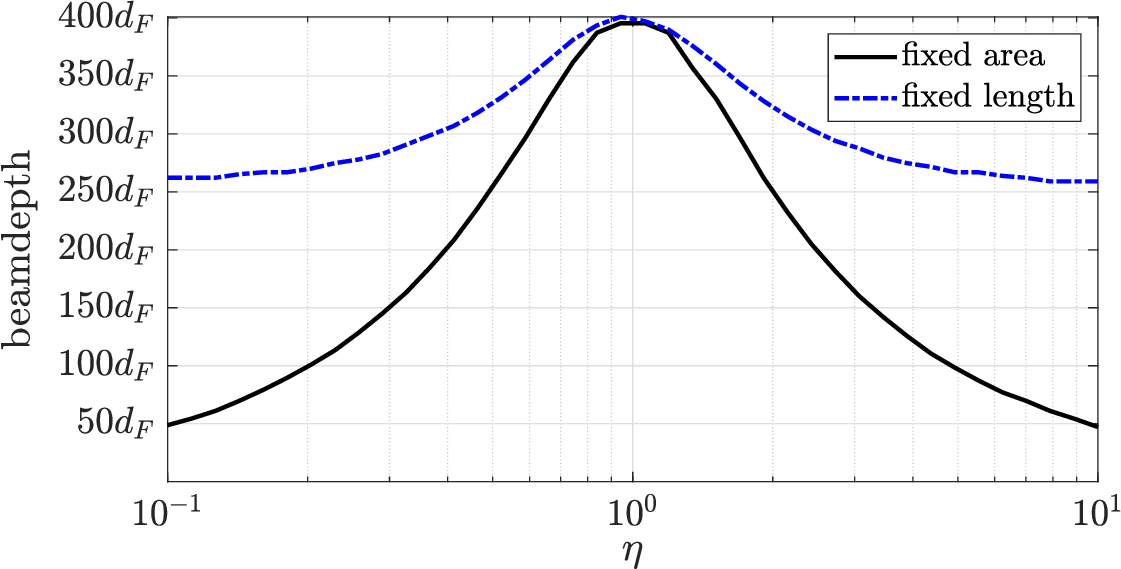}
    \caption{The BD with respect to rectangular array shapes. We set $f_c=3$ GHz, $N=10^4$. The BD pattern is symmetric for tall ($ {\eta}<1$) and wide ($ {\eta}>1$)  arrays, and its largest point is when $ {\eta}=1$.}
    \label{Fig_Beamdepth_vs_c}
\end{figure}
 
Suppose we instead fix the array's aperture length so that the aperture area depends on $ {\eta}$. The $3$ dB BD for this case is also plotted in Fig.~\ref{Fig_Beamdepth_vs_c} with respect to $ {\eta}$. Similarly to the results when we fix the aperture area, the $3$ dB BD is maximum when the array is square-shaped ($ {\eta}=1$), i.e., around $400 d_F$; The BD pattern is symmetric for tall and wide arrays, and the smallest BD is obtained when the array approaches a ULA with around $250 d_F$, when $ {\eta}=0.1$ and $ {\eta}=10$. It is worth noting that the gap between the highest and smallest   $3$ dB BD over $ {\eta}$ for a fixed aperture length is much lower than the one in the fixed area scenario. This implies that the array's aperture length has a more significant impact on the $3$ dB BD than the array area. 

In the following, we discuss the practical considerations for the two considered scenarios:
\begin{itemize}
    \item  \textbf{Fixed array aperture area:} One of the main limiting factors when deploying antenna arrays in practice is the wind load. The total wind load is directly proportional to the aperture area, which is why we consider it here. The frontal and lateral wind loads also depend on the shape of the array \cite{2019_Huawei_whitepaper,2019_antenna_standard}. A tall array has a smaller wind-exposed frontal area compared to a wide array, while a wide array has a smaller wind-exposed lateral area than a tall array. Hence, the array's height poses a constraint on the stability, while the array's width acts as a limiting factor against drag wind forces.
    \item  \textbf{Fixed array aperture length:} If the antenna array is deployed on a wall/structure with a predefined size, then the primary practical concern is the maximum aperture length that must fit the largest dimension of the structure. Since the length is measured diagonally, one can possibly obtain a larger aperture area by utilizing two deployment dimensions.
\end{itemize}

\begin{Observation}
The largest $3$ dB BD is obtained with a square array ($ {\eta}=1$). For tall ($ {\eta}<1$) and wide ($ {\eta}>1$)  arrays, the BD patterns are symmetric. The smallest BD is obtained when the array approaches a ULA. The array's aperture length has a more significant impact on the $3$ dB BD  variation than  the array's aperture area.
\end{Observation}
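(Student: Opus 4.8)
The plan is to reduce all four assertions to the behavior of the single scalar $P(\eta) \triangleq a_{3{\rm dB}}(1+\eta^2)$, since this is the only $\eta$-dependent quantity in the beam-depth formula of Theorem~\ref{BD_analytical_rectangular}. First I would note that, on the finite-BD regime $P < d_{FA}/(4F)$, the expression ${\rm BD}_{3{\rm dB}}^{\rm rect} = 8 d_{FA} F^2 P/(d_{FA}^2 - 16 F^2 P^2)$ is strictly increasing in $P$ (its numerator grows linearly in $P$ while its positive denominator decreases toward $0$). Hence ordering beam depths is equivalent to ordering the values of $P(\eta)$, and the entire Observation becomes a statement about where $P$ is symmetric, maximized, and minimized.

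The second step is to recast the gain of Theorem~\ref{Fresnel_Approx_Rect} in a symmetric form. Setting $k \triangleq d_{FA}/(4 z_{\rm eff})$, $u \triangleq \eta\sqrt{a}$, and $v \triangleq \sqrt{a}$ with $a = k/(1+\eta^2)$, a short computation gives $u^2 + v^2 = k$, $u/v = \eta$, and $\hat{G}_{\rm rect}(\eta) = f(u)\,f(v)$, where $f(t) \triangleq (C^2(t)+S^2(t))/t^2$ and $f(t)\to 1$ as $t\to 0^+$. Because $a_{3{\rm dB}}(1+\eta^2)$ equals the value of $k$ at the 3 dB crossing, I obtain the clean identity $P(\eta) = k_{3{\rm dB}}(\eta)$; that is, $P$ is exactly the level $u^2+v^2$ at which the gain drops to $1/2$ along the ray $u/v=\eta$.

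With this reformulation, the tall/wide symmetry claim is immediate and rigorous: the map $\eta\mapsto 1/\eta$ merely interchanges $u$ and $v$, which leaves the product $f(u)f(v)$ invariant — this is Corollary~\ref{corr_symmetric_function} restated — so $k_{3{\rm dB}}(\eta)=k_{3{\rm dB}}(1/\eta)$ and therefore $P(\eta)=P(1/\eta)$. For the square-array claim I would fix $k$ and study $f(u)f(v)$ on the arc $u^2+v^2=k$: if the symmetric split $u=v$ (i.e., $\eta=1$) maximizes the gain, then the gain decays most slowly along $\eta=1$, so $k_{3{\rm dB}}$, and hence $P$ and the beam depth, are maximized there. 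The ULA claim follows from the opposite extreme, since $\eta\to 0$ or $\eta\to\infty$ sends $v\to0$ or $u\to0$, one factor tends to $f(0^+)=1$, the array degenerates to a line, and $P$ is driven to its smallest value.

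The hard part is the maximality of the symmetric split. Writing $p=u^2$, $q=v^2$ with $p+q=k$ fixed, the symmetric critical point $p=q=k/2$ of $\ln f(\sqrt p)+\ln f(\sqrt q)$ is a genuine maximum precisely when $p\mapsto\ln f(\sqrt p)$ is concave on $[0,k]$; establishing this concavity in closed form is obstructed by the oscillatory, non-elementary Fresnel integrals, and no closed form for $a_{3{\rm dB}}$ exists. I would therefore prove concavity only on the sub-interval of arguments relevant to the 3 dB crossing, where $f$ is monotone, and otherwise rely on the numerical evaluation of $P(\eta)=a_{3{\rm dB}}(1+\eta^2)$ already reported in Fig.~\ref{Fig_x_3dB_mult}, which confirms the maximum at $\eta=1$ together with the required monotonic decay in $k$ so that the first crossing is well defined. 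Finally, the length-versus-area assertion is not a closed-form inequality but a scaling comparison: substituting $d_{FA}=2ND^2/\lambda$ into the beam-depth formula under the two constraints shows that varying $\eta$ moves the diagonal $D$ (hence $d_{FA}$) far more under fixed area than the beam depth moves under fixed length, which I would quantify through the spread of ${\rm BD}_{3{\rm dB}}^{\rm rect}$ displayed in Fig.~\ref{Fig_Beamdepth_vs_c}.
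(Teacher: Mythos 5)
Your proposal is correct and follows essentially the same route as the paper: the paper likewise reduces every claim to the behavior of the product $a_{3{\rm dB}}(1+ {\eta}^2)$, obtains the tall/wide symmetry exactly via your $u \leftrightarrow v$ swap (this is Corollary~\ref{corr_symmetric_function}), and settles the maximality at $ {\eta}=1$, the ULA minimum, and the length-versus-area comparison by numerical evaluation (Figs.~\ref{Fig_x_3dB_mult} and~\ref{Fig_Beamdepth_vs_c}), since no closed form for $a_{3{\rm dB}}$ exists. Your explicit monotonicity of ${\rm BD}_{3 {\rm dB}}^{\rm rect}$ in $P$ is actually stated more carefully than the paper's wording, and the concavity criterion along the arc $u^2+v^2=k$ is a nice refinement, but since you too ultimately fall back on the same numerical evidence for the hard step, the substance of the argument coincides with the paper's.
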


\section{Gain and Beam Depth of Rectangular Arrays With a Non-broadside Transmitter}
\label{Sect_BD_Rect_NonBroadside}

\begin{figure}
    \centering
    \includegraphics[width=0.55\textwidth]{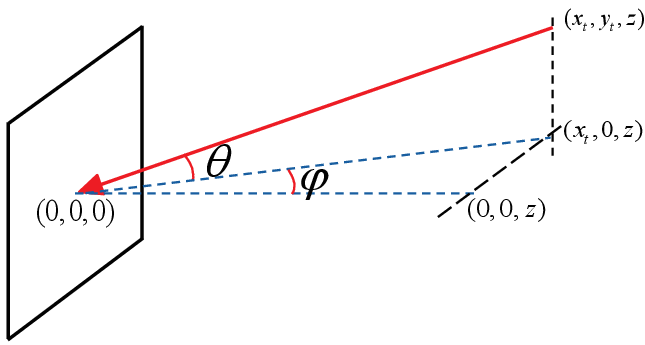}
    \caption{The non-broadside transmitter with a rectangular array at the receiver. The transmitter is located at the distance $d$ with an azimuth angle of $\varphi$  {and an elevation angle of $\theta$} from the center of the receiver array.}
    \label{Fig_non_broadside}
\end{figure}

In this section, we consider rectangular UPAs with a transmitter located in a non-broadside direction. Consequently, there will be reductions in the effective area due to directivity, as well as polarization losses.  When denoting the non-zero azimuth angle as $\varphi$, the location coordinate of the transmitter is  $(x_t,y_t,z)$, where $x_t=d {\sf sin}(\varphi) {\sf cos}(\theta)$, $y_t=d {\sf sin}(\theta) $, and $z = d{\sf cos}(\theta)  {\sf cos}(\varphi)$, and $d$ represents the distance of the transmitter to the array's center, as shown in Fig.~\ref{Fig_non_broadside}. 
The analysis is carried out by considering a non-zero azimuth angle, but the same methodology can also be applied to consider a non-zero elevation angle.
We utilized the rectangular array design discussed in the preceding section.

We first evaluate the first-order Taylor approximation of the Euclidean distance between the transmitter and receiver when the transmitter is off the center of the array's broadside direction. The Euclidean distance determines the phase variations in each antenna element of the array. Therefore, an accurate approximation of the Euclidean distance results in an accurate approximation of the array gain. The Euclidean distance between the transmitter and an antenna element located at $(x_n,y_m,0)$ is calculated as $f_{n,m}(z) = \sqrt{(x_{n}-x_t)^2 + y_m^2 +z^2} = z\sqrt{1+ \frac{(x_n-x_t)^2 + y_m^2}{z^2} } $. The first-order Taylor approximation of the Euclidean distance is $\tilde{f}_{n,m}(z) = z \left(1 + \frac{( x_n  -x_t)^2 +y_m^2}{2z} \right)$. This approximation is tight when $\frac{(x_n-x_t)^2 + y_m^2}{z^2} $ is small (e.g., $\leq 0.1745$) as mentioned in Section~\ref{Sect_Beamdepth_Rect_Broadside}. However, since $z= d \cos(\varphi)$, the approximation will be significantly imprecise when $\varphi$ approaches $\pm \pi/2$. We can mathematically express the above case as $\lim_{\varphi \rightarrow \pm \frac{\pi}{2}} \frac{\left(x_n-x_t\right)^2 + y_m^2}{\left(d \cos(\varphi)\right)^2} = \infty$. To evaluate this, we first calculate the mean absolute error of the antenna elements in the array as  $\epsilon = \frac{1}{N} \cdot \sum_{n=1}^{\sqrt{N}} \sum_{m=1}^{\sqrt{N}}\left| f_{n,m}(z)- \tilde{f}_{n,m}(z)\right|$, where the subscripts $n,m \in \{1,\ldots,\sqrt{N}\}$ are the location indices of the antennas and  $z = d_B$. Then, we plot $\epsilon$ with respect to $\varphi$ demonstrated by the blue curve in Fig.~\ref{Fig_Fresnel_approx_theta}, which we refer to as the \emph{direct approximation}. The curve implies that the approximation is below the threshold $3.5 \cdot 10^{-3}$ for $-\pi/16 \leq \varphi \leq \pi/16$. As $\varphi$ approaches $\pm \pi/8$, the approximation error exhibits a steep incline. We obtain a better approximation by first manipulating the expression of the Euclidean distance between the transmitter and receiver as
$z\sqrt{1+ \frac{\left(x-x_t\right)^2 + y^2}{z^2} } = z \sqrt{\underbrace{1 + \tan^2(\varphi)}_{1/\cos^2(\varphi)} + \frac{x^2 + y^2}{\left( d \cos(\varphi)\right)^2} - \frac{2x \tan(\varphi)}{d \cos(\varphi)} } =$ 
$ \underbrace{d\sqrt{1 + \left(\frac{x^2+y^2-2xd\sin(\varphi)}{d^2}\right)}}_{ f_{n,m}(d)}$.
Then, we approximate it using the first-order Taylor approximation as 
\begin{equation}\label{eq_IV_FresnelApp}
   \tilde{f}_{n,m}(d) =  d + \frac{x^2 + y^2 -2xd \sin(\varphi)}{2d}.
\end{equation}
Now, the approximation is tight when $\frac{x^2+y^2-2xd\sin(\varphi)}{d^2}$ is small and there is no $\cos(\varphi)$-term in the denominator in contrast to $\tilde{f}_{n,m}(z)$. We refer to this approximation as the \emph{indirect approximation}, since we manipulate the Euclidean distance expression before then applying the Taylor approximation. The corresponding mean absolute error is $\epsilon = \frac{1}{N} \cdot \sum_{n=1}^{\sqrt{N}} \sum_{m=1}^{\sqrt{N}}\left| f_{n,m}(d)- \tilde{f}_{n,m}(d)\right|$. The new approximation is evaluated in Fig.~\ref{Fig_Fresnel_approx_theta}, where it shows that $\tilde{f}_{n,m}(d) $ (indirect approximation) is much more precise than $\tilde{f}_{n,m}(z) $ (direct approximation). For the remainder of this section, we will use the indirect approximation to approximate the Euclidean distance between the transmitter and receiver.
\begin{figure}
    \centering
    {\includegraphics[width=0.96\textwidth]{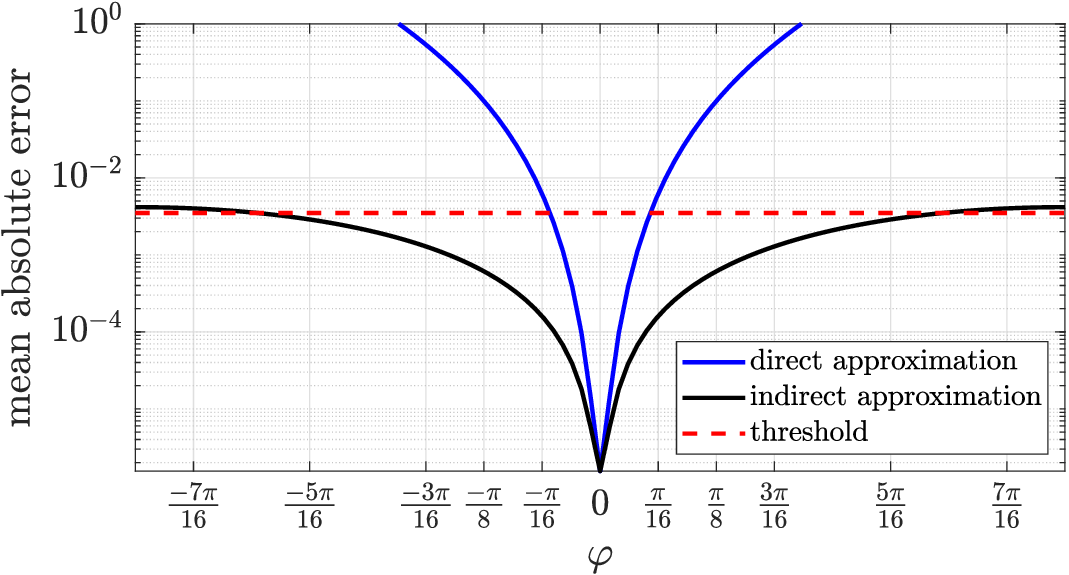}}
    \caption{The evaluation of the Euclidean distance approximations for various azimuth angles $\varphi$. The mean absolute error is obtained by computing the average absolute difference between the exact Euclidean distance and its approximation. We set $f_c = 3$ GHz, $N=10^4$, $ {\eta}=1$, $D=\lambda/4$, and $d=2500 d_F$. The threshold is set to be $3.5 \cdot 10^{-3}$.  }
    \label{Fig_Fresnel_approx_theta}
\end{figure}

\begin{figure} 
    \centering
    {\includegraphics[width=\textwidth]{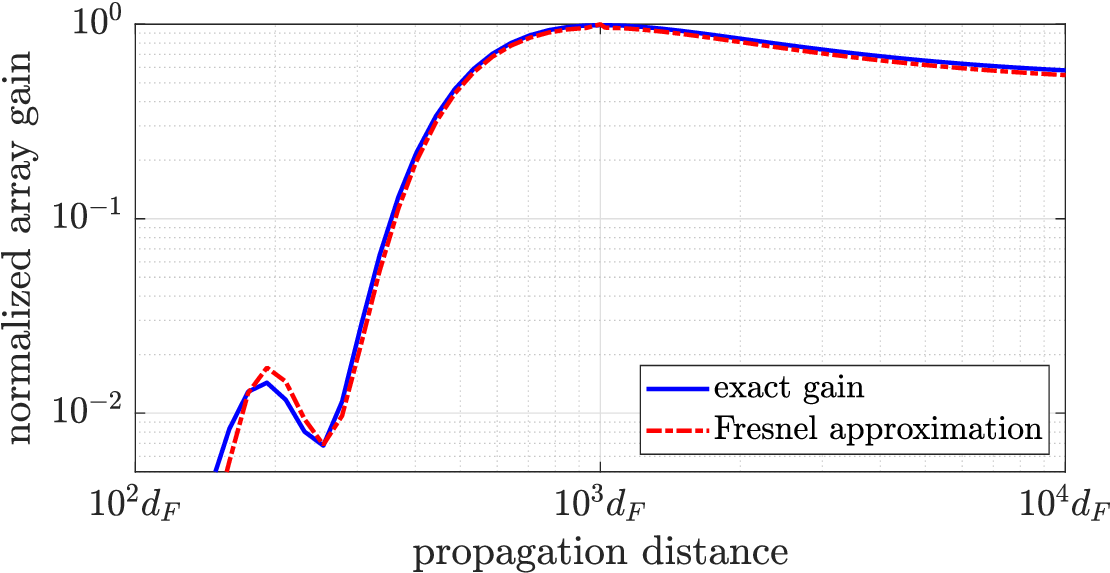}}
    \caption{The Fresnel approximation of the normalized array gain for a non-broadside transmitter with $\varphi=  \pi/16$ and focus on $(0,0,1000d_F)$. We set $ {\eta}=1, N=10^4,$ and $ D=\lambda/4$.} \label{Fig_Fresnel_vs_Exact_theta}
\end{figure}
In the following, we will use the approximation  $\tilde{f}_{n,m}(d) $  to derive an analytical normalized array gain approximation. The Fresnel approximation for the rectangular array with a non-broadside transmitter is written as 
\begin{equation}\label{eq_IV_FresnelApprox}
    E(x,y) \approx \frac{E_0}{\sqrt{4 \pi} z} e^{-j \frac{2 \pi}{\lambda} \left( d + \frac{x^2 + y^2 -2xd \sin(\varphi)}{2d} \right)}.
\end{equation}
We assume that the matched filtering of the array is directed towards the point $(0, 0, F)$, which could potentially differ from the actual transmitter location  $(x_t,0,z)$.  We utilize the Fresnel approximation in \eqref{eq_IV_FresnelApprox} to calculate the normalized array gain by injecting the phase-shift $e^{+j \frac{2\pi}{\lambda} \left(\frac{x^2}{2F} + \frac{y^2}{2F}\right)}$ into the integrals in \eqref{eq_II_NormalizedArrayGain}.
\begin{Theorem}\label{Fresnel_non_Broadside}
When the transmitter is located at $(x_t,0,z)$ and the matched filtering is focused on $(0,0,F)$, the Fresnel approximation of the normalized array gain for the generalized
rectangular array becomes
\begin{equation}\label{eq_IV_ApproxGain_non_Broadside}
\begin{split}
 &\tilde{G}_{\rm rect}( {\eta},\varphi) = \frac{ \left(( {{C}^{2}}( p )+{{S}^{2}}( p ) \right) } {( 2  {\eta} {{p}^{2}} )^{2}} \times \\
    &\left( \left( C(  {\eta} p+q)+C(  {\eta} p-q) \right) ^{2}  + \left( S( {\eta} p+q)+S(  {\eta} p-q) \right)^{2}   \right),
\end{split}
\end{equation}
 where $p=\frac{1}{2} \sqrt{\frac{{{d}_{FA}}}{{{d}_{\rm eff}}(1+{{ {\eta}}^{2}})}}$, $q=\frac{{{x}_{t}}}{d \pi } \sqrt{2\lambda {{d}_{\rm eff}}}=\frac{{\sin (\varphi)}}{\pi } \sqrt{2\lambda {{d}_{\rm eff}}}$, and $d_{\rm eff} = \frac{F d}{|F-d|}$.
\end{Theorem}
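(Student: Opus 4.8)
The plan is to substitute the non-broadside Fresnel field \eqref{eq_IV_FresnelApprox} together with the injected matched-filter phase $e^{+j\frac{2\pi}{\lambda}(\frac{x^2}{2F}+\frac{y^2}{2F})}$ into the normalized array gain \eqref{eq_II_NormalizedArrayGain}, and to evaluate the resulting aperture integrals in closed form via Fresnel integrals, exactly as in the broadside proof of Theorem~\ref{Fresnel_Approx_Rect}. In the continuous-aperture limit used throughout, the Fresnel amplitude $\frac{E_0}{\sqrt{4\pi}z}$ is constant over the aperture, so the magnitude in the denominator integrates to $\frac{E_0^2}{4\pi z^2}A_{\rm array}$ and the gain collapses to $\tilde{G}_{\rm rect}(\eta,\varphi)=\big|\int e^{j\phi(x,y)}\,dx\,dy\big|^2/A_{\rm array}^2$, where the aperture spans $x\in[-L_x,L_x]$, $y\in[-L_y,L_y]$ with $L_x=\sqrt{N}\,w/2$ and $L_y=\sqrt{N}\,\ell/2$ read off from \eqref{eq_III_AntennaElementCoordinate}. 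Collecting the exponents of the field and the filter, the residual phase separates as $\phi(x,y)=\kappa(x^2+y^2)+\beta x$, with $\kappa=\frac{\pi}{\lambda}\big(\frac1F-\frac1d\big)$ so that $|\kappa|=\frac{\pi}{\lambda d_{\rm eff}}$, and $\beta=\frac{2\pi}{\lambda}\sin(\varphi)$. Because the azimuth offset lies along the $x$-axis, the $y$-integral is identical to the broadside case and, after the substitution mapping $\kappa y^2$ to $\frac{\pi}{2}u^2$, yields the factor $C^2(p)+S^2(p)$.

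The only new ingredient is the $x$-integral $I_x=\int_{-L_x}^{L_x}e^{j(\kappa x^2+\beta x)}\,dx$, whose linear term $\beta x$ encodes the non-broadside direction. I would complete the square, $\kappa x^2+\beta x=\kappa\big(x+\frac{\beta}{2\kappa}\big)^2-\frac{\beta^2}{4\kappa}$, where the constant $-\frac{\beta^2}{4\kappa}$ is a unit-modulus factor that drops out of $|I_x|^2$. Shifting the integration variable by $\frac{\beta}{2\kappa}$ (whose magnitude equals $d_{\rm eff}\sin(\varphi)$) and applying the same Fresnel substitution as for $y$ turns $I_x$ into an integral of $e^{\pm j\frac{\pi}{2}u^2}$ over the \emph{asymmetric} interval $[-\eta p+q,\,\eta p+q]$, where $\eta p$ is the scaled half-width and $q$ is the scaled center shift. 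Writing the result as $[C(u)+jS(u)]$ evaluated at the two endpoints and invoking the oddness of the Fresnel integrals, $C(-t)=-C(t)$ and $S(-t)=-S(t)$, converts the difference of endpoint values into the \emph{sums} $C(\eta p+q)+C(\eta p-q)$ and $S(\eta p+q)+S(\eta p-q)$, so that $|I_x|^2\propto\big(C(\eta p+q)+C(\eta p-q)\big)^2+\big(S(\eta p+q)+S(\eta p-q)\big)^2$. This completion-of-square-and-symmetry step is the heart of the argument and the main place where sign bookkeeping must be handled with care: the sign of $\kappa$ for $F\lessgtr d$ and the direction of the shift both enter, but fortunately all such signs are absorbed by the modulus.

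Finally, I would assemble $\tilde{G}_{\rm rect}=|I_x|^2|I_y|^2/(16L_x^2L_y^2)$ and substitute the geometry. Using $w=\eta\ell$, $\ell=D/\sqrt{1+\eta^2}$, and $d_{FA}=Nd_F$ with $d_F=2D^2/\lambda$, the scaled half-widths become $L_y\sqrt{2/(\lambda d_{\rm eff})}=p$ and $L_x\sqrt{2/(\lambda d_{\rm eff})}=\eta p$ with $p=\frac12\sqrt{d_{FA}/(d_{\rm eff}(1+\eta^2))}$, while $\frac{\beta}{2\kappa}\sqrt{2/(\lambda d_{\rm eff})}$ produces the offset $q$. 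The Jacobian factors $\lambda d_{\rm eff}$ in the numerator and the $L_x^2L_y^2$ in the denominator then combine to leave the normalization $(2\eta p^2)^2$, giving exactly \eqref{eq_IV_ApproxGain_non_Broadside}. As a consistency check, setting $\varphi=0$ gives $q=0$, so $C(\eta p+q)+C(\eta p-q)=2C(\eta p)$, the bracket reduces to $4\big(C^2(\eta p)+S^2(\eta p)\big)$, and the whole expression collapses to $\big(C^2(\eta p)+S^2(\eta p)\big)\big(C^2(p)+S^2(p)\big)/(\eta p^2)^2$, which is precisely Theorem~\ref{Fresnel_Approx_Rect} with $z_{\rm eff}$ replaced by $d_{\rm eff}$. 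The remaining obstacle is purely bookkeeping: tracking the scaling constants through the two Fresnel substitutions so that the prefactors cancel exactly to $(2\eta p^2)^2$ rather than to a constant multiple thereof.
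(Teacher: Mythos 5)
Your proposal is correct and follows essentially the same route as the paper's Appendix~\ref{App_Fresnel_Approx_Non_broadside}: substitute the non-broadside Fresnel field and the injected matched-filter phase into \eqref{eq_II_NormalizedArrayGain}, complete the square in $x$ to absorb the linear (azimuth) term, apply the Fresnel change of variables, and invoke the oddness $C(-t)=-C(t)$, $S(-t)=-S(t)$ to convert the endpoint differences over the asymmetric interval into the sums $C(\eta p+q)+C(\eta p-q)$ and $S(\eta p+q)+S(\eta p-q)$; the paper additionally keeps a general $y_t$ with a second offset $\tilde q$ and only sets $y_t=0$ at the end, which is immaterial for the stated theorem.

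One caveat you should make explicit rather than leave as "bookkeeping": if you carry your constants to the end, your offset is $\frac{\beta}{2\kappa}\sqrt{2/(\lambda d_{\rm eff})}=\sin(\varphi)\sqrt{2d_{\rm eff}/\lambda}$, which is dimensionless, as an argument of a Fresnel integral must be. This does \emph{not} equal the theorem's printed $q=\frac{\sin(\varphi)}{\pi}\sqrt{2\lambda d_{\rm eff}}$, which has units of length; the two differ by a factor $\pi/\lambda$. The discrepancy is traceable to the paper's own appendix, where the completed square $\bigl(\sqrt{\pi/(\lambda d_{\rm eff})}\,x+\frac{x_t}{d}\sqrt{\lambda d_{\rm eff}/\pi}\bigr)^2$ generates the cross term $2xx_t/d$, whereas the actual phase contains $2\pi x x_t/(\lambda d)$; the dropped factor $\pi/\lambda$ then propagates into the printed definition of $q$. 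Your version is the dimensionally consistent one, and since $q$ enters \eqref{eq_IV_ApproxGain_non_Broadside} only as a parameter, the functional form of the gain, the normalization $(2\eta p^2)^2$ (which your prefactor computation reproduces exactly, via $2\eta p^2=NA/(\lambda d_{\rm eff})$), and your $\varphi=0$ consistency check against Theorem~\ref{Fresnel_Approx_Rect} are all unaffected. In short, your proof is sound and matches the paper's method; just state the corrected offset $q=\sin(\varphi)\sqrt{2d_{\rm eff}/\lambda}$ explicitly instead of asserting agreement with the printed definition.
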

\begin{proof}
The proof is provided in Appendix \ref{App_Fresnel_Approx_Non_broadside}.
\end{proof}

Fig.~\ref{Fig_Fresnel_vs_Exact_theta} presents an evaluation of the accuracy of the analytical expression stated in Theorem~\ref{Fresnel_non_Broadside}.  It is observed that the Fresnel approximation in Theorem~\ref{Fresnel_non_Broadside} closely approximates the normalized array gain.  The normalized antenna gain is computed numerically using \eqref{eq_II_NormalizedArrayGain}, $E(x,y)$ as in \eqref{eq_II_ElectField}, and an injected phase-shift. Theorem~\ref{Fresnel_non_Broadside} provides a general statement about arrays with arbitrary transmitter orientations. This is in contrast to Theorem~\ref{Fresnel_Approx_Rect}, in which the transmitter orientation is restricted to the broadside direction. 

Theorem~\ref{Fresnel_Approx_Rect} can be derived as a special case of Theorem ~\ref{Fresnel_non_Broadside}, proved as follows.
\begin{Corollary}\label{cor_theta_0}
The normalized array gain approximation in \eqref{eq_IV_ApproxGain_non_Broadside} reduces to the one in \eqref{eq_III_ApproxGainRect} when $\varphi=0$, indicating a transmitter with a broadside direction to the receiver.
\end{Corollary}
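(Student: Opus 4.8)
The plan is to show that substituting $\varphi=0$ into the parameters $p$, $q$, and $d_{\rm eff}$ of Theorem~\ref{Fresnel_non_Broadside} collapses the general expression \eqref{eq_IV_ApproxGain_non_Broadside} term-by-term into \eqref{eq_III_ApproxGainRect}. The whole argument is a direct substitution, so no new approximation or estimate is needed; I would only need to track how the two sets of parameters relate.

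First I would record the two simplifications that $\varphi=0$ forces. Since $q=\frac{\sin(\varphi)}{\pi}\sqrt{2\lambda d_{\rm eff}}$, the vanishing of $\sin(\varphi)$ gives $q=0$ immediately (equivalently, $x_t=d\sin(\varphi)\cos(\theta)=0$). Moreover, with the transmitter returned to the broadside axis the distance to the array center coincides with the axial coordinate, $d=z$ (recall $z=d\cos(\theta)\cos(\varphi)$ and $y_t=0$ in Theorem~\ref{Fresnel_non_Broadside}), so the effective distance reduces to $d_{\rm eff}=\frac{Fd}{|F-d|}=\frac{Fz}{|F-z|}=z_{\rm eff}$, which is exactly the parameter of Theorem~\ref{Fresnel_Approx_Rect}.

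Next I would identify $p$ with $\sqrt{a}$. From $p=\frac{1}{2}\sqrt{\frac{d_{FA}}{d_{\rm eff}(1+\eta^2)}}$ together with $d_{\rm eff}=z_{\rm eff}$, one has $p=\frac{1}{2}\sqrt{\frac{d_{FA}}{z_{\rm eff}(1+\eta^2)}}=\sqrt{\frac{d_{FA}}{4z_{\rm eff}(1+\eta^2)}}=\sqrt{a}$. With $q=0$ the arguments in the second factor of \eqref{eq_IV_ApproxGain_non_Broadside} coincide, $C(\eta p\pm q)=C(\eta p)$ and $S(\eta p\pm q)=S(\eta p)$, so that $\left(C(\eta p+q)+C(\eta p-q)\right)^2+\left(S(\eta p+q)+S(\eta p-q)\right)^2=4\left(C^2(\eta p)+S^2(\eta p)\right)$.

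Finally I would assemble the pieces. Substituting this simplified second factor together with the denominator $(2\eta p^2)^2=4\eta^2 p^4$ into \eqref{eq_IV_ApproxGain_non_Broadside} yields $\tilde{G}_{\rm rect}(\eta,0)=\frac{(C^2(p)+S^2(p))(C^2(\eta p)+S^2(\eta p))}{\eta^2 p^4}$, and replacing $p$ by $\sqrt{a}$ (so that $p^4=a^2$) gives exactly $\frac{(C^2(\eta\sqrt{a})+S^2(\eta\sqrt{a}))(C^2(\sqrt{a})+S^2(\sqrt{a}))}{(\eta a)^2}$, which is \eqref{eq_III_ApproxGainRect}. The only step that is not purely mechanical, and hence the one I would state most carefully, is the identification $d_{\rm eff}=z_{\rm eff}$: it rests on recognizing that at $\varphi=0$ the broadside geometry forces $d=z$, after which every remaining manipulation is elementary algebra on the Fresnel-integral arguments.
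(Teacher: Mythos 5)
Your proposal is correct and follows essentially the same route as the paper's own proof: set $\varphi=0$, deduce $q=0$ and $z=d$ (hence $d_{\rm eff}=z_{\rm eff}$), identify $p=\sqrt{a}$, and reduce \eqref{eq_IV_ApproxGain_non_Broadside} to \eqref{eq_III_ApproxGainRect} by elementary algebra on the Fresnel integrals. In fact you are slightly more careful than the paper, which loosely writes ``substituting $p=a$'' where the correct identification (and the one your algebra uses, via $p^4=a^2$) is $p=\sqrt{a}$.
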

\begin{proof}
 When $\varphi = 0$, it follows that  $\sin(\varphi) =0$  and $\cos(\varphi) =1$. Thus, $x_t =0$ while $z=d$. Then, $p=\sqrt{\frac{d_{FA}}{4d_{\rm eff}(1+ {\eta}^2)}}$ is exactly the same expression as $ a $. Moreover, since  $x_t=0$, $q=0$. Substituting  $p =  a $ and $q=0$ into \eqref{eq_IV_ApproxGain_non_Broadside} yields \eqref{eq_III_ApproxGainRect} which completes the proof.
\end{proof}

The array gain is irrespective of the azimuth angle $\varphi$ when $ {\eta} \rightarrow 0$ (a vertical ULA). Therefore, the BD is independent of $\varphi$. We prove this statement as follows.

\begin{Corollary}\label{cor_c_0}
For $ {\eta} \rightarrow 0$, the normalized array gain approximation is independent of the azimuth angle $\varphi$. Thus, the BD is also independent of $\varphi$.
\end{Corollary}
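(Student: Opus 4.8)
The plan is to evaluate the $\eta \to 0$ limit of the gain expression in \eqref{eq_IV_ApproxGain_non_Broadside} directly. The first thing to observe is that the azimuth angle enters only through $q = \frac{\sin(\varphi)}{\pi}\sqrt{2\lambda d_{\rm eff}}$, whereas $p = \tfrac12\sqrt{d_{FA}/(d_{\rm eff}(1+\eta^2))}$ and $d_{\rm eff} = Fd/|F-d|$ carry no $\varphi$-dependence; in particular the prefactor $C^2(p)+S^2(p)$ is already $\varphi$-free. Hence it suffices to show that the $\varphi$-dependent factor $\left(C(\eta p + q) + C(\eta p - q)\right)^2 + \left(S(\eta p + q) + S(\eta p - q)\right)^2$, divided by $(2\eta p^2)^2$, tends to a quantity free of $q$. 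A plain substitution of $\eta = 0$ is inconclusive: since $\eta p \to 0$ and the Fresnel integrals are odd, $C(-x) = -C(x)$ and $S(-x) = -S(x)$, the sums $C(q)+C(-q)$ and $S(q)+S(-q)$ vanish, so the numerator and the denominator $(2\eta p^2)^2$ both vanish, giving a $0/0$ form that must be resolved by a Taylor expansion.

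The key step is therefore to expand the two bracketed sums to first order in the small quantity $u \triangleq \eta p$. Using $C'(x) = \cos(\pi x^2/2)$ and $S'(x) = \sin(\pi x^2/2)$ together with the parity relations (the integrands are even, so $C',S'$ are even while $C,S$ are odd), the zeroth-order terms cancel and the first-order terms add, giving $C(u+q) + C(u-q) = 2u\cos(\pi q^2/2) + O(u^3)$ and $S(u+q) + S(u-q) = 2u\sin(\pi q^2/2) + O(u^3)$. Substituting these into the $\varphi$-dependent factor yields $4u^2\left(\cos^2(\pi q^2/2) + \sin^2(\pi q^2/2)\right) + O(u^4) = 4\eta^2 p^2 + O(\eta^4)$, where the Pythagorean identity is exactly what annihilates every trace of $q$, and hence of $\varphi$.

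Dividing by the denominator $(2\eta p^2)^2 = 4\eta^2 p^4$ and letting $\eta \to 0$ leaves $1/p^2$, so the full gain converges to $\tilde{G}_{\rm rect}(0,\varphi) = (C^2(p)+S^2(p))/p^2$ with $p = \tfrac12\sqrt{d_{FA}/d_{\rm eff}}$. Since this limit contains no $q$, it is manifestly independent of $\varphi$. The claim about the BD then follows at once: the $3$ dB BD is defined purely through the level set $\{z : \tilde{G}_{\rm rect} \ge \tfrac12\}$ of the gain viewed as a function of distance, so if the gain profile does not depend on $\varphi$, neither can the BD.

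The only delicate point in this argument is the bookkeeping of the $0/0$ cancellation: one must carry the expansion to exactly first order in $\eta p$ and invoke the correct parity of $C,S$ and of their derivatives so that the leading nonzero contribution is $O(\eta^2)$ in both numerator and denominator. Once that is done, the identity $C'^2+S'^2 = 1$ does all the remaining work and the rest is routine.
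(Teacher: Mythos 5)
Your proof is correct, and it follows the same overall decomposition as the paper: factor the gain in \eqref{eq_IV_ApproxGain_non_Broadside} into the $\varphi$-free prefactor $C^2(p)+S^2(p)$ and the $\varphi$-dependent ratio, then resolve the $0/0$ indeterminacy of that ratio as $\eta \to 0$. Where you genuinely differ is in how that indeterminacy is resolved. The paper (see \eqref{limit_cor_c_0} and \eqref{limit_cor_c_0_3}) replaces the three distinct vanishing quantities $C(\eta p+q)-C(-\eta p+q)$, $S(\eta p+q)-S(-\eta p+q)$, and $\eta/(1+\eta^2)$ by a single symbol $\Delta$ and cancels them against each other, which is informal: these infinitesimals are not equal, and treating them as equal produces the limit $8\left(d_{\rm eff}/d_{FA}\right)^2$ for the ratio. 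Your first-order Taylor expansion, using $C'(x)=\cos(\pi x^2/2)$, $S'(x)=\sin(\pi x^2/2)$, the parity of $C,S$ versus $C',S'$, and the identity $(C')^2+(S')^2=1$, is the rigorous version of that step, and it yields the ratio's limit $1/p^2 \to 4d_{\rm eff}/d_{FA}$. Your constant is in fact the correct one: it matches the $\eta\to 0$ limit of Theorem~\ref{Fresnel_Approx_Rect} at $\varphi=0$ (where $d_{\rm eff}=z_{\rm eff}$), obtained from $C(x)\approx x$ and $S(x)=O(x^3)$ for small $x$, whereas the paper's $8\left(d_{\rm eff}/d_{FA}\right)^2$ does not. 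Since both expressions are free of $q$, the corollary's conclusion ($\varphi$-independence of the gain, hence of the BD) is unaffected either way; but your argument is the one that withstands scrutiny, and it makes explicit the mechanism by which $\varphi$ disappears --- the Pythagorean identity annihilating the $q$-dependence at leading order --- rather than having it vanish by notational bookkeeping.
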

\begin{proof}
For $ {\eta} \rightarrow 0$, we mathematically express $\tilde{G}_{\rm rect}( {\eta},\varphi)$ as in \eqref{limit_cor_c_0}. We obtain the result of the first limit as $C^2\left(\sqrt{\frac{d_{FA}}{4 d_{\rm eff}}}\right) + S^2\left(\sqrt{\frac{d_{FA}}{4 d_{\rm eff}}}\right)$, which is independent of $\varphi$. To compute the second limit, we first consider $\lim_{ {\eta}\rightarrow 0} C( {\eta} p+q) - \lim_{ {\eta}\rightarrow 0} C(-  {\eta} p+q)$ = $\Delta$, where $\Delta$ is a very small value approaching zero. The same idea applies to $\lim_{ {\eta}\rightarrow 0} S( {\eta} p+q) - \lim_{ {\eta}\rightarrow 0} S(-  {\eta} p+q)=\Delta$. 
 Due to the odd function property of the Fresnel integrals, we can rewrite the second limit in \eqref{limit_cor_c_0} as 
\begin{figure*}
\begin{align}\label{limit_cor_c_0}
\notag
& \lim_{ {\eta}\rightarrow 0} \frac{ \left( {{C}^{2}}\left( p \right)+{{S}^{2}}\left( p \right) \right)}{{\left( 2  {\eta} {{p}^{2}} \right)}^{2}} \big({{\left( C\left(  {\eta} p+q\right)+C\left(  {\eta} p-q\right) \right)}^{2}}  + {\left( S\left(  {\eta} p+q\right)+S\left(  {\eta} p-q\right) \right)}^{2}\big). \\
&=\lim_{ {\eta}\rightarrow 0} \left( {{C}^{2}}\left( p \right)+{{S}^{2}}\left( p \right) \right) \lim_{ {\eta}\rightarrow 0}
    \frac{\big({{\left( C\left(  {\eta} p+q\right)+C\left(  {\eta} p-q\right) \right)}^{2}}  + {\left( S\left(  {\eta} p+q\right)+S\left(  {\eta} p-q\right) \right)}^{2}\big)}{{\left( 2  {\eta} {{p}^{2}} \right)}^{2}}.
\end{align}
\begin{align}\label{limit_cor_c_0_3}
& \lim_{ {\eta}\rightarrow 0}
\notag
    \frac{{{\left( C\left(  {\eta} p+q\right)-C\left( -  {\eta} p+q\right) \right)}^{2}}  + {\left( S\left(  {\eta} p+q\right)-S\left( -  {\eta} p+q\right) \right)}^{2}}{{\left( 2  {\eta} {{p}^{2}} \right)}^{2}}
    \\ \notag
&= 
\frac{{{\left(\lim_{ {\eta}\rightarrow 0} C\left(  {\eta} p+q\right)- \lim_{ {\eta}\rightarrow 0} C\left( -  {\eta} p+q\right) \right)}^{2}} + {\left( \lim_{ {\eta}\rightarrow 0} S\left(  {\eta} p+q\right)- \lim_{ {\eta}\rightarrow 0} S\left( -  {\eta} p+q\right) \right)}^{2}}{{\left(\frac{d_{FA} }{2 d_{\rm eff} } 
 \lim_{ {\eta}\rightarrow 0}     \frac{ {\eta}}{(1+ {\eta}^2)} \right)}^{2}}    \\
&\approx \frac{\Delta^2 +  \Delta^2}{\Delta^2 } \left(\frac{2d_{\rm eff}}{ d_{FA}}\right)^2 = 8\left(\frac{d_{\rm eff}}{d_{FA}}\right)^2.
\end{align}
\hrulefill
\end{figure*}
   \eqref{limit_cor_c_0_3} which is also independent of $\varphi$. Therefore, \eqref{limit_cor_c_0} is independent of $\varphi$ which completes the proof.
\end{proof}

The reason for this result is that a vertical ULA with isotropic antennas has the same spatial resolution in all directions, as can also be observed from rotational invariance. A similar result can be proved for a horizontal ULA, which provides the same BD for any elevation angle if the azimuth angle is $\varphi=0$.

In what follows, we conduct a numerical analysis to compute the $3$ dB BD of the rectangular array in the presence of a non-broadside transmitter with respect to its width/length proportion $ {\eta}$. We consider both fixed area and length scenarios.

\subsection{Fixed Array Aperture Area vs. Fixed Array Aperture Length}

As mentioned in Section~\ref{S_Rect_length_vs_area}, the array's aperture area can be kept constant at $A_{\rm array} = N\lambda^2/32$ while adjusting the shape of antenna elements through $ {\eta}$. 
The aperture length then varies with $ {\eta}$ as $D_{\rm array} = A_{\rm array}(1+ {\eta}^2)/ {\eta}$. 
Alternatively, fixing $D_{\rm array} = \sqrt{N}\lambda/4$ makes the aperture area $A_{\rm array} = {\eta}(D_{\rm array})^2/(1+ {\eta}^2)$ dependent on $ {\eta}$. Both cases are analyzed below. 

\begin{figure}
\centering
\subfloat[The BD of rectangular arrays with a fixed area versus the azimuth angle.]
{\includegraphics[width=\textwidth]{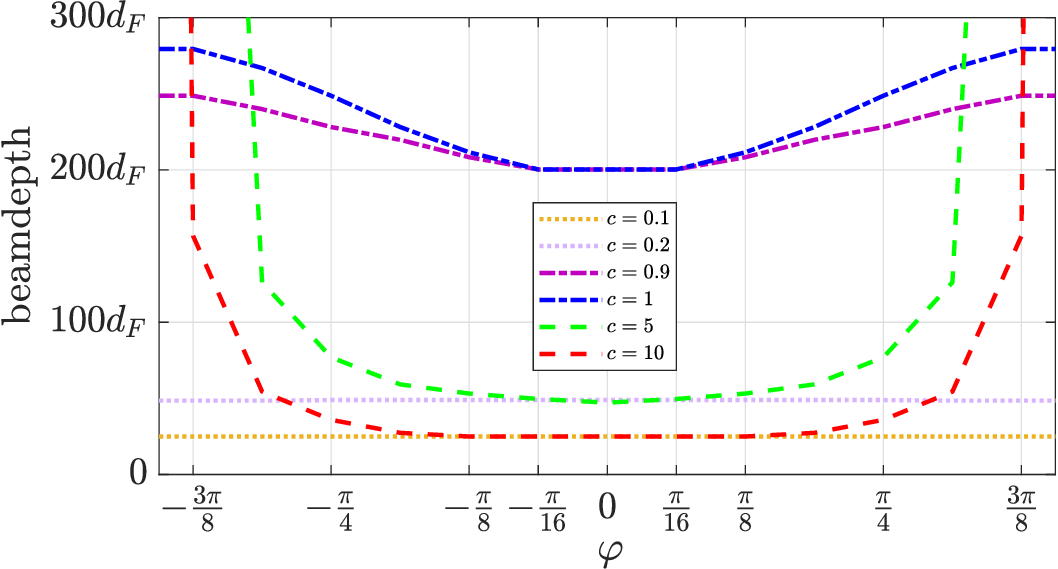}}\hfill
\centering
\subfloat[The BD of rectangular arrays with a fixed length versus the azimuth angle.]
{\includegraphics[width=\textwidth]{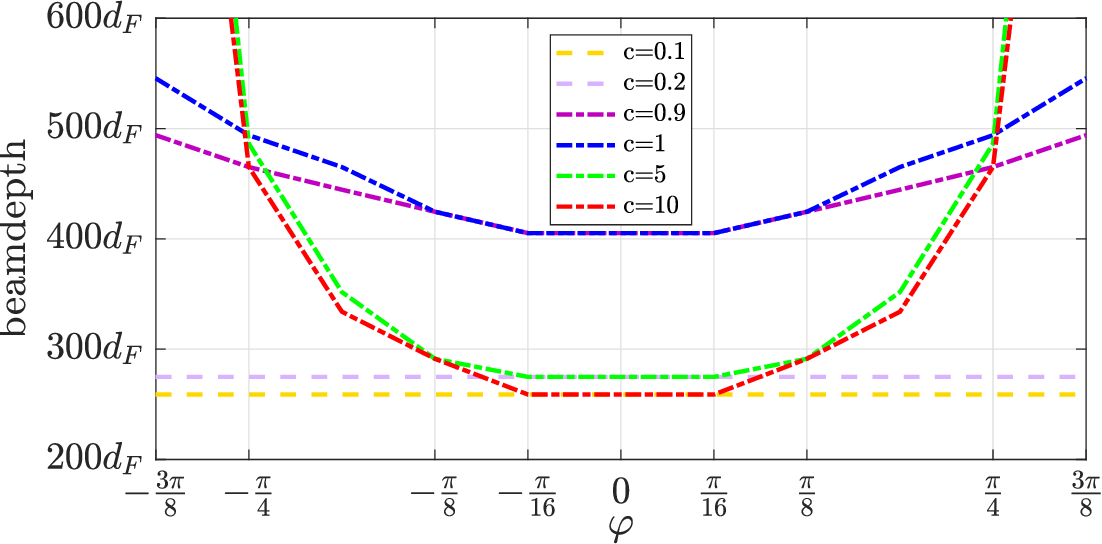}}\hfill
\centering
\caption{The BD for rectangular array shapes that are either tall ($ {\eta}<1$), square ($ {\eta}=1$), or wide ($ {\eta}>1$)  arrays. The BD is shown with respect to the azimuth angle for  $f_c = 3$ GHz and $N=10^4$. }
\label{Fig_Beamdepth_vs_theta}
\end{figure}

\begin{figure*}
\centering
\subfloat[$ {\eta}=0.1$]
{\includegraphics[scale=0.45]{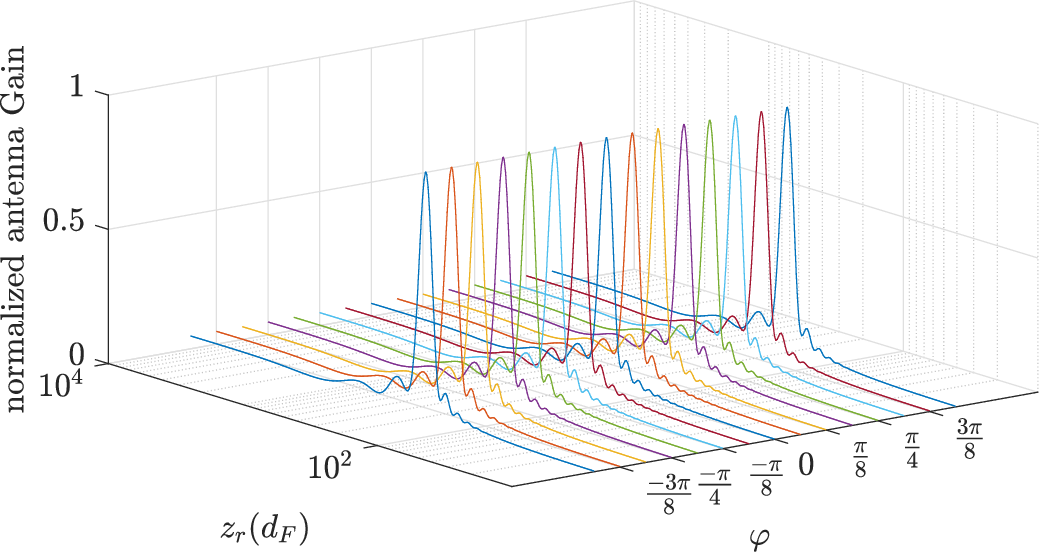}}\hfill
\centering
\subfloat[$ {\eta}=10$]
{\includegraphics[scale=0.45]{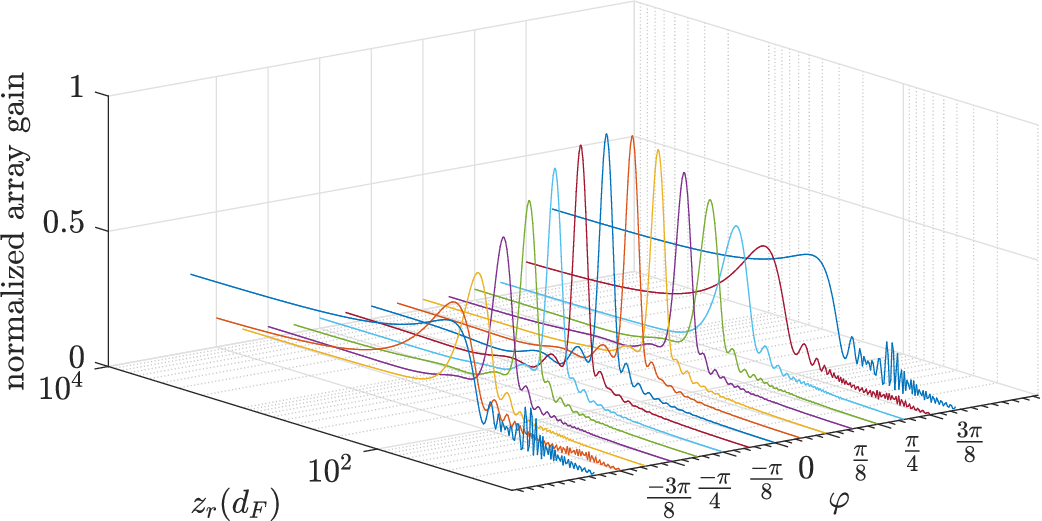}}
\caption{The normalized array gain for different azimuth angles $\varphi$ and propagation distance $d$. We set $f_c = 3$ GHz, $N=10^4$, and a fixed area of $N \lambda^2/32$.}
\label{Fig_array_gains}
\end{figure*}

In Fig.~\ref{Fig_Beamdepth_vs_theta}a, we plot the $3$ dB BD for various rectangular array shapes with a fixed area and a transmitter located in some angular direction in the range $-3 \pi/8 \leq \varphi \leq 3 \pi/8$. Let us consider a BD variation, defined as the difference between the largest and smallest BD values with respect to $\varphi$, for a particular value of $ {\eta}$. The BD variation becomes greater as $ {\eta}$ increases. When $ {\eta} \rightarrow 0$, the BD variation approaches zero, implying that the BD is unaffected by the azimuth angle $\varphi$. This finding agrees with Corollary~\ref{cor_c_0}. For a square array ($ {\eta}=1$), the BD is around $550 d_F$ for  $\varphi= \pm 3\pi/8$. In the case of higher $ {\eta}$ (e.g., $ {\eta}=10$), the BD becomes much larger than $600 d_F$. The latter indicates that the BD approaches infinity when $\varphi \rightarrow \pm \pi/2$. 
To illustrate this better, we plot Fig.~\ref{Fig_array_gains}, which shows the normalized antenna gains for a particular value of $ {\eta}$ with respect to propagation distances and azimuth angles. Fig.~\ref{Fig_array_gains}b shows a larger variation in normalized array gain compared to Figs.~\ref{Fig_array_gains}a. Furthermore, the BD for $\varphi= \pm {14 \pi}/{32}$ in Fig.~\ref{Fig_array_gains}b is infinite when $d \geq dB$, where the normalized array gain is around $0.3$, similar to its peak.

Let us now consider the scenario where the array's aperture length is fixed, but it depends on the value of $ {\eta}$. The $3$ dB BD for various rectangular array shapes is shown in Fig.~\ref{Fig_Beamdepth_vs_theta}b. Similar to the previous scenario with a fixed aperture area, we observe that the variation in BD increases as $ {\eta}$ increases, and the BD becomes independent of the azimuth angle $\varphi$ as $ {\eta}$ approaches zero. Furthermore, we observe that the highest BD values for arrays with $ {\eta}=0.1$ and $ {\eta}=1$ differ by approximately a factor of two, as the former has a value of around $550 d_F$ while the latter has a value of around $260 d_F$. This is in contrast to the scenario with a fixed array area, where the highest BD values for arrays with $ {\eta}=0.1$ and $ {\eta}=1$ differ by approximately a factor of eleven as the BD value of an array with $ {\eta}=0.1$ is around $560 d_F $ while the one with $ {\eta}=1$ is around $50 d_F $. This  indicates that the aperture length is a dominant factor determining the BD.  
\begin{Observation}
As $ {\eta}$ approaches zero, so that the array approaches a vertical ULA, the BD becomes independent of the azimuth angles. For larger values of $ {\eta}$, the BD increases with increasing azimuth angle $\varphi$, and approaches infinity as $\varphi$ approaches $\pm \frac{\pi}{2}$. Finally, as the value of $ {\eta}$ increases, the  BD variation also increases.
\end{Observation}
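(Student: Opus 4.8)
The plan is to handle the three assertions in turn, using the closed-form gain of Theorem~\ref{Fresnel_non_Broadside} together with the broadside reduction. The azimuth-independence for $\eta\to 0$ needs no new argument: Corollary~\ref{cor_c_0} shows that $\lim_{\eta\to 0}\tilde{G}_{\rm rect}(\eta,\varphi)$ contains no $\varphi$, and since the $3$\,dB BD is defined solely through the level set $\{d:\tilde{G}_{\rm rect}(\eta,\varphi)\geq 0.5\}$, a $\varphi$-free gain produces a $\varphi$-free BD. I would simply cite this, noting that it is the analytic anchor for the ``variation vanishes'' endpoint of the third claim as well.

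For the growth of the BD with $\varphi$ I would pass to the projected (broadside-equivalent) array of contribution~(4). Seen along the transmitter axis, the $x$-extent of the aperture is foreshortened by $\cos\varphi$ while the $y$-extent is untouched, so the array acts like a broadside array of effective aspect ratio $\eta_{\rm eff}=\eta\cos\varphi$ and per-element diagonal $D_{\rm eff}$ with $D_{\rm eff}^2=(w\cos\varphi)^2+\ell^2$, hence effective Fraunhofer distance $d_{FA}^{\rm eff}(\varphi)=\tfrac{2N}{\lambda}D_{\rm eff}^2$ that decreases in $|\varphi|$. Feeding $d_{FA}^{\rm eff}$ and $\eta_{\rm eff}$ into Theorem~\ref{BD_analytical_rectangular} gives the BD as $\tfrac{2 d_{FA}^{\rm eff}F^2 b}{(d_{FA}^{\rm eff})^2-F^2 b^2}$ with $b=4a_{3{\rm dB}}(1+\eta_{\rm eff}^2)$, whose denominator collapses to zero once $F$ reaches the shrinking effective finite BD limit $F^\star(\varphi)=d_{FA}^{\rm eff}(\varphi)/b$. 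I would therefore argue that as $|\varphi|$ grows the horizontal aperture shrinks, $F^\star(\varphi)$ drops below the fixed focus $F$, and the BD diverges; at $\varphi\to\pm\pi/2$ the $x$-projection vanishes, distance focusing is lost, and the profile over $d$ flattens, so the half-power set becomes the whole range and the BD is infinite. As a cross-check I would verify the same behavior directly in Theorem~\ref{Fresnel_non_Broadside}: $q=\tfrac{\sin\varphi}{\pi}\sqrt{2\lambda d_{\rm eff}}$ grows with $|\varphi|$, which flattens the $d$-profile of $\tilde{G}_{\rm rect}(\eta,\varphi)$ about its peak and widens the half-power interval.

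The monotone growth of the BD \emph{variation} with $\eta$ then follows from the same reduction, since the entire azimuth dependence enters through the factor $\cos\varphi$ multiplying the $x$-dimension: its leverage on $d_{FA}^{\rm eff}$ and $\eta_{\rm eff}$ scales with how much aperture lies along $x$, i.e.\ with $\eta$. For $\eta\to 0$ this leverage vanishes (recovering Corollary~\ref{cor_c_0}), while for large $\eta$ the foreshortening acts on the dominant dimension and produces a wide swing between the broadside value ${\rm BD}_{3{\rm dB}}^{\rm rect}(\varphi=0)$ and the divergent endfire value. I would make this quantitative by differentiating the BD formula through $d_{FA}^{\rm eff}(\varphi)$ and $\eta_{\rm eff}(\varphi)$ and showing the magnitude of the derivative is increasing in $\eta$.

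The main obstacle is rigor in the monotonicity steps. First, $a_{3{\rm dB}}$ is defined only implicitly and itself depends on $\eta_{\rm eff}$, so the product $a_{3{\rm dB}}(1+\eta_{\rm eff}^2)$ must be tracked as $\varphi$ varies; its behavior is exactly what Fig.~\ref{Fig_x_3dB_mult} supports but is awkward in closed form because the gain is assembled from non-monotone Fresnel integrals. Since $d_{FA}^{\rm eff}$ and $b$ both decrease with $|\varphi|$, their ratio $F^\star$ requires controlling competing effects rather than a one-line argument. Second, the projection equivalence must be accurate enough that $d_{FA}^{\rm eff}$ and $\eta_{\rm eff}$ are the correct surrogates, which rests on the indirect Fresnel approximation~\eqref{eq_IV_FresnelApp}. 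I would therefore isolate ``$a_{3{\rm dB}}(1+\eta_{\rm eff}^2)$ is controlled and $F^\star$ eventually falls below $F$'' as a lemma supported by the level-set structure of Theorem~\ref{Fresnel_non_Broadside}, and rely on the exact limits of Corollary~\ref{cor_c_0} to pin the endpoints.
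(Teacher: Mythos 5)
Your proposal is correct in substance and assembles the same three ingredients the paper itself relies on: Corollary~\ref{cor_c_0} for the $\eta\to 0$ endpoint, the projected-array equivalence for the growth and eventual divergence of the BD with $\varphi$, and Theorem~\ref{BD_analytical_rectangular} evaluated at effective parameters. The difference is one of emphasis and of what carries the burden of proof. In the paper this statement is an Observation whose second and third claims are established \emph{numerically}: the BD-versus-$\varphi$ curves in Fig.~\ref{Fig_Beamdepth_vs_theta} and the gain maps in Fig.~\ref{Fig_array_gains} are computed from the exact gain \eqref{eq_II_NormalizedArrayGain}, and the projection argument with its endfire consequence is presented only afterwards as qualitative insight (Observation~\ref{Projected_array_non_broadside}), not as the derivation. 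You instead fold the projection into the argument itself and attempt to make the monotonicity in $\varphi$ and the growth of the BD variation in $\eta$ analytic by differentiating through $d_{FA}^{\rm eff}(\varphi)$ and $\eta_{\rm eff}(\varphi)=\eta\cos\varphi$. That is a genuine strengthening in principle, and your identification of the mechanism---the shrinking effective aperture pushes the finite BD limit $F^\star(\varphi)=d_{FA}^{\rm eff}/b$ below the fixed focus $F$---is exactly right. But, as you yourself flag, it cannot be closed rigorously: $a_{3\rm dB}(\cdot)$ is defined only implicitly and is characterized in the paper purely numerically (Fig.~\ref{Fig_x_3dB_mult}), so the claim that $F^\star(\varphi)$ decreases (i.e., that $b$ shrinks more slowly than $d_{FA}^{\rm eff}$) imports the same numerically observed monotonicity of $a_{3\rm dB}$ that the paper's figures supply directly; your proposed lemma is precisely where that irreducible numerical input sits. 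One further caution: your cross-check that growing $q=\frac{\sin\varphi}{\pi}\sqrt{2\lambda d_{\rm eff}}$ ``flattens'' the profile is heuristic, because increasing $q$ also lowers the peak of $\tilde{G}_{\rm rect}(\eta,\varphi)$, and the $3$ dB BD is defined relative to that peak, so a flatter absolute profile does not by itself imply a wider half-power set. In short, your route supplies the mechanism and the paper's route supplies the evidence; neither is fully analytic, and the two are complementary rather than divergent.
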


\subsection{Projected Rectangular Array}

In this section, we investigate the feasibility of approximating the gain that a rectangular array achieves with non-broadside transmission using a smaller array that is rotated for broadside transmission.
Different from the precise analysis in the preceding subsection, this approximation will give simple qualitative insights.
The smaller array is obtained by projecting the original array at an angle of $\varphi$. The projected array is perpendicular to the transmitter's axis, as shown in Fig.~\ref{Fig_non_broadside_projection}. Notably, the width of the projected array appears shorter by a factor of $\cos(\varphi)$, while the array's height remains unchanged. 
\begin{figure}
    \centering
    {\includegraphics[width=0.55\textwidth]{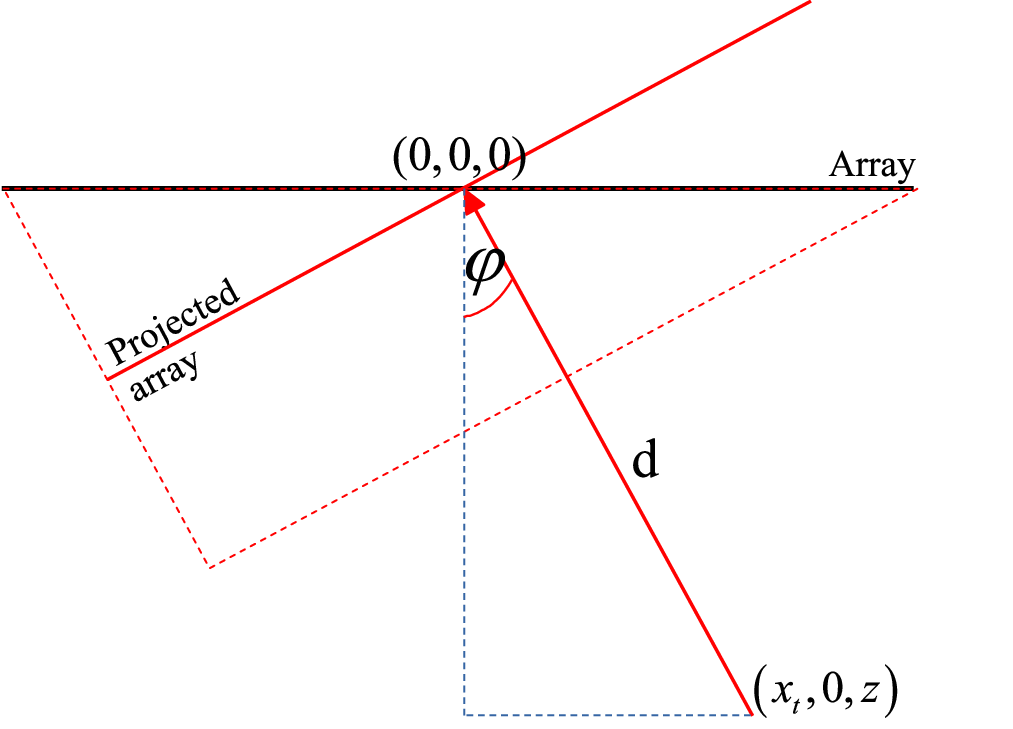}}
    \caption{The broadside transmission to a projected array. The transmitter is located at the distance $d$ with an azimuth angle of $\varphi$ from the center of the receiver array.}
    \label{Fig_non_broadside_projection}
\end{figure}
To evaluate the tightness of the approximation, we plot the normalized array gain versus the propagation distance in Fig.~\ref{Fig_rorated_transmitter_simulations}a. We can see that the approximation is very close to the exact gain of an array with a non-broadside transmitter when the propagation distance $d> 200 d_F$, while they are slightly different for  $100 d_F\leq d< 200 d_F$. Although there exists a minor discrepancy between the exact gain and the approximation, the two values are still closely matched.
We further evaluate the approximation with respect to the azimuth angle $\varphi$ by plotting the absolute difference of the exact and approximation gain in Fig.~\ref{Fig_rorated_transmitter_simulations}b. The approximation increases with $\varphi$. Nevertheless, the error is below $0.1$, which is relatively small. This indicates that the approximation is accurate, thus, we can view an array with a non-broadside transmitter as a projected array with a  broadside transmitter. 

 {Since we can use the projected array approximation for non-broadside transmissions, we can utilize the derived closed-form finite depth limit expression for broadside transmission to evaluate the communication performance with respect to the azimuth angle $\varphi$. More specifically, we find focus points that are associated with non-overlapping $3$ dB BD (as what we have discussed in Section III) based on the $3$ dB BD range $\left[\frac{d_{FA} F }{d_{FA} + 4 F a_{3{\rm dB}} (1+ {\eta}^2) } , \frac{d_{FA} F }{d_{FA} - 4 F a_{3{\rm dB}} (1+ {\eta}^2) } \right] $ for each non-broadside transmitter with angle of $\varphi$. Utilizing those non-overlapping $3$ dB BD for each value of $\eta$, we evaluate the achievable sum rate and plot them in Fig. \ref{fig_sum_rate_phi}.}

Another notable observation is that as the azimuth angle $\varphi$ approaches $\pi/2$ (end-fire direction),  the aperture length of the projected array, denoted as $D_{\rm array}^{\rm proj}$, approaches the height of the original array, which is $\ell$. Mathematically, we can express it as  $\lim_{\varphi \rightarrow \pi/2} (\sqrt{\ell^2 + (w\cdot\cos(\varphi))^2} = \ell $. If $ {\eta} \rightarrow \infty$ (corresponding to a horizontal ULA), then $\ell \rightarrow 0$. According to Theorem~\ref{BD_analytical_rectangular} in this scenario, the $3$ dB BD will tend towards infinity because the denominator shrinks much faster than the numerator, due to the dominant factor $d_{FA}^2 = \left(\frac{2 (D_{\rm array}^{\rm proj})^2}{\lambda}\right)^2$. This observation is consistent with the findings illustrated in Fig.~\ref{Fig_Beamdepth_vs_theta}, which shows that the BD increases when $\varphi \rightarrow \pi/2$, and tends towards infinity when $\varphi $ approaches $ \pi/2$ and $ {\eta}$ is much larger than $1$.
We summarize the findings as follows.

\begin{Observation}\label{Projected_array_non_broadside}
We can approximate the gain of a rectangular array with a non-broadside transmitter using a smaller/projected array with a broadside transmitter. Moreover, as the transmitter approaches the end-fire direction of the array, the BD grows and approaches infinity as $ {\eta}$ becomes much greater than $1$.
\end{Observation}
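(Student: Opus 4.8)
The plan is to prove the two assertions of the observation separately, since they are of different character: the projection-approximation claim is a statement about the near-field gain expressions and is only approximate, whereas the end-fire divergence of the BD is an exact limit that follows from Theorem~\ref{BD_analytical_rectangular}. I would first establish the geometric identification underlying the projected array, then transfer the broadside BD formula to it and take the relevant limit.

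For the projection approximation I would begin from the indirect Fresnel phase in \eqref{eq_IV_FresnelApprox} and isolate its $x$-dependent part $\tfrac{x^2-2xd\sin(\varphi)}{2d}$. Combined with the on-axis matched filter $e^{+j\frac{2\pi}{\lambda}\frac{x^2+y^2}{2F}}$, the residual aperture phase in the width direction is $\tfrac{x^2}{2}\bigl(\tfrac{1}{F}-\tfrac{1}{d}\bigr)+x\sin(\varphi)$, i.e. a focusing curvature governed by $d_{\rm eff}$ plus an \emph{uncompensated} linear steering term, the latter being exactly the source of the offset $q$ in Theorem~\ref{Fresnel_non_Broadside}. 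The geometric projection onto the plane perpendicular to the transmitter axis (Fig.~\ref{Fig_non_broadside_projection}) reduces the effective width to $w\cos(\varphi)$ while leaving the height $\ell$ unchanged, which I claim captures the net reduction of the effective focusing aperture caused by the off-axis illumination. I would then substitute the projected dimensions, aspect ratio $\eta^{\rm proj}=\eta\cos(\varphi)$ and diagonal $\sqrt{\ell^2+(w\cos(\varphi))^2}$, into the broadside gain $\hat{G}_{\rm rect}$ of Theorem~\ref{Fresnel_Approx_Rect} and compare it term-by-term with $\tilde{G}_{\rm rect}(\eta,\varphi)$ of Theorem~\ref{Fresnel_non_Broadside}, matching the Fresnel arguments $p$ and $\eta p\pm q$ against those generated by the projected aperture. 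Since the projected integrand carries only a compressed quadratic phase whereas the non-broadside integrand keeps both the full-width curvature and the steering offset $q$, the matching is not an identity; I would close the argument by invoking the measured error in Fig.~\ref{Fig_rorated_transmitter_simulations}, which stays below $0.1$ over all azimuth angles, to certify that the discrepancy is negligible.

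For the end-fire divergence I would work entirely with the projected broadside array and Theorem~\ref{BD_analytical_rectangular}. Its aperture diagonal is $D_{\rm array}^{\rm proj}=\sqrt{N}\sqrt{\ell^2+(w\cos(\varphi))^2}$, so $\lim_{\varphi\to\pi/2}D_{\rm array}^{\rm proj}=\sqrt{N}\,\ell$, and taking $\eta\to\infty$ (a horizontal ULA) drives $\ell=D/\sqrt{1+\eta^2}\to0$; hence $D_{\rm array}^{\rm proj}\to0$ and $d_{FA}^{\rm proj}=2N(D_{\rm array}^{\rm proj})^2/\lambda\to0$. Feeding this into the branch condition of \eqref{eq_III_Analytical_BD_Rect}, the threshold $\tfrac{d_{FA}^{\rm proj}}{4a_{3{\rm dB}}(1+(\eta^{\rm proj})^2)}$ collapses to zero while $(1+(\eta^{\rm proj})^2)$ stays bounded, so any fixed focus $F$ eventually satisfies $F\geq\tfrac{d_{FA}^{\rm proj}}{4a_{3{\rm dB}}(1+(\eta^{\rm proj})^2)}$ and the formula returns ${\rm BD}_{3{\rm dB}}^{\rm rect}=\infty$. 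Equivalently, the denominator $(d_{FA}^{\rm proj})^2-(4Fa_{3{\rm dB}}(1+(\eta^{\rm proj})^2))^2$ of the finite branch passes through zero and changes sign, the analytic signature of entering the infinite-depth regime.

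The hard part will be the projection approximation, which is not an exact equality of the Fresnel gains: the non-broadside integrand carries an uncompensated linear steering term and a curvature accrued over the full width $w\sqrt{N}$, whereas a literal projected aperture contributes only a compressed quadratic phase over width $w\sqrt{N}\cos(\varphi)$. A fully rigorous treatment would require bounding the difference between a Fresnel integral carrying both quadratic and linear phase and one carrying compressed quadratic phase alone; I would not attempt a tight analytic bound but instead make the geometric identification precise and rely on the numerical error curve of Fig.~\ref{Fig_rorated_transmitter_simulations} for the quantitative guarantee. The end-fire divergence, by contrast, is a clean consequence of Theorem~\ref{BD_analytical_rectangular} and presents no difficulty.
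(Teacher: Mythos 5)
Your proposal is correct and follows essentially the same route as the paper: the projection claim (width compressed by $\cos(\varphi)$, height unchanged) is ultimately certified by the numerical error curve of Fig.~\ref{Fig_rorated_transmitter_simulations}, exactly as the paper does, and the end-fire divergence is obtained by feeding the projected aperture $D_{\rm array}^{\rm proj} \to \sqrt{N}\,\ell \to 0$ (as $\varphi \to \pi/2$, $\eta \to \infty$) into Theorem~\ref{BD_analytical_rectangular}. Your branch-condition phrasing --- the threshold $d_{FA}^{\rm proj}/(4a_{3\rm dB}(1+(\eta^{\rm proj})^2))$ collapsing to zero so that any fixed $F$ lands in the infinite-BD branch --- is in fact a cleaner statement of the paper's looser ``denominator shrinks much faster than the numerator'' argument, but it is the same underlying mechanism, not a different proof.
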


\begin{figure}
\centering
\subfloat[The normalized array gain versus the propagation distance, $\varphi=\pi/4$.]
{\includegraphics[width=1.02\textwidth]{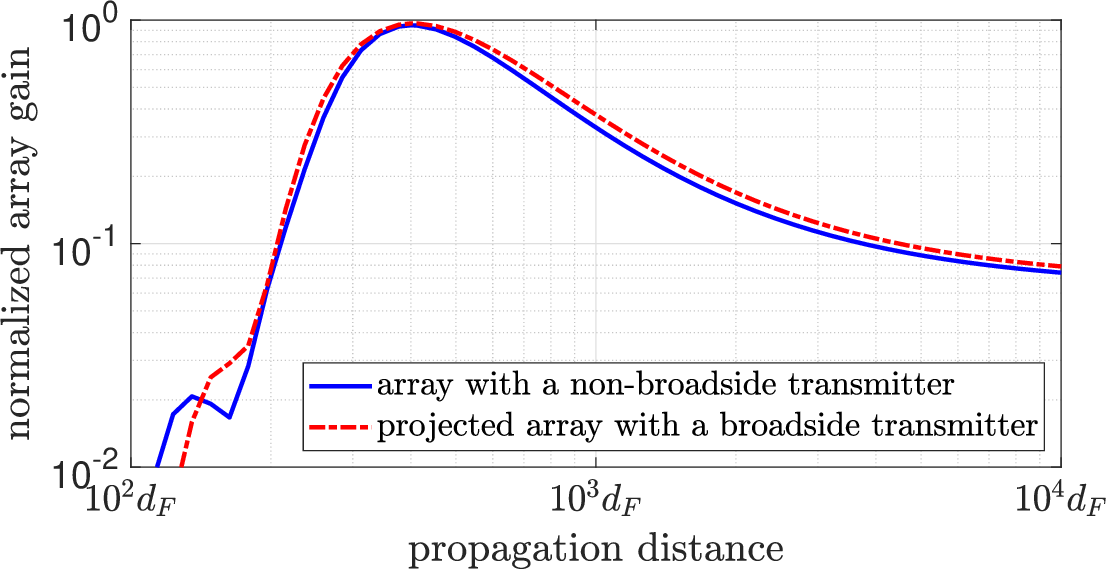}}\hfill
\centering
\subfloat[The approximation error versus the azimuth angle, $d=1000d_F$.]
{\includegraphics[width=0.99\textwidth]{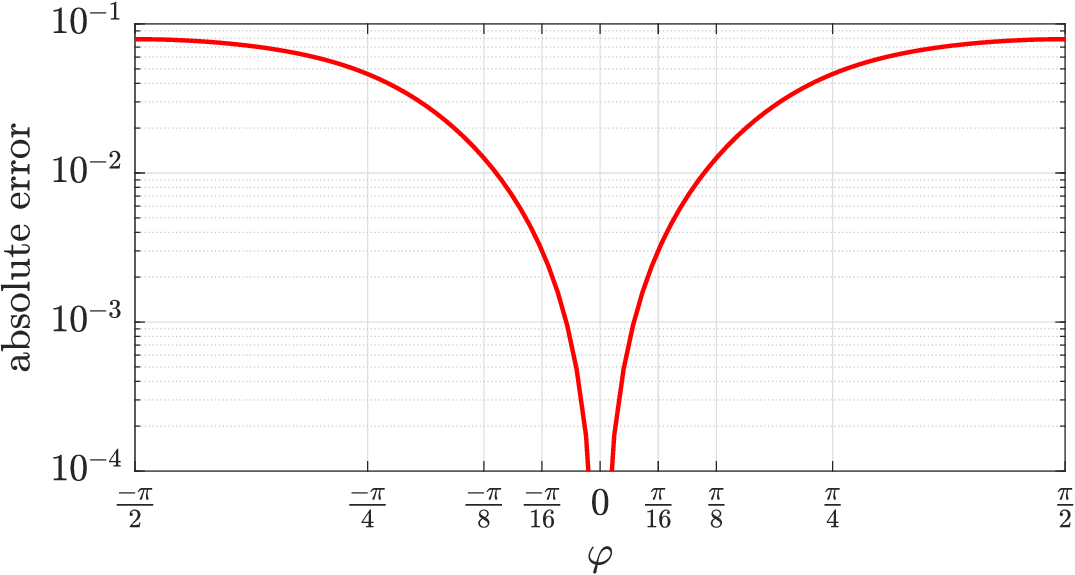}}
\caption{The normalized array gain achieved by a square array with a non-broadside transmitter can be approximated by a projected rectangular array with a broadside transmitter. We set $F=d_B=400d_F$, $f_c = 3$ GHz, and $N=10^{4}$. }
\label{Fig_rorated_transmitter_simulations}
\end{figure}

\begin{figure}
    \centering
    \includegraphics[width=0.97\textwidth]{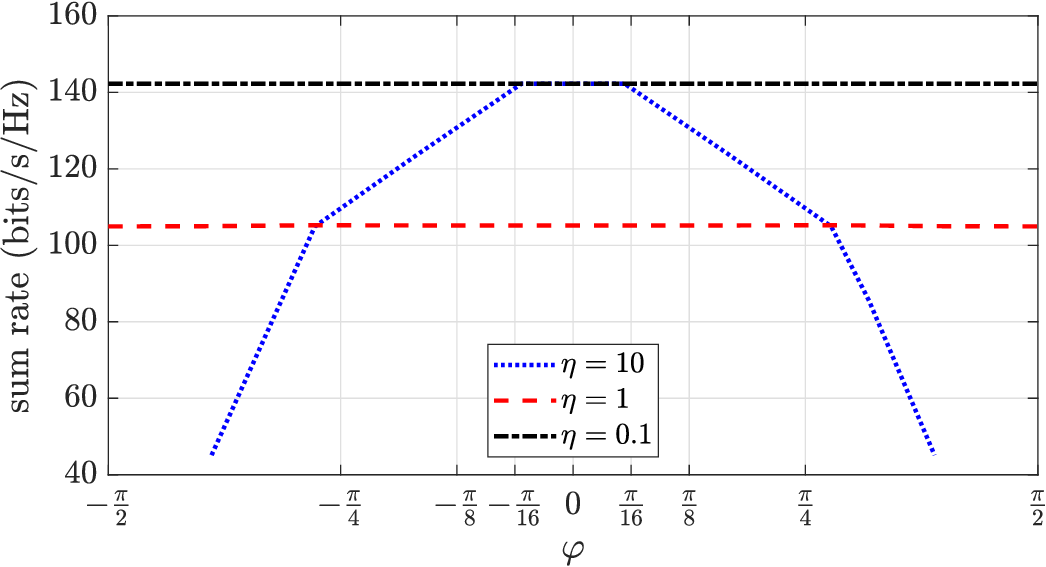}
    \caption{ {The achievable sum rate with respect to the azimuth angle $\varphi$ with non-overlapping $3$dB BD users.}}
    \label{fig_sum_rate_phi}
\end{figure}

\section{Gain and Beam Depth of a Circular Array}
\label{Sect_Beamdepth_Circ}

 In this section, we consider a UPA with a circular shape. The circular array may provide constant consistent effective array aperture at all angles, thanks to its rotational symmetry property \cite{2022_Wu_Arxiv}. To implement the circular array in practice, one can use the concept of concentric circular arrays \cite{2007_Chan_TSP,2005_Ioannides_WPL},  where the circular array is divided into multiple concentric rings. Every ring is uniformly divided in the angular domain into segments, each corresponding to an antenna element. The number of antenna elements is determined to achieve sufficiently small element areas. If this is done to prevent significant gain variations across the elements, then the theoretical results provided in this section are achievable.  Since we consider sufficiently small antenna elements, the array's overall gain remains minimally affected by a specific element configuration. 
  
  This presumption relies on the idea that the antenna elements in the array will radiate in a roughly symmetrical way. Consequently, the results are independent of the number of antennas, provided it is adequate to accommodate suitably small antenna elements across the array. The Fresnel approximation of the normalized array gain for a circular array with a transmitter located at $(0,0,z)$ and the focus point at $(0,0,F)$, is computed using \eqref{eq_II_NormalizedArrayGain} with $E(x,y)$ in \eqref{eq_III_FresnelApprox} and inject the phase-shift $e^{+j \frac{2\pi}{\lambda} \left(\frac{x^2}{2F} + \frac{y^2}{2F}\right)}$. Due to the circular geometry of the array, we adopt a polar coordinate system in the following analysis.

\subsection{Gain and $3$ dB Beam Depth}

We will now analyze the normalized array gain and BD of the circular array. Assuming that the phase shifts are perfectly compensated by continuous matched filtering. When we inject a phase shift to focus the array on a certain propagation distance, the normalized array gain can be approximated using the Fresnel approximation, given in the following theorem. 
\begin{Theorem}\label{Fresnel_Approx_Circ}
When the transmitter is located at $(0, 0, z)$ and the matched filtering is focused on $(0, 0, F)$, the normalized array gain for the circular array can be approximated using the Fresnel approximation as
\begin{equation}\label{eq_V_ApproxGainCirc}
    {{\hat{G}}_{\text{circ}}}={ \operatorname{sinc}^2 ( {l})},
\end{equation}
 where $ {l} = \frac{R^2}{2\lambda z_{\rm eff}}$ and $R = D/2 = \lambda/8$
 is the radius of the circular array, which matches the aperture length of the rectangular array. 
\end{Theorem}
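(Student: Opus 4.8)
The plan is to mirror the continuous-aperture argument behind Theorem~\ref{Fresnel_Approx_Rect}, but to exploit the circular symmetry so that the radial integral becomes \emph{elementary} rather than a pair of Fresnel integrals. First I would write the normalized gain \eqref{eq_II_NormalizedArrayGain} in its continuous form over the full disc $\mathcal{A}_{\rm circ}=\{(x,y):x^2+y^2\le R^2\}$ of area $A_{\rm circ}=\pi R^2$, namely $\hat{G}_{\rm circ}=\big|\int_{\mathcal{A}_{\rm circ}}E(x,y)\,e^{+j\frac{2\pi}{\lambda}(\frac{x^2}{2F}+\frac{y^2}{2F})}dx\,dy\big|^2\big/\big(A_{\rm circ}\int_{\mathcal{A}_{\rm circ}}|E(x,y)|^2dx\,dy\big)$, and substitute the Fresnel field \eqref{eq_III_FresnelApprox}.

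The next step is to combine the two quadratic phases. Multiplying the propagation phase $e^{-j\frac{2\pi}{\lambda}(z+\frac{x^2+y^2}{2z})}$ by the matched-filter phase $e^{+j\frac{2\pi}{\lambda}\frac{x^2+y^2}{2F}}$ collapses everything (up to a global constant phase) to a single quadratic term $e^{-j\frac{\pi}{\lambda z_{\rm eff}}(x^2+y^2)}$, where $z_{\rm eff}=Fz/|F-z|$ is exactly the effective distance already used in Theorem~\ref{Fresnel_Approx_Rect}; the sign of the exponent is immaterial since only the magnitude enters. Because the amplitude $E_0/(\sqrt{4\pi}\,z)$ is constant over the aperture in the Fresnel regime, it factors out of both the numerator and denominator integrals.

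The crucial move is to pass to polar coordinates $x=\rho\cos\phi$, $y=\rho\sin\phi$. The angular integral contributes a factor $2\pi$, and in the radial part the Jacobian $\rho\,d\rho$ matches the substitution $u=\rho^2$, turning $\int_0^R e^{-j\frac{\pi}{\lambda z_{\rm eff}}\rho^2}\rho\,d\rho$ into the elementary integral $\tfrac12\int_0^{R^2}e^{-j\frac{\pi}{\lambda z_{\rm eff}}u}\,du$. Writing $l=R^2/(2\lambda z_{\rm eff})$ so that the upper-limit phase is $e^{-j2\pi l}$, I would factor $1-e^{-j2\pi l}=e^{-j\pi l}\,2j\sin(\pi l)$ and read off a magnitude proportional to $R^2|\operatorname{sinc}(l)|$. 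This is precisely where the circular geometry pays off: no $C(\cdot)$ or $S(\cdot)$ appear, only a sinc.

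Finally I would assemble the ratio. The numerator's squared magnitude is $\frac{E_0^2}{4\pi z^2}\pi^2R^4\operatorname{sinc}^2(l)$, whereas the denominator $A_{\rm circ}\int_{\mathcal{A}_{\rm circ}}|E|^2$ equals $\pi R^2\cdot\frac{E_0^2}{4\pi z^2}\,\pi R^2=\frac{E_0^2}{4\pi z^2}\pi^2R^4$; the common factor cancels and leaves $\hat{G}_{\rm circ}=\operatorname{sinc}^2(l)$, as claimed. I do not expect a genuine obstacle here, because the radial integral is exact rather than approximate; the only care-points are tracking the phase-combination sign (irrelevant to the magnitude) and matching the sinc argument to the stated $l=R^2/(2\lambda z_{\rm eff})$.
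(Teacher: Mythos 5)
Your proposal is correct and takes essentially the same route as the paper's own proof: both pass to polar coordinates, reduce the radial quadratic-phase integral to an elementary exponential integral via a substitution in $r^2$ (your real substitution $u=\rho^2$ and the paper's complex substitution $u=-j\pi r^2/(\lambda z_{\rm eff})$ are the same move), and extract the sinc from $\left|1-e^{-j2\pi l}\right|^2$ (the paper via $2-2\cos(\cdot)$, you via factoring out $e^{-j\pi l}$). The differences are purely cosmetic, and your bookkeeping of the amplitude factor and the denominator matches the paper's normalization.
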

\begin{proof}
The proof is provided in Appendix \ref{App_Fresnel_Approx_Circular} and is inspired by \cite{Silver_BOOK}.
\end{proof}

The $3$ dB BD for the circular array when focusing on a point $F\geq d_B$ can then be calculated based on \eqref{eq_V_ApproxGainCirc}.
\begin{Theorem}\label{BD_analytical_circular}
The $3$ dB BD of a circular array with a matched filter focused on $(0,0,F)$, where $F \geq d_B $, is computed as
\begin{equation}\label{eq_V_BD_Circular}
 {\rm BD}_{3 {\rm dB}}^{\rm circ}  = 
        \begin{cases}
        \frac{1.7720 R^2 F^2 \lambda}{R^4- \left(0.886 \lambda F\right)^2}  , & F<\frac{R^2}{0.886 \lambda},\\
        \infty, & F \geq \frac{R^2}{0.886 \lambda}.
        \end{cases}
\end{equation}
\end{Theorem}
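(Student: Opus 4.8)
The plan is to mirror the argument used for Theorem~\ref{BD_analytical_rectangular}, now starting from the closed-form gain $\hat{G}_{\rm circ} = \operatorname{sinc}^2(l)$ established in Theorem~\ref{Fresnel_Approx_Circ}, where $l = \frac{R^2}{2\lambda z_{\rm eff}}$ and $z_{\rm eff} = \frac{Fz}{|F-z|}$. Since $\operatorname{sinc}^2(0)=1$ is the peak, the $3$\,dB contour is the set of distances $z$ for which $\operatorname{sinc}^2(l)=\tfrac12$. First I would locate the corresponding argument $l_{3{\rm dB}}$ by solving $\operatorname{sinc}(l_{3{\rm dB}}) = 1/\sqrt{2}$. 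On the mainlobe interval where $\operatorname{sinc}^2$ decreases monotonically from $1$ to $0$, this equation has a unique root, which one finds numerically to be $l_{3{\rm dB}}\approx 0.443$ (so that $2 l_{3{\rm dB}}\approx 0.886$); this is the only value relevant to the mainlobe BD and plays the same role as $a_{3{\rm dB}}$ in Theorem~\ref{BD_analytical_rectangular}.

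Next I would convert the level $l_{3{\rm dB}}$ back into distances. Setting $\frac{R^2}{2\lambda z_{\rm eff}} = l_{3{\rm dB}}$ fixes the effective distance $z_{\rm eff} = \frac{R^2}{2\lambda l_{3{\rm dB}}}$; abbreviating $c \triangleq \frac{R^2}{2\lambda l_{3{\rm dB}}} = \frac{R^2}{0.886\,\lambda}$ and using $z_{\rm eff} = \frac{Fz}{|F-z|}$, the condition becomes $\frac{Fz}{|F-z|}=c$. Resolving the absolute value gives the two half-power distances straddling the focus: the near root $z_- = \frac{cF}{c+F}$ (from $z<F$) and the far root $z_+ = \frac{cF}{c-F}$ (from $z>F$). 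The $3$\,dB BD is then their difference, $z_+ - z_- = \frac{2cF^2}{c^2-F^2}$. Substituting $c = \frac{R^2}{2\lambda l_{3{\rm dB}}}$ collapses this to $\frac{4\lambda l_{3{\rm dB}} R^2 F^2}{R^4 - (2\lambda l_{3{\rm dB}} F)^2}$, i.e.\ exactly the first branch of \eqref{eq_V_BD_Circular} once the numerical values $4 l_{3{\rm dB}}=1.7720$ and $2 l_{3{\rm dB}}=0.886$ are inserted.

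Finally I would treat the infinite branch exactly as in Theorem~\ref{BD_analytical_rectangular}: the far root $z_+ = \frac{cF}{c-F}$ is positive and finite only while $c > F$, i.e.\ $F < c = \frac{R^2}{0.886\,\lambda}$. Once $F \ge \frac{R^2}{0.886\,\lambda}$ the factor $c-F$ becomes non-positive; since $z_{\rm eff}$ then stays below $c$ for every $z>F$, the argument $l$ never reaches $l_{3{\rm dB}}$, so the gain never drops $3$\,dB below its peak on the far side and the BD is unbounded, giving the second branch. The only genuine obstacle is the first step: $\operatorname{sinc}^2(l)=\tfrac12$ is transcendental with no elementary closed form, so $l_{3{\rm dB}}$ must be pinned down numerically and the uniqueness of the mainlobe root justified by the monotonicity of $\operatorname{sinc}^2$ near the origin; everything after that is the same elementary inversion and subtraction as in the rectangular proof.
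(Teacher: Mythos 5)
Your proposal is correct and follows essentially the same route as the paper: identify the half-power argument of the squared sinc ($l_{3\rm dB}\approx 0.443$, i.e.\ $2l_{3\rm dB}\approx 0.886$), invert $z_{\rm eff}=Fz/|F-z|$ to obtain the near and far half-power distances, subtract them, and assign the infinite branch when the far root ceases to exist. One small slip in your justification of the infinite branch: for $F\geq c$ and $z>F$ one has $z_{\rm eff}=\frac{Fz}{z-F}>F\geq c$, so $z_{\rm eff}$ stays \emph{above} $c$ (equivalently $l$ stays below $l_{3\rm dB}$), not below as you wrote---but the conclusion you draw from it, that the gain never drops $3$ dB on the far side, is exactly right.
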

\begin{proof}
We know that ${\sf sinc}^2(0)=1$ and ${\sf sinc}^2(0.4430) \approx 0.5$. The $3$ dB BD ${\rm BD}_{3 {\rm dB}}^{\rm circ}$ is defined as the range of distances where the gain is greater  than $0.5$, therefore, it is the difference of two possible values of $z$, i.e., $\frac{R^2 F}{R^2 - 0.886 \lambda F} -  \frac{R^2 F}{R^2 + 0.886 \lambda F} $ which results in \eqref{eq_V_BD_Circular}. 
\end{proof}

The finite BD limit for the circular array is $\frac{R^2}{0.886 \lambda}$, since  the BD goes to infinity when the focus is on a point larger than the finite BD limit.  
We evaluate the normalized array gain of the circular array in Fig.~\ref{Fig_Gain_circular}. We can see that we approach $-3$ dB as $z \rightarrow \infty$, and hence the finite BD limit $\frac{R^2}{0.886 \lambda}$ is the largest distance in which the BD for which the BD is finite. 

If we  adjust the focus to $F=d_B=2(2R)$ and keep the array's aperture length the same as in the case of a rectangular array (i.e., $R=D/2$), then we obtain the BD for the circular array as ${\rm BD}_{3 {\rm dB}}^{\rm circ} \approx 247 d_F$, according to Theorem~\ref{BD_analytical_circular}. The BD is slightly higher as compared to the ${\rm BD}_{3 {\rm dB}}^{\rm rect} \approx 244 d_F$ in Theorem~\ref{BD_analytical_rectangular} for tall  $ {\eta}=0.1$ or wide $ {\eta}=10$ rectangular arrays.

\subsection{Nulls and Side-lobes in the Distance Domain}

\begin{figure}
    \centering
    {\includegraphics[width=1.03\textwidth]{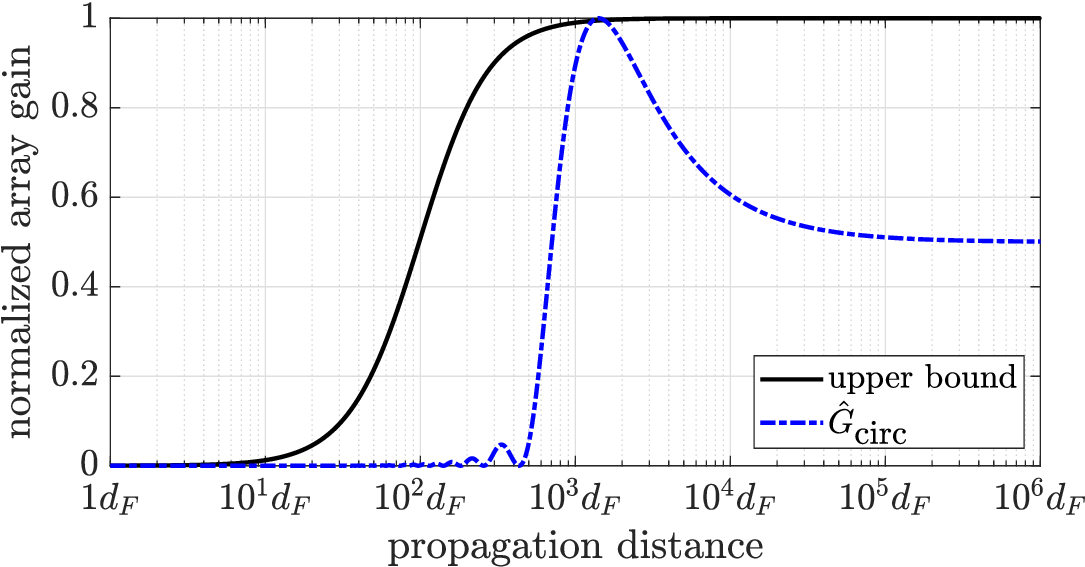}}
    \caption{The normalized circular array gain versus the propagation distance.  }
    \label{Fig_Gain_circular}
\end{figure}

\begin{figure}
\centering
{\includegraphics[width=\textwidth]{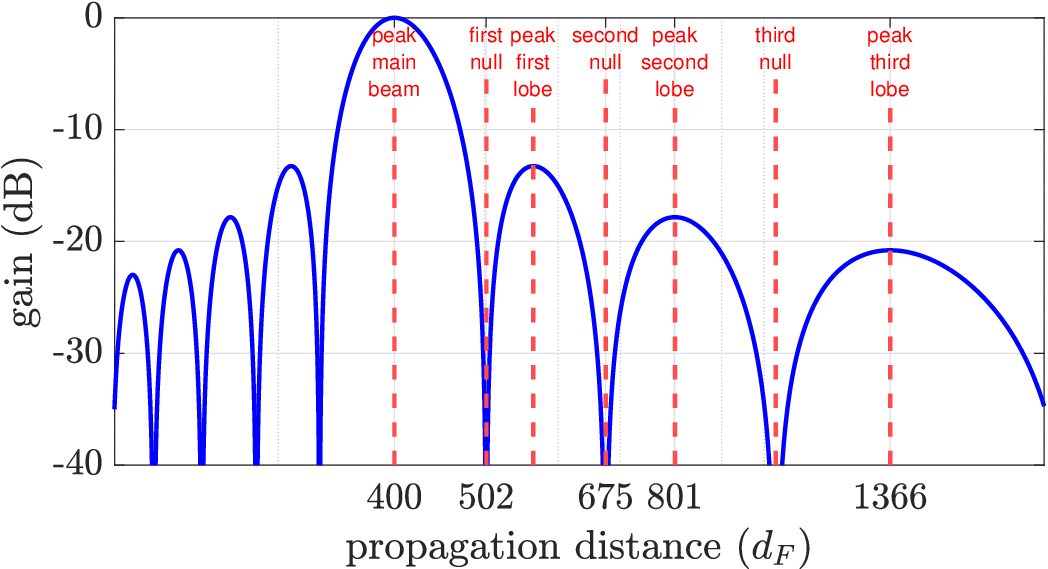}}
\caption{The array gains, nulls, and side-lobes with respect to the propagation distance with a circular array, $F =d_B$.}
\label{Fig_Nulls_Sidelobes_circular}
\end{figure}

\begin{table}[ht!]\small
\begin{center}
\caption{Summary of nulls and side-lobes in the distance domain with a circular array.}
\label{tab_nulls_sidelobes}
\begin{tabular}{|c|c|c|}
\hline
      \textbf{Sinc constant} $ {l} = \frac{R^2}{2\lambda z_{\rm eff}}$ & \textbf{Gain (dB)} & \textbf{Explanation}\\
      \hline\hline
      0 & 0 & peak of main beam\\\hline
      1 & $-\infty$ & first null\\\hline
      1.43 & $-13.26$ & peak of first lobe\\\hline
      2 & $-\infty$ & second null\\\hline
      2.46 & $-17.83$ & peak of second lobe\\\hline
      3 & $-\infty$ & third null\\\hline
      3.47 & $-20.79$ & peak of third lobe\\\hline
\end{tabular} \vspace{-5mm}
\end{center}
\end{table}

When beams are analyzed in the angular domain, they are known to consist of a main beam and multiple weaker side-lobes, with nulls in between. The same behavior appears in the distance domain when considering finite-depth beamforming. We will now investigate the nulls and side-lobes for the circular array based on the derived gain in \eqref{eq_V_ApproxGainCirc}.  {The derived array gain conforms to the sinc function, which allows us to characterize its initial peak, nulls, and other peaks.}
The peak of the main beam in the distance domain appears when $ {l}=0$ so that $\left(\frac{{\sin}( {l})}{ {l}}\right)^2=1$. Since  $ {l} = \frac{\pi R^2}{2\lambda z_{\rm eff}}$, it equals to $0$ when $z_{\rm eff}=\infty$ which implies that $F=z$. Therefore, the peak of the main beam is obtained at the distance $F=z$. The first null appears when $ {l}=\pi$. Hence, $z_{\rm eff} = \frac{R^2}{2 \lambda}$. The second and later nulls appear when $ {l}=2\pi, \ldots, K\pi$. Therefore, the nulls can be obtained by setting $z_{\rm eff} = \frac{k_0 R^2}{2 \lambda}$, where $k_0 \in \{1,\ldots,K\}$ indicates the index of the nulls. The peak values of the side-lobes is given in Table~\ref{tab_nulls_sidelobes}. We plot Fig.~\ref{Fig_Nulls_Sidelobes_circular} to demonstrate the nulls and side-lobes of the circular array. The characterization of the nulls and beamdepths in Table~\ref{tab_nulls_sidelobes} match the nulls and beamdepths in Fig.~\ref{Fig_Nulls_Sidelobes_circular}. 

\section{Conclusion}
\label{Sect_conclusion}

In this paper, we analyzed how the BD of a large planar rectangular array depends on its width-to-height proportion, area, and aperture length. We derived a tight Fresnel approximation of the normalized array gain and used it to characterize the BD of the rectangular array. We proved that the $3$ dB BD is the largest for a square array. The BD pattern is symmetric for tall and wide arrays. The smallest BD is obtained when the array approaches a ULA. Therefore, an array of linear geometry is preferred since it offers the smallest BD, enabling us to better utilize spatial multiplexing since there is a higher degree of spatial orthogonality (in the distance domain). We also considered a projected square array where its effective aperture length is shorter than the original square array due to the rotation. 
 For a non-broadside transmitter, where we first evaluated the Taylor approximation of the Euclidean between the transmitter and the antenna element of the array. We refined the approximation so that the approximation error decreases with the increase of the propagation distance. When the width-to-height proportion approaches zero, the BD becomes independent of the azimuth angles. For larger width-to-height proportions, the BD increases with increasing azimuth angle. In addition, as the value of width-to-height proportion $ {\eta}$ increases, the BD variation also increases. Furthermore, we demonstrated that the beam pattern of an array with a non-broadside transmitter can be approximated by that of a projected/smaller array with a broadside transmitter. The original array was projected onto a plane that was perpendicular to the axis of the non-broadside transmitter. Finally, we considered a UPA with a circular shape. The BD of the circular array is slightly larger compared to that of the wide/tall rectangular array, but lower than the square array. Our analysis also involved characterizing the nulls and side lobes of the circular array that emerge at different distances.  {The fundamental results of this paper can guide antenna array design for near-field communications, including their hardware capabilities.}

\appendices

\section{Proof of Theorem~\ref{Fresnel_Approx_Rect}}
\label{App_Fresnel_Approx}

First, we substitute the Fresnel approximation in \eqref{eq_III_FresnelApprox} into the normalized array gain in \eqref{eq_II_NormalizedArrayGain} and inject phase-shift $e^{+j \frac{2\pi}{\lambda} \left(\frac{x^2}{2F} + \frac{y^2}{2F}\right)}$, which gives us 
\begin{align}
    \notag
    & \hat{G}_{\rm rect}( {\eta}) =\frac{1}{{{(N A)}^{2}}} \times
    \\ \notag
    &\Bigg| \int\limits_{-\frac{D_{\rm array}}{2\sqrt{1+{{ {\eta}}^{2}}}}}^{\frac{D_{\rm array}}{2\sqrt{1+{{ {\eta}}^{2}}}}}{\int\limits_{-\frac{ {\eta} D_{\rm array}}{2\sqrt{1+{{ {\eta}}^{2}}}}}^{\frac{ {\eta} D_{\rm array}}{2\sqrt{1+{{ {\eta}}^{2}}}}} 
   e^{j \frac{2\pi}{\lambda} \left(\frac{x^2}{2F} + \frac{y^2}{2F}\right)}  {e}^{-j\frac{2\pi }{\lambda }\left( z+\frac{{{x}^{2}}}{2z}+\frac{{{y}^{2}}}{2z} \right)}  }dxdy \Bigg|^{2} \\ 
   &    =\frac{1}{{{(N A)}^{2}}}{{\left| \int\limits_{-\frac{D_{\rm array}}{2\sqrt{1+{{ {\eta}}^{2}}}}}^{\frac{D_{\rm array}}{2\sqrt{1+{{ {\eta}}^{2}}}}}{\int\limits_{-\frac{ {\eta} D_{\rm array}}{2\sqrt{1+{{ {\eta}}^{2}}}}}^{\frac{ {\eta} D_{\rm array}}{2\sqrt{1+{{ {\eta}}^{2}}}}} \!\!\! {{{e}^{-j\frac{2\pi }{\lambda } \frac{\left( |F-z| \right)\left( {{x}^{2}}+{{y}^{2}} \right)}{2zF} }}}}dxdy \right|}^{2}}.
\end{align}
By defining $z_{\rm eff} = \frac{Fz}{|F-z|}$, we can rewrite the expression as
\begin{align}\label{G-derivation}
    \hat{G}_{\rm rect}( {\eta})  =\frac{1}{(N A)^{2}}{{\left| \int\limits_{-\frac{D_{\rm array}}{2\sqrt{1+{{ {\eta}}^{2}}}}}^{\frac{D_{\rm array}}{2\sqrt{1+{{ {\eta}}^{2}}}}}\int\limits_{-\frac{ {\eta} D_{\rm array}}{2\sqrt{1+{{ {\eta}}^{2}}}}}^{\frac{ {\eta} D_{\rm array}}{2\sqrt{1+{ {\eta}}^{2}}}}
    {e}^{-j\frac{\pi }{\lambda } \frac{ {{x}^{2}}}{z_{\rm eff}}  }  {e}^{-j\frac{\pi }{\lambda } \frac{ {{y}^{2}}}{z_{\rm eff}}  }dxdy \right|}^{2}}.
    % \\ 
    % &=\frac{1}{(N A)^{2}}{{\left| \int\limits_{-\frac{cD_{\rm array}}{2\sqrt{1+{ {\eta}}^{2}}}}^{\frac{cD_{\rm array}}{2\sqrt{1+{{ {\eta}}^{2}}}}}{{{e}^{-j\frac{\pi }{\lambda }\frac{{{x}^{2}}}{{{z}_{\rm eff}}}}}}dx\int\limits_{-\frac{D_{\rm array}}{2\sqrt{1+{{ {\eta}}^{2}}}}}^{\frac{D_{\rm array}}{2\sqrt{1+{{ {\eta}}^{2}}}}}{{{e}^{-j\frac{\pi }{\lambda }\frac{{{y}^{2}}}{{{z}_{\rm eff}}}}}}dy \right|}^{2}}  
\end{align}
The evaluation of the anti-derivatives in \eqref{G-derivation} yields  \cite{1956_Polk_TAP}
\begin{align}
    \notag
    &\hat{G}_{\rm rect}( {\eta})  =\left( \frac{4 z_{\rm eff}(1+{{ {\eta}}^{2}})}{ {\eta}  d_{FA}} \right)^{2} \times
     \\   \notag
    &\Bigg( {C}^{2} \Bigg(  {\eta}\sqrt{\frac{d_{FA}}{4{z}_{\rm eff} \left({1+{{ {\eta}}^{2}}}\right)}}  \Bigg) +{{S}^{2}}\Bigg(  {\eta} \sqrt{\frac{d_{FA}}{4{z}_{\rm eff} \left({1+{{ {\eta}}^{2}}}\right)}} \Bigg) \Bigg) \times
    \\  
    &  {C}^{2} \Bigg( \sqrt{\frac{d_{FA}}{4{z}_{\rm eff} \left({1+{{ {\eta}}^{2}}}\right)}}  \Bigg)+{{S}^{2}}\Bigg( \sqrt{\frac{d_{FA}}{4{z}_{\rm eff} \left({1+{{ {\eta}}^{2}}}\right)}} \Bigg) ,
\end{align}
where $C\left(\cdot \right)$ and $S\left(\cdot \right)$ are the Fresnel integrals.  By defining $ a = \frac{d_{FA}}{4{z}_{\rm eff}(1+{{ {\eta}}^{2}})}$, we finally obtain  
\begin{equation}
\hat{G}_{\rm rect}( {\eta}) =
\frac{\left( {{C}^{2}}\left(  {\eta}\sqrt{ a } \right)+{{S}^{2}}\left(  {\eta}\sqrt{ a } \right) \right) \left( {{C}^{2}}\left( \sqrt{ a } \right)+{{S}^{2}}\left( \sqrt{ a } \right) \right)}{(  {\eta} a )^{2}}.
\end{equation}
This completes the proof.

\vspace*{-1.5cm}

 {
\section{Proof of Theorem~\ref{Fresnel_non_Broadside}}
\label{App_Fresnel_Approx_Non_broadside}
}

\vspace*{-1cm}

 {
For a transmitter located at $(x_t,y_t,z)$, we use the Fresnel approximation based on \eqref{eq_IV_FresnelApp}, as  
\begin{equation}\label{eq_II_FresnelApproxNew}
    E(x,y) \approx \frac{E_0}{\sqrt{4 \pi} z} e^{-j \frac{2 \pi}{\lambda} \left( d + \frac{x^2 + y^2 -2(x x_t + y y_t)}{2d} \right)}.
\end{equation}
Substituting \eqref{eq_II_FresnelApproxNew}  into the normalized array gain in \eqref{eq_II_NormalizedArrayGain} and inject phase-shift $e^{+j \frac{2\pi}{\lambda} \left(\frac{x^2}{2F} + \frac{y^2}{2F}\right)}$ yields
\begin{multline}\label{temp_eq}
    \tilde{G}_{\rm rect}( {\eta},\varphi) =\frac{1}{{{(N A)}^{2}}} \Bigg|  \int\limits_{-\frac{D_{\rm array}}{2\sqrt{1+{{ {\eta}}^{2}}}}}^{\frac{D_{\rm array}}{2\sqrt{1+{{ {\eta}}^{2}}}}}\int\limits_{-\frac{ {\eta}  D_{\rm array}}{2\sqrt{1+{{ {\eta}}^{2}}}}}^{\frac{  {\eta} D_{\rm array}}{2\sqrt{1+{{ {\eta}}^{2}}}}} {e}^{j\frac{2\pi }{\lambda }\left( \frac{{{x}^{2}}}{2F}+\frac{{{y}^{2}}}{2F} \right)} \times \\
    {e}^{-j\frac{2\pi }{\lambda }\left( d + \frac{x^2 + y^2 -2x x_t -2y y_t}{2d} \right)}dxdy \Bigg|^{2}
\end{multline}
By arranging the variables in \eqref{temp_eq}, we can obtain
\begin{align}
    % \notag
    % \tilde{G}_{\rm rect}( {\eta},\varphi) &=\frac{1}{{{(N A)}^{2}}} \Bigg|  \int\limits_{-\frac{D_{\rm array}}{2\sqrt{1+{{ {\eta}}^{2}}}}}^{\frac{D_{\rm array}}{2\sqrt{1+{{ {\eta}}^{2}}}}}\int\limits_{-\frac{ {\eta}  D_{\rm array}}{2\sqrt{1+{{ {\eta}}^{2}}}}}^{\frac{  {\eta} D_{\rm array}}{2\sqrt{1+{{ {\eta}}^{2}}}}} {e}^{j\frac{2\pi }{\lambda }\left( \frac{{{x}^{2}}}{2F}+\frac{{{y}^{2}}}{2F} \right)} 
    %  \\ \notag
    % &{e}^{-j\frac{2\pi }{\lambda }\left( d + \frac{x^2 + y^2 -2x x_t -2y y_t}{2d} \right)}dxdy \Bigg|^{2}
    % \\ 
 %    \notag
 %    &=\frac{1}{{{(N A)}^{2}}} 
 %    \Bigg| \int\limits_{-\frac{D_{\rm array}}{2\sqrt{1+{{ {\eta}}^{2}}}}}^{\frac{D_{\rm array}}{2\sqrt{1+{{ {\eta}}^{2}}}}} \int\limits_{-\frac{ {\eta} 
 % D_{\rm array}}{2\sqrt{1+{{ {\eta}}^{2}}}}}^{\frac{ {\eta}  D_{\rm array}}{2\sqrt{1+{{ {\eta}}^{2}}}}}  {{{e}^{-j\frac{\pi }{\lambda }\left( \frac{{{x}^{2}}}{{{d}_{\rm eff}}}+2\frac{x {{x}_{t}}}{d} \right)}}}    {{{e}^{-j\frac{\pi }{\lambda }\left( \frac{{{y}^{2}}}{{{d}_{\rm eff}}} +2\frac{y {{y}_{t}}}{d}   \right)}}} 
 %  dx dy \Bigg|^{2} \\ 
    \notag
    \tilde{G}_{\rm rect}( {\eta},\varphi)&=\frac{1}{{{(N A)}^{2}}}\Bigg| {{e}^{j\left(   \frac{\lambda {{d}_{\rm eff}}}{\pi }  \left(\frac{x_t}{d}\right)^2   \right)}}   \int\limits_{-\frac{D_{\rm array}}{2\sqrt{1+{{ {\eta}}^{2}}}}}^{\frac{D_{\rm array}}{2\sqrt{1+{{ {\eta}}^{2}}}}} \int\limits_{-\frac{cD_{\rm array}}{2\sqrt{1+{{ {\eta}}^{2}}}}}^{\frac{cD_{\rm array}}{2\sqrt{1+{{ {\eta}}^{2}}}}}    
    \\
    &{{{e}^{-j{{\left( \sqrt{\frac{\pi }{\lambda {{d}_{\rm eff}}}}x+\frac{{{x}_{t}}}{d} \sqrt{\frac{\lambda {{d}_{\rm eff}}}{\pi }} \right)}^{2}}}}} {{{e}^{-j\frac{\pi }{\lambda {{d}_{\rm eff}}}{{y}^{2}}}}}dxdy \Bigg|^{2}  
\end{align}
By using integral substitution, we can rewrite
}

 {
\begin{align}
    \notag
    &\tilde{G}_{\rm rect}( {\eta},\varphi)={\left( \frac{\lambda {d}_{\rm eff} \left( 1+{{ {\eta}}^{2}} \right) }{\pi  {\eta} N{{D}^{2}}}\right)}^{2}  \\  \notag
    &\Bigg|\int\limits_{-\sqrt{\frac{\pi }{\lambda {{d}_{\rm eff}}}} \frac{D_{\rm array}}{2\sqrt{1+{{ {\eta}}^{2}}}}+\frac{{{y}_{t}}}{d} \sqrt{\frac{\lambda {{d}_{\rm eff}}}{\pi }}}^{\sqrt{\frac{\pi }{\lambda {{d}_{\rm eff}}}} \frac{D_{\rm array}}{2\sqrt{1+{{ {\eta}}^{2}}}}+\frac{{{y}_{t}}}{d} \sqrt{\frac{\lambda {{d}_{\rm eff}}}{\pi }}}\int\limits_{-\sqrt{\frac{\pi }{\lambda {{d}_{\rm eff}}}} \frac{ {\eta}  D_{\rm array}}{2\sqrt{1+{{ {\eta}}^{2}}}}+\frac{{{x}_{t}}}{d} \sqrt{\frac{\lambda {{d}_{\rm eff}}}{\pi }}}^{\sqrt{\frac{\pi }{\lambda {{d}_{\rm eff}}}} \frac{ {\eta}  D_{\rm array}}{2\sqrt{1+{{ {\eta}}^{2}}}}+\frac{{{x}_{t}}}{d} \sqrt{\frac{\lambda {{d}_{\rm eff}}}{\pi }}}
    \\
    &{{{e}^{-j{{ u }^{2}}}}}  {e}^{-j{{ v }^{2}}}du dv  \Bigg|^{2}, \label{APP_B_1}
\end{align}
where $u=\sqrt{\frac{\pi }{\lambda {{d}_{\rm eff}}}}  x+\frac{{{x}_{t}}}{d} \sqrt{\frac{\lambda {{d}_{\rm eff}}}{\pi }}$, $v=\sqrt{\frac{\pi }{\lambda {{d}_{\rm eff}}}}  y +\frac{{{y}_{t}}}{d} \sqrt{\frac{\lambda {{d}_{\rm eff}}}{\pi }}$, and $d_{\rm eff} = \frac{Fd}{|F-d|} $. By using the Euler formula, i.e., $e^{jx} = {\sf cos}(x)+j {\sf sin}(x)$, we can compute \eqref{APP_B_1} and get 
\begin{align}\label{App_B_2}
    \notag
    &\tilde{G}_{\rm rect}( {\eta},\varphi)={{\left( \frac{1}{4 {\eta}{{p}^{2}}} \right)}^{2}} \times
    \\ \notag
    &\Bigg| \Big( \left( C\left(  {\eta}p+q \right)-C\left( - {\eta}p+q \right) \right)- j\left( S\left(  {\eta}p+q \right)-S\left( - {\eta}p+q \right) \right) \Big) 
    \\ 
    & \Big( \left( C\left( p+ \tilde{q} \right)-C\left( -p+\tilde{q} \right) \right)- j\left( S\left( p+\tilde{q} \right)-S\left( -p+\tilde{q} \right) \right) \Big)      \Bigg|^{2},
\end{align}
where $p=D\sqrt{\frac{N}{4\lambda {{d}_{\rm eff}}(1+{{ {\eta}}^{2}})}}=\frac{1}{2} \sqrt{\frac{{{d}_{FA}}}{{{d}_{\rm eff}}(1+{{ {\eta}}^{2}})}}$, $q=\frac{{{x}_{t}}}{d\pi } \sqrt{2\lambda {{d}_{\rm eff}}} =\frac{\sin(\varphi)}{\pi } \sqrt{2\lambda {{d}_{\rm eff}}}  $, and $\tilde{q}=\frac{{{y}_{t}}}{d\pi } \sqrt{2\lambda {{d}_{\rm eff}}} =\frac{\sin(\theta)}{\pi } \sqrt{2\lambda {{d}_{\rm eff}}}  $.
Since $D = \sqrt{\frac{\lambda d_F}{2}}$ and $C(-v)=-C(v)$ due to the odd function, we can rewrite \eqref{App_B_2} as
}

 {
\begin{align}
    \notag
     &\tilde{G}_{\rm rect}( {\eta},\varphi)={{\left( \frac{1}{4 {\eta}{{p}^{2}}} \right)}^{2}} \times
     \\ \notag
     &\Bigg| \Big( C\left(  {\eta}p+q \right)+C\left(  {\eta}p-q \right)-  j\left( S\left( {\eta}p+q \right)+S\left(  {\eta}p-q \right) \right) \Big) 
     \\ 
     & \Big( C\left( p+\tilde{q} \right)+C\left( p-\tilde{q} \right)-  j\left( S\left( p+\tilde{q} \right)+S\left( p-\tilde{q} \right) \right) \Big) \Bigg|^{2}
\end{align}
Through algebraic simplification, we obtain
\begin{align}
     \notag
     &\tilde{G}_{\rm rect}( {\eta},\varphi)
     =\frac{1}{{{\left( 4 {\eta}{{p}^{2}} \right)}^{2}}} \times
     \\ \notag
     &\left( {{\left( C\left(p+\tilde{q}\right)+C\left( p-\tilde{q} \right) \right)}^{2}}+{{\left( S\left( p+\tilde{q} \right)+S\left( p-\tilde{q} \right) \right)}^{2}} \right)
     \\
     &\Big( {{\left( C\left(  {\eta}p+q \right)+C\left(  {\eta}p-q \right) \right)}^{2}}
     +{{\left( S\left(  {\eta}p+q \right)+S\left(  {\eta}p-q \right) \right)}^{2}} \Big).
\end{align}}
 {
In the case of $y_t = 0$, then $\tilde{q} = 0$, and 
\begin{multline}
\tilde{G}_{\rm rect}( {\eta},\varphi) = \frac{\left( {{\left( C( p) \right)}^{2}}
      +{{\left( S( p )\right)}^{2}} \right)}
      {{{\left( 2 {\eta}{{p}^{2}} \right)}^{2}}} \times \\
\Big( {{\left( C\left(  {\eta}p+q \right)+C\left(  {\eta}p-q \right) \right)}^{2}}
      +{{\left( S\left(  {\eta}p+q \right)+S\left(  {\eta}p-q \right) \right)}^{2}} \Big),  
\end{multline}
which completes the proof.
}

\vspace*{-0.4cm}

\section{Proof of Theorem~\ref{Fresnel_Approx_Circ}}
\label{App_Fresnel_Approx_Circular}

We first substitute the Fresnel approximation in \eqref{eq_III_FresnelApprox} into the normalized array gain in \eqref{eq_II_NormalizedArrayGain} and inject phase-shift $e^{+j \frac{2\pi}{\lambda} \left(\frac{x^2}{2F} + \frac{y^2}{2F}\right)}$. 

Then, we use a polar coordinate system to represent the multiple integrations in \eqref{eq_II_NormalizedArrayGain}, explained in \cite[Section 6]{Riley_BOOK}. We can write the Fresnel approximation of the normalized array gain for the circular array as
\begin{align}
    \notag
    {{\tilde{G}}_{\text{circ}}}&=\frac{1}{{{\left( \pi R^2 \right)}^{2}}}{{\left|  \int\limits_{0}^{2\pi }{\int\limits_{0}^{R} {r {e}^{j\frac{2\pi }{\lambda }\left( F+\frac{{{r}^{2}}}{2F} \right)}  {{e}^{-j\frac{2\pi }{\lambda }\left( z+\frac{{{r}^{2}}}{2z} \right)}}}}drd\varphi  \right|}^{2}} \\ 
    & =\frac{1}{{{\left( \pi R^2 \right)}^{2}}}{{\left| 2\pi \int\limits_{0}^{R}{r {{e}^{-j\frac{\pi }{\lambda } \frac{{{r}^{2}}}{{{z}_{\rm eff}}}}}\ }dr \right|}^{2}},
\end{align}
where $ z_{\rm eff} = \frac{Fz}{|F-z|}$. 
Substituting  $u=-j\frac{\pi {{r}^{2}}}{\lambda {{z}_{\rm eff}}}$, we can rewrite $  {{\tilde{G}}_{\text{circ}}}$ as
\begin{align}
\notag
     &{{\tilde{G}}_{\text{circ}}}={{\left( \frac{1}{\pi {{R}^{2}}} \right)}^{2}}{{\left| j\lambda {{z}_{\rm eff}}\int\limits_{0}^{-j\frac{\pi {{R}^{2}}}{\lambda {{z}_{\rm eff}}}}{{{e}^{u}}du} \right|}^{2}} 
     \\ \notag
     &={{\left( \frac{1}{\pi {{R}^{2}}} \right)}^{2}}{{\left| \lambda {{z}_{\rm eff}}\left( \sin \left( \frac{\pi {{R}^{2}}}{\lambda {{z}_{\rm eff}}} \right)+\left( \cos \left( \frac{\pi {{R}^{2}}}{\lambda {{z}_{\rm eff}}} \right)-1 \right)j \right) \right|}^{2}} \\ 
     &={{\left( \frac{\lambda {{z}_{\rm eff}}}{\pi {{R}^{2}}} \right)}^{2}}\left( 2-2\cos \left( \frac{\pi {{R}^{2}}}{\lambda {{z}_{\rm eff}}} \right) \right) 
     % &={{\left( \frac{\lambda {{z}_{\rm eff}}}{\pi {{R}^{2}}} \right)}^{2}}4{{\sin }^{2}}\left( \frac{\pi {{R}^{2}}}{2\lambda {{z}_{\rm eff}}} \right)\\
     ={{\left( \frac{\sin ( {l})}{ {l}} \right)}^{2}},
\end{align}
where $ {l}= \frac{\pi {{R}^{2}}}{2\lambda {{z}_{\rm eff}}}$. This completes the proof.
% \section{Third Appendix}
% \label{ThirdAppendix}

% \section{Fourth Appendix}
% \label{FourthAppendix}

\bibliographystyle{IEEEtran}
\bibliography{IEEEabrv,myBib}

\newpage
\begin{IEEEbiography}
 [{\includegraphics[width=1in,height=1.25in,clip,keepaspectratio]{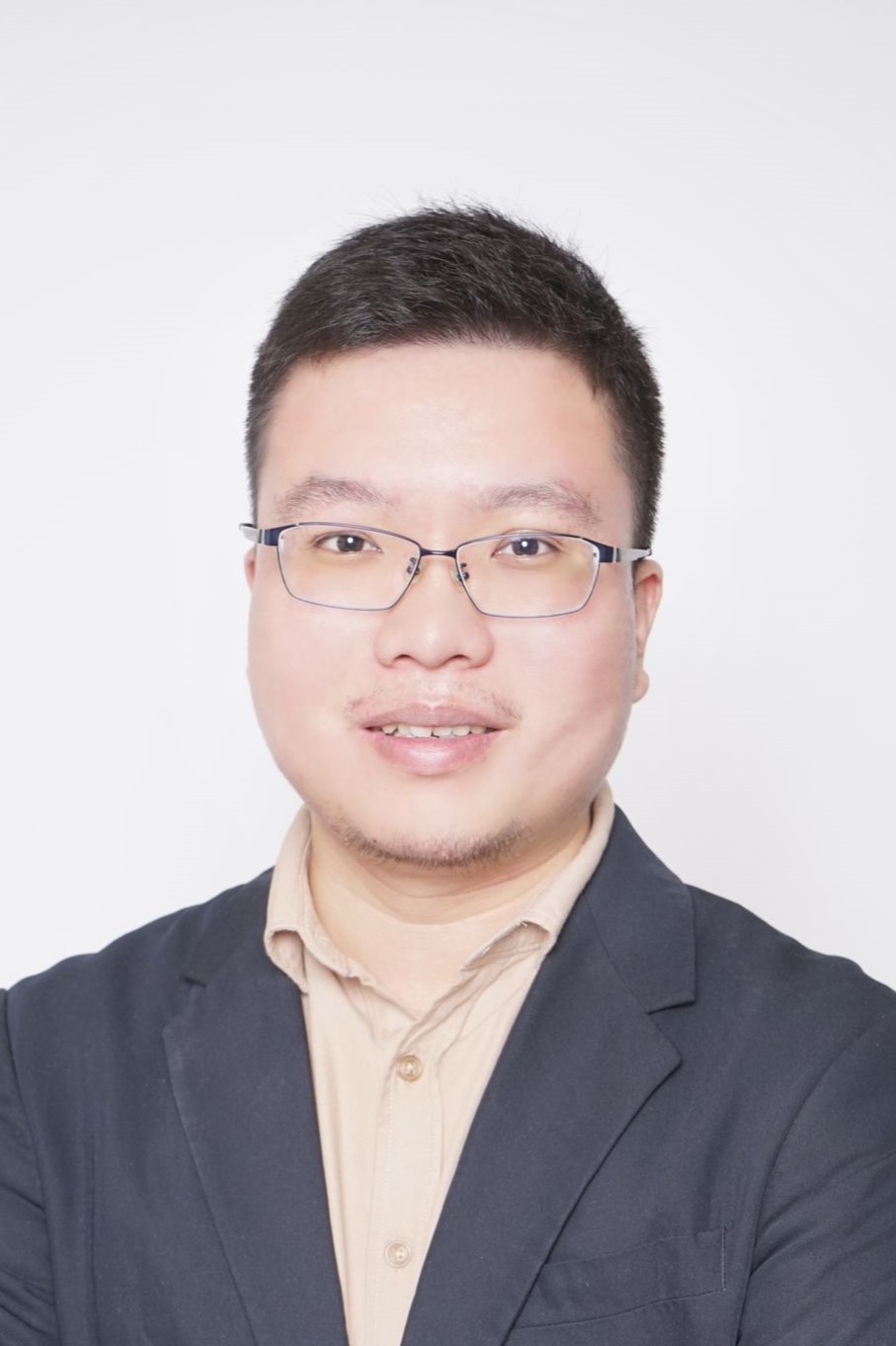}}]{Alva Kosasih}(S’19-M'23)
Alva Kosasih (kosasih@kth.se) is a Postdoctoral researcher at KTH Royal Institute of Technology, Stockholm, Sweden. He holds a Ph.D. in communication engineering from the University of Sydney, Australia (2023). Alva received his B.Eng. and M.Eng. degrees in electrical engineering from Brawijaya University, Indonesia, in 2013 and 2017, respectively, and an M.S. degree in communication engineering from the National Sun Yat-sen University, Taiwan, in 2017. His research interests include signal processing for extra large-scale MIMO, near-field communications, MIMO detection, and machine learning for the physical layer.
\end{IEEEbiography}
\vspace{-8cm}
\begin{IEEEbiography}
[{\includegraphics[width=1in,height=1.25in,clip,keepaspectratio]{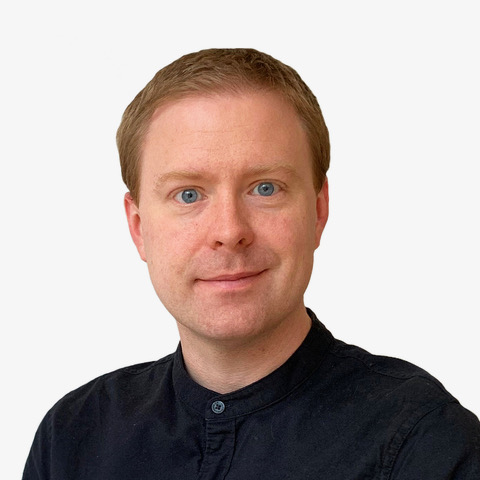}}]{Emil Bj\"ornson}(S'07-M'12-SM'17-F'22)
Emil Bj\"ornson is a Full Professor of Wireless Communication at the KTH Royal Institute of Technology, Sweden. He received an M.S. degree in engineering mathematics from Lund University, Sweden, in 2007, and a Ph.D. degree in telecommunications from KTH in 2011. From 2012 to 2014, he was a post-doc at the Alcatel-Lucent Chair on Flexible Radio, SUPELEC, France. From 2014 to 2021, he held different professor positions at Link\"oping University, Sweden. He was a Visiting Full Professor at KTH in 2020-2021, before obtaining a tenured position in 2022.

He has authored the textbooks \emph{Optimal Resource Allocation in Coordinated Multi-Cell Systems} (2013), \emph{Massive MIMO Networks: Spectral, Energy, and Hardware Efficiency} (2017), \emph{Foundations of User-Centric Cell-Free Massive MIMO} (2021), and \emph{Introduction to Multiple Antenna Communications and Reconfigurable Surfaces} (2024). He is dedicated to reproducible research and has made a large amount of simulation code publicly available. He performs research on MIMO communications, radio resource allocation, machine learning for communications, and energy efficiency. He has been an associate editor for multiple IEEE transactions and magazines.

He has performed MIMO research for 18 years, his papers have received more than 28000 citations, and he has filed more than 20 patent applications. He is a host of the podcast Wireless Future and has a popular YouTube channel with the same name. He is an IEEE Fellow, a Wallenberg Academy Fellow, a Digital Futures Fellow, and an SSF Future Research Leader. He has received the 2014 Outstanding Young Researcher Award from IEEE ComSoc EMEA, the 2015 Ingvar Carlsson Award, the 2016 Best Ph.D. Award from EURASIP, the 2018 and 2022 IEEE Marconi Prize Paper Awards in Wireless Communications, the 2023 IEEE ComSoc Outstanding Paper Award, the 2019 EURASIP Early Career Award, the 2019 IEEE Communications Society Fred W. Ellersick Prize, the 2019 IEEE Signal Processing Magazine Best Column Award, the 2020 Pierre-Simon Laplace Early Career Technical Achievement Award, the 2020 CTTC Early Achievement Award, and the 2021 IEEE ComSoc RCC Early Achievement Award. He also co-authored papers that received Best Paper Awards at the conferences, including WCSP 2009, the IEEE CAMSAP 2011, the IEEE SAM 2014, the IEEE WCNC 2014, the IEEE ICC 2015, and WCSP 2017.
\end{IEEEbiography}

\end{document}